\newenvironment{shortcases}
    {\left\lbrace\begin{array}{@{}l@{\quad}l@{}}}
    {\end{array}\right.}
    \newtheorem{theorem}{Theorem}
    \newtheorem{lemma}[theorem]{Lemma}
    \newtheorem{proposition}[theorem]{Proposition}
    \theoremstyle{definition}
    \newtheorem{definition}[theorem]{Definition}
    \newtheorem{example}[theorem]{Example}
\tikzset{node distance=1cm and 0.25cm}
\tikzset{dice/.style={black, dash pattern=on 2pt off 1pt, every mark/.style={solid}, mark=square, mark size=1.5pt}}
\tikzset{perpl/.style={blue!40!red, mark=o, mark size=1.5pt}}
\tikzset{var/.style={draw,circle,fill=white,inner sep=1.5pt,minimum size=8pt}}
\tikzset{ext/.style={var,fill=black,text=white}}
\tikzset{fac/.style={draw,fill=white,rectangle}}
\tikzset{tent/.style={font={\scriptsize},auto}}
\tikzset{subgraph/.style={draw,cloud}}
\tikzset{plate/.style={draw,rectangle,rounded corners,prefix after command={\pgfextra{\tikzset{every label/.style={font={\footnotesize}}}}}}}
\tikzset{every picture/.style={baseline=-2.5pt}}
\tikzset{every label/.style={font={\footnotesize}}}
\newcommand\extinline[1]{\tikz[baseline]\node[ext,inner sep=1pt,anchor=base]{#1};}
\newcommand{\codefont}{\tt}
\newcommand{\keywordfont}{\tt\bfseries}
\newcommand\codelambda{$\lambda$}
\newcommand{\asst}{\xi}
\newcommand{\type}{\mathit{type}}
\newcommand{\att}{\mathit{att}}
\newcommand{\ext}{\mathit{ext}}
\newcommand{\fac}{\mathit{fac}}
\newcommand{\dom}{\mathit{dom}}
\newcommand{\vlab}{\mathit{vlab}}
\newcommand{\elab}{\mathit{elab}}
\newcommand{\alt}{\mathrel{\big|}}
\newcommand{\casealt}{\mathbin{\codefont{|}}}
\newcommand{\casearrow}{\rightarrow}
\newcommand{\bnf}{\mathrel{::=}}
\newcommand{\bnfalt}{\mathrel{\makebox[\widthof{$\bnf$}][r]{$\alt$}}}
\newcommand{\longalt}{\mathrel{\makebox[\widthof{$\longrightarrow$}]{$\Bigm|$}}}
\newcommand{\nt}[1]{\mathsf{#1}} 
\newcommand{\kw}[1]{\mathrel{\text{\keywordfont #1}}}
\newcommand{\kwe}[1]{\text{\keywordfont #1}}
\newcommand{\id}[1]{\text{\codefont #1}}
\newcommand{\seq}{\mathrel;}
\newcommand{\fv}[1]{\ensuremath{\text{FV}(#1)}}
\newcommand{\domain}[1]{\ensuremath{\mathop{\text{dom}} #1}}
\newcommand{\val}{\mathsf{v}}
\newcommand{\llp}{\multimap}
\newcommand{\withtermleft}{\langle}
\newcommand{\withtermright}{\rangle}
\newcommand{\withterm}[1]{\withtermleft #1 \withtermright}
\newcommand{\withdenotation}[2]{{#1}:{#2}}
\newcommand{\timesterm}[1]{(#1)}
\newcommand{\unittype}{\kwe{Unit}}
\newcommand{\unitterm}{\mathinner{()}}
\newcommand{\unitdenotation}{\mathinner{()}}
\newcommand{\ctor}[1]{\mathop{\kern0pt #1}} 
\newcommand{\ctorset}{\mathcal{C}}
\newcommand{\plusdenotation}[2]{{\ctor{#1}{#2}}}
\newcommand{\arrowdenotation}[2]{{#1}\mapsto{#2}}
\newcommand{\dirac}[1]{\bigl\{\bigl(1,#1\bigr)\bigr\}}
\newcommand{\denote}[1]{\llbracket #1 \rrbracket}
\newcommand{\env}[1]{\tilde{#1}} 
\DeclareRobustCommand{\crefnosort}[1]{\begingroup\@cref@sortfalse\cref{#1}\endgroup} 
\newif\iftwocolumn
\newcommand{\eqby}[1]{\stackrel{\smash[t]{\text{\makebox[1.5em][c]{\ifx\relax#1\relax\else\labelcref{#1}\fi}}}}{=}}
\title{Exact Recursive Probabilistic Programming}
\author{David Chiang}
\email{dchiang@nd.edu}
\author{Colin McDonald}
\email{cmcdona8@nd.edu}
\affiliation{
  \department{Department of Computer Science and Engineering}
  \institution{University of Notre Dame}
  \city{Notre Dame}
  \state{IN}
  \country{USA}
}
\author{Chung-chieh Shan}
\email{ccshan@indiana.edu}
\affiliation{
  \department{Department of Computer Science}
  \institution{Indiana University}
  \city{Bloomington}
  \state{IN}
  \country{USA}
}
\keywords{probabilistic programming, recursive types, linear types}
\begin{document}

\begin{abstract}
Recursive calls over recursive data are useful for generating probability distributions, and probabilistic programming allows computations over these distributions to be expressed in a modular and intuitive way. Exact inference is also useful, but unfortunately, existing probabilistic programming languages do not perform exact inference on recursive calls over recursive data, forcing programmers to code many applications manually. We introduce a probabilistic language in which a wide variety of recursion can be expressed naturally, and inference carried out exactly. For instance, probabilistic pushdown automata and their generalizations are easy to express, and polynomial-time parsing algorithms for them are derived automatically. We eliminate recursive data types using program transformations related to defunctionalization and refunctionalization. These transformations are assured correct by a linear type system, and a successful choice of transformations, if there is one, is guaranteed to be found by a greedy algorithm.
\end{abstract}

\titlenote{This material is based upon work supported by the National Science Foundation under Award Nos.~CCF-2019266 and CCF-2019291\@.}

\maketitle

\section{Introduction}

\begin{figure}
  \resizebox{\textwidth}{!}{%
    \hyphenpenalty=1000
    \begin{tikzpicture}[
        node distance=1.8cm,
        every node/.style={draw,rectangle,align=center,text width=2.23cm,minimum height=1.9cm,inner xsep=0.09cm},
        every edge/.style={draw,-latex,transform canvas={yshift=-6pt},auto=left,every node/.append style={draw=none,text width=1.7cm,minimum height=0cm,font={\small}}}
      ]
    \node(n1) { Program with recursive types and first-class functions };
    \node(n2)[right=of n1] { Program with finite types and first-class functions };
    \node(n3)[right=of n2] { Monotone system of polynomial \smash{equations} };
    \node(n4)[right=of n3] { Exact distribution over values };
  
    \draw (n1) edge node{De- and refunctionalization} node[auto=right]{\cref{sec:recursive}} (n2);
    \draw (n2) edge node{Denotational semantics} node[auto=right]{\cref{sec:denotational}} (n3);
    \draw (n3) edge node{Solve for least fixed \smash{point}} node[auto=right]{\cref{sec:mspe}} (n4);
    \end{tikzpicture}%
  }
\caption{How a PERPL program is compiled and evaluated.
    By ``recursive type'', we mean types such as \lstinline{String} that
    are defined recursively and inhabited by an infinite number of values.
    By ``finite type'', we mean types such as \lstinline{Bool} that
    are inhabited by a finite number of values.}
\label{fig:big_picture}
\end{figure}

Probability models in natural language processing (NLP\@), as well as computational biology and other fields, often involve recursive computations on recursive data structures.
Sequences of words (or other basic units) have been modeled, for example, using higher-order Markov chains \citep{chen+goodman:1999}, hidden Markov models \citep{rabiner-speech}, finite automata and transducers \citep{mohri-1997-finite}, and chain-structured conditional random fields (CRFs) \citep{lafferty+al:2001}.
Neural models that use CRFs \citep{huang+al:2015} and connectionist temporal classification \citep{graves+al:2006}, which are just finite transducers, are at or near the state of the art in a number of NLP tasks.
Trees, used for representing syntactic structures and sometimes other linguistic structures, have been modeled using probabilistic context-free grammars (PCFGs) \citep{booth+thompson:1973}, tree automata and transducers, or tree-structured CRFs, again performing at or near the state of the art in constituency parsing \citep{stern-etal-2017-minimal}.

In order to support applications such as optimization of parameters by gradient-based methods, it~is important that algorithms for these models can efficiently produce exact results.
Typically they use dynamic programming, but more generally, they can be thought of as solving systems of linear or polynomial equations. Traditionally, these algorithms are coded by hand, which is tedious and error-prone. Worse, small changes to the model can often require nontrivial changes to the algorithm.
For this reason, various toolkits have been developed, for example, for finite-state machines \citep{riley-etal-2009-openfst} and tree transducers \citep{may+knight:2006}. These provide reliable implementations of algorithms, but at the expense of flexibility: small changes to the model can often take it outside the class of model for which the toolkit was designed.

Probabilistic programming languages (PPLs) promise to avoid both manual coding and inflexible toolkits.
The promise begins with an intuitive way to express a distribution, namely by describing a generative process:
a run of a probabilistic program appears to generate a random outcome by making
choices and scoring branches, but those probabilistic side
effects are just notation for a distribution whose
associated quantities---such as total weight or expected value---need to be
computed.
A~distribution expressed in a PPL can be changed easily by editing the program, without needing to manually redesign inference algorithms.

In PPLs, recursive models such as those in NLP are naturally represented as recursive calls and data.
For example, to find the probability of a string under a PCFG\@, it would be natural to write code that recursively generates random trees and sums the probabilities of trees yielding that string.
Indeed, PCFGs served as the central example in the first PPL paper \citep{koller-effective}.
However, the example ultimately did not work out, and efficient inference on these representations remains a longstanding challenge.
The general inference methods offered by PPLs today, often based on sampling, are not suitable in many applications because they take too long, don't compute gradients, or exhibit too much variance.
Some PPLs do support exact inference---beginning with the ``stochastic Lisp'' of \citet{koller-effective}, and more recently with languages like IBAL \citep{pfeffer:2005}, Fun \citep{borgstrom+:2011}, FSPN \citep{stuhlmuller-dynamic}, \textsc{BernoulliProb} \citep{claret+:2013}, PSI \citep{gehr+:2016,gehr+:2020}, Hakaru \citep{walia-efficient}, Dice \citep{holtzen2020dice}, SPPL \citep{saad+:2021}, and ProbZelus \citep{atkinson-semi-symbolic}---but they impose severe limitations on recursion.

In this paper, we introduce PERPL (Probabilistic Exact Recursive Programming Language).
The~denotation of a PERPL program is given by a system of polynomial equations, which can be solved efficiently using numerical methods.
Unlike many general-purpose PPLs, PERPL performs \emph{exact} inference, in the sense that even if it solves the equations iteratively (by Newton's method), any desired level of accuracy is guaranteed by a known number of iterations \citep{stewart+:2015}.
Moreover, it~can differentiate the equations and solve for the derivatives, permitting gradient-based optimization.
And unlike other PPLs that do support exact inference, PERPL can express first-class, higher-order functions, unbounded recursive calls, and some recursive data structures.
For example, a PCFG parser written in PERPL appears to generate (infinitely many) trees and sum the probabilities of those (exponentially many) trees that yield a given string; yet it compiles to a cubic-sized system of equations whose solution is equivalent to the CYK algorithm. We also show programs using stacks or stacks-of-stacks for which PERPL automatically derives polynomial-time algorithms.
We demonstrate that these asymptotic speedups result in orders-of-magnitude differences in a benchmark.

The key to our approach is a linear type system \citep{girard-linear,walker:2005}, which enables first-class functions to be used efficiently and recursive types to be eliminated correctly using defunctionalization \citep{reynolds:1972,danvy-defunctionalization} and refunctionalization \citep{danvy-refunctionalization}, even in the presence of probability effects.
\Cref{fig:big_picture} depicts our pipeline of steps that turns
a naturally expressed probabilistic program into an exact distribution over return values.

Our contributions are
\begin{enumerate}
    \item a nonstandard denotational semantics for a PPL\@, in which a $\lambda$-abstraction from type~$\tau_1$ to type~$\tau_2$ denotes not a distribution over a set of functions, which has cardinality $|\tau_2|^{|\tau_1|}$, but a distribution over a set of pairs, which has cardinality $|\tau_1| \cdot |\tau_2|$ (\cref{sec:denotational});
    \item using defunctionalization and refunctionalization to eliminate recursive types (\cref{sec:recursive});
    \item showing that a linear type system ensures that both of the above are correct, in spite of
          probability effects (\cref{thm:preserve,thm:d-correct});
    \item compiling programs with unbounded recursion into systems of equations and solving them; in particular, loops can be evaluated exactly and directly, not iteratively (\cref{sec:solve_mspe});
    \item compiling natural probabilistic models into polynomial-time parsers for context-free grammars (\cref{sec:parsing_examples}), pushdown automata (\cref{eg:pda}), and their generalization to tree-adjoining grammars and embedded pushdown automata (\cref{sec:epda});
    \item translating programs with affinely used functions to programs with only linearly used functions (\cref{sec:affine});
    \item an open-source implementation based on a compositional, semantics-preserving translation from PERPL to \emph{factor graph grammars} \citep{chiang+riley:2020} (\cref{sec:translation}).
\end{enumerate}

\section{Motivation}
\label{sec:motivation}

In this section, we present a sequence of examples with increasing demands on expressivity. First, a~single coin flip (\cref{ex:coin}); second, unbounded loops and recursive calls (\cref{ex:loop,ex:inconsistent}); third, recursive data types (\cref{ex:parity-producer-consumer}) and their elimination (\cref{ex:parity-producer-consumer_defunc,ex:parity-tree-refun}). The last of these will also demonstrate the need for (linearly used) first-class functions.

\begin{example}[Probability] \label{ex:coin}
  PERPL\@, like other PPLs, has a mechanism for nondeterministically sampling from probability distributions. To sample from a Bernoulli distribution, we can write:
\begin{lstlisting}
define flip : Bool =
  amb (factor $p$ in true) (factor $q$ in false)
flip  
\end{lstlisting}
where $p$ and $q$ are (metavariables for literal) nonnegative real weights. The distribution \lstinline{flip} returns \lstinline{true} with weight $p$ and \lstinline{false} with weight $q$. To get a Bernoulli distribution, we would require $p+q=1$, but in general, weights do not have to sum to one. Then the final expression \lstinline{flip} calls the distribution, so the value of the whole program is \lstinline{true} with weight $p$ and \lstinline{false} with weight $q$.
\end{example}

\begin{example}[Loops] \label{ex:loop}
To simulate a fair coin using an unfair coin, flip the unfair coin twice. If the two flips disagree, take the first flip; otherwise, repeat \citep{von-neumann-various}.
We can express this unbounded process as:
\begin{lstlisting}
define fair : Bool =
  let x = flip in let y = flip in if x = y then fair else x
fair
\end{lstlisting}
PERPL turns this program into the linear equations
$t = pq + (p^2 + q^2) t$ and
$f = qp + (p^2 + q^2) f$,
then solves for $t$ and $f$. If $p+q=1$, then the answer is $t=f=\tfrac12$.
\end{example}

\begin{example}[Recursive calls] \label{ex:inconsistent}
The PCFG \citep{booth+thompson:1973}
\begin{align}
\label{e:inconsistent-cfg}
    S &\xrightarrow{p} SS &S &\xrightarrow{q} a
\end{align}
defines a distribution over trees, under which the total probability of all \emph{finite} trees is the least nonnegative solution of
\begin{equation}
    \label{e:inconsistent-equation}
    z = p z^2 + q \text.
\end{equation}
If $p+q=1$, then $z = \min\bigl(1, \frac{1-p}{p}\bigr)$.
This probability is computed by the following PERPL program.
\begin{lstlisting}[mathescape=true]
define gen : Unit =
  if flip then let () = gen in let () = gen in ()    -- $S \rightarrow S S$
          else ()                                    -- $S \rightarrow a$
gen
\end{lstlisting}
Although this program always returns \lstinline{()}, the weight associated with \lstinline{()} is the desired probability~$z$.

Notably, this program makes recursive calls of unbounded depth. The denotational semantics of this program will turn out to be the equation~\eqref{e:inconsistent-equation}, which can be solved using standard methods. (In~this case, the quadratic formula gives a closed-form solution; in general, we might need to resort to iterative methods like Newton's method, which is guaranteed to converge to the correct answer.)
\end{example}

\begin{example}[Recursive data] \label{ex:parity-producer-consumer}
In CFG parsing, we want to know the total weight of all derivations of a CFG that yield a given string. 
The following program defines a type \lstinline{String} for strings over the alphabet $\{a\}$, a function \lstinline{gen} that generates strings from the CFG \eqref{e:inconsistent-cfg}, and a function \lstinline{equal} that tests whether the generated string is the given one. The program returns a distribution over Booleans: the weight of \lstinline{true} is the total weight of all CFG derivations that yield the string $aaa$, and the weight of \lstinline{false} is the total weight of all other (finite) CFG derivations.
\begin{lstlisting}
data Nonterminal = S
data Terminal = A
data String = Nil | Cons Terminal String

define gen (lhs: Nonterminal) (acc: String) : String =
  case lhs of S => if flip then gen S (gen S acc) else Cons A acc

define equal (xs: String) (ys: String) : Bool =
  case xs of
    Nil => case ys of Nil => true | Cons y _ => false
    Cons x xs => case ys of Nil => false | Cons y ys => x = y and equal xs ys

equal (gen S Nil) (Cons A (Cons A (Cons A Nil)))
\end{lstlisting}

Processing pipelines like this are common, and writing them in a PPL enables intuitive expression and modular reuse. 
In particular, Bayesian inference is easy to express.
For instance, to predict the next word conditioned on a given prefix, we can leave the \lstinline{gen} function as~is,
change \lstinline{equal} to
\begin{lstlisting}
define next_word (xs: String) (ys: String) : Terminal =
  case xs of
    Nil => fail
    Cons x xs => case ys of Nil => x
                            Cons y ys => if x = y then next_word xs ys else fail
\end{lstlisting}
where \lstinline{fail} denotes the zero distribution, and change the
last line of code to
\begin{lstlisting}
next_word (gen S Nil) (Cons A (Cons A (Cons A Nil)))
\end{lstlisting}
\end{example}
Because the \lstinline{String} type is recursive and infinite, it~is not trivial for PERPL to convert a program like \cref{ex:parity-producer-consumer} to a finite system of equations.
The rest of this section shows how PERPL manages to compile this program to efficient exact inference.
One way to proceed is to apply a whole-program transformation that replaces the recursive
type by a nonrecursive data structure that represents all the places in the code
where it is constructed. This is closely related to \emph{defunctionalization}
\citep{reynolds:1972,danvy-defunctionalization}.

\begin{example}[Defunctionalization]
\label{ex:parity-producer-consumer_defunc}
In \cref{ex:parity-producer-consumer}, there were two uses of \lstinline{String}: for the strings generated by \lstinline{gen}, and for the input string $aaa$ to be parsed. These can be automatically distinguished. Below, we rename them to \lstinline{GenString} and \lstinline{InputString}, respectively. Defunctionalization changes the latter into a nonrecursive type.
\begin{lstlisting}
data GenString = GenNil | GenCons Terminal GenString
data InputString = InputNil | InputCons Terminal Position
data Position = 0 | 1 | 2 | 3

define gen (lhs: Nonterminal) (acc: GenString) : GenString =
  case lhs of S => if flip then gen S (gen S acc) else GenCons A acc

define shift (ys: Position) : InputString =
  case ys of 0 => InputCons A 1 | 1 => InputCons A 2
             2 => InputCons A 3 | 3 => InputNil

define equal (xs: GenString) (ys: Position) : Bool =
  case xs of
    GenNil => case shift ys of InputNil => true | InputCons y _ => false
    GenCons x xs => case shift ys of InputNil => false
                                     InputCons y ys => x = y and equal xs ys

equal (gen S GenNil) 0
\end{lstlisting}
We introduced a new type \lstinline{Position} with four values corresponding to the four places in the original program where an \lstinline{InputString} was constructed, all in the last line. A \lstinline{Position} can be thought of as a potential \lstinline{InputString}, or a tail of the input string. It~is converted into an actual \lstinline{InputString} by the function \lstinline{shift}.

Due to this transformation, neither \lstinline{Position} nor \lstinline{InputString} is recursive, although \lstinline{GenString} still is.
If we read these programs as
call-by-value random generators, then the transformation has interleaved
the producer and consumer of the input string and deforested it away.
\end{example}

Our example still uses the recursive type \lstinline{GenString}. We can get rid of it using
\emph{refunctionalization} \citep{danvy-defunctionalization,danvy-refunctionalization}, a~whole-program
transformation that replaces the recursive type by
the computation that consumes it.

\begin{example}[Refunctionalization] \label{ex:parity-tree-refun}
\Cref{ex:parity-producer-consumer_defunc} only consumes a \lstinline{GenString} in one place: the
\lstinline{case xs} expression in \lstinline{equal}. Refunctionalization
introduces functions \lstinline{gen_nil} and \lstinline{gen_cons} whose bodies are the branches of this \lstinline{case} expression, and replaces the constructors of \lstinline{GenString} with these functions.
\begin{lstlisting}
define gen_nil : Position -> Bool =
  \\ys: Position. case shift ys of InputNil => true | InputCons y _ => false
define gen_cons (x: Terminal) (xs: Position -> Bool) : Position -> Bool =
  \\ys: Position. case shift ys of InputNil => false
                                  InputCons y ys => x = y and xs ys

define gen (lhs: Nonterminal) (acc: Position -> Bool) : Position -> Bool =
  case lhs of S => if flip then gen S (gen S acc) else gen_cons A acc

(gen S gen_nil) 0
\end{lstlisting}
All recursive types are gone now. However, refunctionalization has changed values formerly of type \lstinline{GenString} to type \lstinline{Position -> Bool}, necessitating the use of first-class functions.

The denotational semantics of this program will be a system of equations, and solving these equations is equivalent to the CYK algorithm for CFG parsing. The parser can be generalized to an arbitrary CFG\@, whether in Chomsky normal form, by extending \lstinline{Nonterminal}, \lstinline{Terminal}, and \lstinline{gen}.
\end{example}

Not all programs of interest are structured as producer--consumer pipelines. In \cref{sec:pda_examples}, we will give examples of programs that use stacks, whose production (by pushes) and consumption (by pops) follow no fixed order.

\section{A Probabilistic Programming Language}
\label{sec:ppl}

We now define PERPL more formally.
It has two main distinctive features:
\begin{itemize}
\item probability effects, which allow a program to nondeterministically pursue multiple branches, each possibly with a different weight or probability, and
\item a linear type system, in which values of certain types must be consumed exactly once.
\end{itemize}

\subsection{Syntax and Typing}
\label{sec:typing}

The syntax rules are shown in \cref{fig:syntax}.

\subsubsection{Probability}

Evaluating an expression may involve making a probabilistic choice.
The expression $\kw{amb} e_1~e_2$ chooses between $e_1$ and~$e_2$, so, for example,
$({\kw{amb} \kwe{true}~\kwe{false}}, {\kw{amb} \kwe{true}~\kwe{false}})$
splits the current branch of computation into four branches, in which the pair evaluates to $(\kwe{true}, \kwe{true})$, $(\kwe{true}, \kwe{false})$, $(\kwe{false}, \kwe{true})$, and $(\kwe{false}, \kwe{false})$, each with weight~1.
The expression $\kw{factor} w \kw{in} e$, where $w$ is a nonnegative number, multiplies the weight of the current branch by~$w$ and evaluates~$e$.
The expression $\kw{fail}$ has weight~0, so it terminates the current branch.

\begin{figure*}
\(\begin{array}{@{}lr@{}>{{}}l@{}}
  \text{Programs} & p &\bnf e \alt {\kw{define} x = e} \seq p \\[3pt]
  \text{Expressions} & e &\bnf x 
  \alt \lambda x:\tau.\,e
  \alt e~e
  \alt {\kw{amb} e~e} \alt \kwe{fail} \alt {\kw{factor} w \kw{in} e} 
  \alt (e, \ldots, e) \alt \withterm{e, \ldots, e} \\[3pt]
  &&\bnfalt {\kw{let} (x, \ldots, x) = e \kw{in} e}
  \alt e.i
  \alt \ctor{c} e
  \alt {\kw{case} e \kw{of} {\ctor{c} x} \casearrow e \casealt \cdots \casealt {\ctor{c} x} \casearrow e} \\[3pt]
  \text{Types} & \tau &\bnf \tau \llp \tau \alt \unittype \alt \tau \otimes \cdots \otimes \tau \alt \tau \with \cdots \with \tau \alt \ctor{c}\tau \oplus \cdots \oplus \ctor{c}\tau \\[3pt]
  \text{Constructors} & \ctor{c} &\in \ctorset \qquad \text{$\ctorset$ is an infinite set of constructors} \\[3pt]
  \text{Weights} & w &\in \mathbb{Q}_{\geq 0} \\[3pt]
  \multicolumn{3}{@{}l@{}}{\makebox[\textwidth][s]{Syntactic sugar\hss
\begin{array}[t]{@{}c@{}}
  {\kw{let} x_1 = e_1 \kw{in} e'} \equiv (\lambda x_1. e')~e_1 \\[3pt]
  \kwe{Bool} \equiv {\ctor{\id{True}} \unittype} \oplus {\ctor{\id{False}} \unittype} \qquad \kwe{true} \equiv {\ctor{\id{True}} \unitterm} \qquad \kwe{false} \equiv {\ctor{\id{False}} \unitterm} \\[3pt]
  {\kw{if} e \kw{then} e'_1 \kw{else} e'_2} \equiv {\kw{case} e \kw{of} {\ctor{\id{True}} u} \casearrow {\kw{let} \unitterm = u \kw{in} e'_1} \casealt {\ctor{\id{False}} u} \casearrow {\kw{let} \unitterm = u \kw{in} e'_2}} \\[3pt]
  e_1 \kw{and} e_2 \equiv {\kw{if} e_1 \kw{then} e_2 \kw{else} \kwe{false}} \qquad
  \kw{not} e \equiv {\kw{if} e \kw{then} \kwe{false} \kw{else} \kwe{true}}
\end{array}}}
\end{array}\)
\caption{Syntax. We use a taller vertical bar ($\alt$) to separate different right-hand sides of a BNF production, and a shorter vertical bar ($\casealt$) to separate the branches of a $\kw{case}$ expression.}
\label{fig:syntax}
\end{figure*}

\subsubsection{Linearity}

We use a linear type system for two reasons, both illustrated in \cref{sec:motivation}.
First, defunctionalization, by changing some computations into data structures, delays them.
For example, the calls to \lstinline{Cons} in \cref{ex:parity-producer-consumer} are delayed to inside \lstinline{shift} in \cref{ex:parity-producer-consumer_defunc}. The fact that \lstinline{ys} is consumed only once ensures that any probabilistic effects in the input string would not be duplicated by the reordering.

Second, refunctionalization wraps computations inside $\lambda$-abstractions (as seen in \cref{ex:parity-tree-refun}), and it~would be prohibitively expensive to let a $\lambda$-abstraction from type~$\tau_1$ to type~$\tau_2$ denote a distribution over $|\tau_2|^{|\tau_1|}$ functions.
But if it is used only once
(as \lstinline{xs} and \lstinline{acc} are in \cref{ex:parity-tree-refun}),
it~can denote a distribution over input--output pairs. Since there are only $|\tau_1|\cdot|\tau_2|$ such pairs, this is far more manageable.

For both of these reasons (detailed below in \crefnosort{sec:correctness_mu_stub,sec:denotational}, respectively), we decree that functions must be used \emph{linearly} \citep{girard-linear,walker:2005}, in that, once introduced, they must be used exactly once.
For example, the following program is not well-typed, because it calls \lstinline{f} twice.
\begin{lstlisting}[belowskip=0pt]
let f = \\x. amb x (not x) in (f true, f true)
\end{lstlisting}

\subsubsection{Algebraic Data Types}
\label{sec:adt}

Because of linearity, there are two kinds of tuples \citep{abramsky-computational}.
In~a \emph{multiplicative} tuple $\timesterm{e_1, \ldots, e_n}$,
of~type $\tau_1\otimes\cdots\otimes\tau_n$, all components are computed regardless of
demand, and they are all consumed together.
In the common case where $n=0$, we write $\unitterm$ for the empty tuple and $\unittype$ for its type.
In~an \emph{additive} tuple $\withterm{e_1,\ldots,e_n}$, of~type $\tau_1\with\cdots\with\tau_n$,
just one component is computed depending on which one the context demands,
and only the demanded one is consumed.
Additive tuples also allow $n=0$, but we don't need to notate that case in this paper.

Disjoint union types $\tau \bnf \ctor{c}\tau \oplus \cdots \oplus \ctor{c}\tau$ and terms $e \bnf \ctor{c} e$ are tagged with \emph{constructors}~$c$ drawn from an infinite set~$\ctorset$.
In our example programs, \kwe{data} type declarations can be understood as introducing aliases for unions of multiplicative tuples.
In \cref{ex:parity-producer-consumer_defunc} for instance, \lstinline{Position} is an alias for $\ctor{\id{0}} \unittype \oplus \ctor{\id{1}} \unittype \oplus \ctor{\id{2}} \unittype \oplus \ctor{\id{3}} \unittype$, inhabited by terms like $\ctor{\id{2}} \unitterm$.
In turn, \lstinline{InputString} is an alias for $\ctor{\id{InputNil}} \unittype \oplus \ctor{\id{InputCons}} \mathinner{(\id{Terminal} \otimes \id{Position})}$, inhabited by terms like $\ctor{\id{InputCons}} \mathinner{(\ctor{\id{A}} \unitterm, \ctor{\id{2}} \unitterm)}$.
Here the constructors are $\id{0}$, $\id{1}$, $\id{2}$, $\id{3}$, $\id{InputNil}$, $\id{InputCons}$, and~$\id{A}$.

\subsubsection{Global Definitions}

The \lstinline{define} keyword introduces a global definition, which can recursively use any global variable.
Because global variables are nonlinear, they allow us to express recursion, as \cref{ex:loop,ex:inconsistent,ex:parity-producer-consumer,ex:parity-producer-consumer_defunc,ex:parity-tree-refun} demonstrated.
Global variables are `call-by-name' in the sense that they are bound to computations rather than values: each use of a global variable evaluates to a fresh copy of its definition's right-hand side.
So the following program is well-typed, because each use of \lstinline{f} creates a new $\lambda$-expression that is used once:
\begin{lstlisting}
define f = \\x. amb x (not x); (f true, f true)
\end{lstlisting}
But the following is not well-typed, because \lstinline{g} is used twice.
\begin{lstlisting}
define f = \\x. amb x (not x); let g = f in (g true, g true)
\end{lstlisting}
And in the following program, \lstinline{b} samples from $\kw{amb} \kwe{true}~\kwe{false}$ each time it is evaluated (in other words, its nondeterminism is ``hot''), so the program has four branches, not two:
\begin{lstlisting}[belowskip=0pt]
define b = amb true false; (b, b)
\end{lstlisting}

\subsubsection{Typing}

\begin{figure*}
\vspace{-3ex}
$\renewcommand{\arraystretch}{3}\begin{array}{@{}l@{}c@{}}
  \text{Variables} & \inferrule*{ }{\Gamma, x:\tau; \cdot \vdash x:\tau} \qquad
                     \inferrule*{ }{\Gamma; x:\tau \vdash x:\tau} \\[-1.5ex]
\text{Functions} & \inferrule{\Gamma; \Delta_0, x_1: \tau_1 \vdash e':\tau'}{\Gamma; \Delta_0 \vdash \lambda x_1 . e' : \tau_1 \llp \tau'} \qquad
  \inferrule{\Gamma; \Delta_0 \vdash e_0 : \tau_1 \llp \tau' \\ \Gamma; \Delta_1 \vdash e_1 : \tau_1}{\Gamma; \Delta_0, \Delta_1 \vdash e_0~e_1 : \tau'} \\
  \text{Distributions} & \inferrule{\Gamma; \Delta \vdash e_1 : \tau \\ \Gamma; \Delta \vdash e_2 : \tau}{\Gamma; \Delta \vdash {\kw{amb} e_1~e_2}: \tau} \qquad
  \inferrule{ }{\Gamma; \Delta \vdash \kwe{fail} : \tau} \qquad
  \inferrule{\Gamma; \Delta \vdash e : \tau }{\Gamma; \Delta \vdash {\kw{factor} w \kw{in} e} : \tau} \\
  \text{Tuples} & \inferrule{\Gamma; \Delta_i \vdash e_i : \tau_i}{\Gamma; \Delta_1, \dotsc, \Delta_n \vdash (e_1, \dotsc, e_n) : \tau_1 \otimes \dotsb \otimes \tau_n} \qquad
  \inferrule{\Gamma; \Delta \vdash e_i : \tau_i}{\Gamma; \Delta \vdash \withterm{e_1, \dotsc, e_n} : \tau_1 \with \dotsb \with \tau_n} \\
  & \hspace{-1em} \inferrule{\Gamma; \Delta \vdash e : \tau_1 \otimes \dotsb \otimes \tau_n \\ \Gamma; \Delta', x_1: \tau_1, \dotsc, x_n: \tau_n \vdash e' : \tau'}{\Gamma; \Delta, \Delta' \vdash {\kw{let} (x_1, \dotsc, x_n) = e \kw{in} e' : \tau'}} \qquad
  \inferrule{\Gamma; \Delta \vdash e : \tau_1 \with \dotsb \with \tau_n}{\Gamma; \Delta \vdash e.i : \tau_i} \\
  \text{Unions} & \hspace{-1.5em} \inferrule{\Gamma; \Delta \vdash e_i : \tau_i}{\Gamma; \Delta \vdash {\ctor{c_i} e_i}: \ctor{c_1}\tau_1 \oplus \dotsb \oplus \ctor{c_n}\tau_n} \quad
  \inferrule{\Gamma; \Delta \vdash e : \ctor{c_1}\tau_1 \oplus \dotsb \oplus \ctor{c_n}\tau_n \\ \Gamma; \Delta', x_i: \tau_i \vdash e'_i: \tau'}{\Gamma; \Delta, \Delta' \vdash {\kw{case} e \kw{of} {\ctor{c_1} x_1} \casearrow e'_1 \casealt \dotsb \casealt {\ctor{c_n} x_n} \casearrow e'_n} : \tau'}
\\
  \text{Programs} &
  \inferrule{\Gamma; \cdot \vdash e : \tau \\ \Gamma \vdash p : \Gamma'; \tau'}{\Gamma \vdash (\kw{define} x = e \seq p) : (x: \tau, \Gamma'); \tau'} \qquad
  \inferrule{\Gamma; \cdot \vdash e:\tau}{\Gamma \vdash e:\cdot;\tau}
  \qquad \qquad \inferrule{\Gamma \vdash p : \Gamma ; \tau}{p : \tau}
\end{array}$
\caption{Typing rules. We write $\cdot$ for an empty typing context.}
\label{fig:typing}
\end{figure*}

The typing rules are shown in \cref{fig:typing}.
Typing judgements for expressions $e$ are of the form $\Gamma; \Delta \vdash e : \tau$, where $\Gamma$ and $\Delta$ are typing contexts for nonlinear and linear bindings, respectively.
Nonlinear (or \emph{intuitionistic}) bindings can be used any number of times, whereas linear bindings must each be used exactly once \citep{barber-dual}.
In these typing rules, global definitions always populate the nonlinear context, whereas local bindings only enter the linear context.
It~is useful in practical applications (including the examples in \cref{sec:motivation}) to relax the linearity requirement to allow linear bindings to be used affinely (zero times or once),
or to allow local nonlinear bindings under certain circumstances. Please see \cref{sec:relaxing-linearity} for more details.

An $\kw{amb}$ expression can be used in any linear context, and both branches are type-checked in that linear context. Since $\kwe{fail}$ is like $\kw{amb}$ but with 0 branches, it can also be used in any linear context.

The difference between multiplicative and additive tuples is reflected in their distinct typing rules.
Whereas the components of an additive tuple $\withterm{e_1,\ldots,e_n}$ are all type-checked in the same linear context~$\Delta$, the components of a multiplicative tuple $\timesterm{e_1, \ldots, e_n}$ are type-checked by partitioning the linear context $\Delta_1,\dotsc,\Delta_n$.
Thus, the empty tuple~$\unitterm$ requires the empty linear context.

For a program to type check under the global context~$\Gamma$, the $\kw{define}$s of the program must provide global variables with types exactly as promised in~$\Gamma$. To~enforce this consistency, the judgement for a program~$p$ takes the form $\Gamma \vdash p : \Gamma' ; \tau$ to track the assumed context~$\Gamma$ and the provided context~$\Gamma'$ separately, and the judgement for a complete program~$p:\tau$ requires $\Gamma$ and~$\Gamma'$ to be identical.

Next, we give a standard operational semantics and a slightly nonstandard denotational semantics for PERPL\@, then prove that they agree thanks to the linear use of functions and additive tuples.

\subsection{Operational Semantics}

Our operational semantics is defined by a one-step reduction judgement $\gamma \vdash e \Longrightarrow E'$, shown in Figure~\ref{fig:operational}.
The expression being reduced is~$e$.
The reduction result $E'$ is a distribution over expressions, written as a set of weight-expression pairs; this is used to reduce $\kw{amb}$, $\kw{fail}$, and $\kw{factor}$.
The global environment $\gamma = (x_1=e_1,\dotsc)$ maps global names to their definitions; this is used to reduce global names in~\eqref{eq:reduction_global}.
Our evaluation contexts allow evaluation under~$\lambda$ and on both sides of application, so the order of probabilistic choices is left unspecified; this is benign, as denotations are preserved regardless of the order (\cref{thm:preserve}).

If $E$ is a distribution over expressions, then we can reduce $E$ by reducing any element of~$E$:
\begin{definition}
\label{def:red_distr}
If $E = \{(w,e)\} + E_0$ and $\gamma \vdash e \Longrightarrow E'$, then $\gamma \vdash E \Longrightarrow w \cdot E' + E_0$ (where $\cdot$ and $+$ denote scaling and addition of distributions).
We also write $\Longrightarrow^*$ for a sequence of zero or more reductions.
\end{definition}

For example, starting with the expression $\kw{factor} w_1 \kw{in} {\kw{factor} w_2 \kw{in} e}$,
we can apply \cref{def:red_distr} to \eqref{eq:reduction_factor} twice to get
\begin{equation}
    \gamma \enspace\vdash\enspace
    \bigl\{\bigl(1, \kw{factor} w_1 \kw{in} {\kw{factor} w_2 \kw{in} e} \bigr)\bigr\}
    \enspace\Longrightarrow\enspace
    \bigl\{\bigl(w_1, \kw{factor} w_2 \kw{in} e \bigr)\bigr\}
    \enspace\Longrightarrow\enspace
    \bigl\{\bigl(w_1 w_2, e \bigr)\bigr\}
    \text.
\end{equation}

\begin{figure*}
\begin{minipage}{\linewidth}
\begin{subequations}    
\begin{alignat}{2}
    \gamma &\vdash{}&
    x
    & \Longrightarrow \dirac{ e }
    \quad \text{if $(x=e)\in\gamma$}
    \label{eq:reduction_global}
\\
    \gamma &\vdash{}&
    (\lambda x_1.e')~v_1
    & \Longrightarrow \dirac{ e'\{x_1:=v_1\} }
    \label{eq:reduction_function}
\\
    \gamma &\vdash{}&
    {\kw{amb} e_1~e_2}
    & \Longrightarrow \bigl\{\bigl(1,e_1\bigr), \bigl(1,e_2\bigr)\bigr\}
\\
    \gamma &\vdash{}&
    {\kw{fail}}
    & \Longrightarrow \emptyset
\\
    \gamma &\vdash{}&
    {\kw{factor} w \kw{in} e}
    & \Longrightarrow \bigl\{\bigl(w, e\bigr)\bigr\}
    \label{eq:reduction_factor}
\\
    \gamma &\vdash{}&
    {\kw{let} (x_1, \dots, x_n) = (v_1, \dots, v_n) \kw{in} e'}
    & \Longrightarrow \dirac{ e'\{x_1:=v_1, \dots, x_n:=v_n\} }
    \label{eq:reduction_product}
\\
    \gamma &\vdash{}&
    {\withterm{e_1, \dots, e_n}.i}
    & \Longrightarrow \dirac{ e_i }
\\
    \gamma &\vdash{}&
    {\kw{case} {\ctor{c_i} v} \kw{of} {\ctor{c_1} x_1} \casearrow e'_1 \casealt \dots \casealt {\ctor{c_n} x_n} \casearrow e'_n}
    & \Longrightarrow \dirac{ e'_i\{x_i:=v\} }
    \label{eq:reduction_sum}
\end{alignat}
\begin{equation}
    \inferrule{\gamma \vdash e \Longrightarrow E'}
              {\gamma \vdash C[e] \Longrightarrow \left\{(w, C[e']) \mid (w,e') \in E'\right\}} \label{eq:reduction_congruence}
\end{equation}
\end{subequations}

\begin{align*}
&\text{Syntactic values} &  v &\bnf \lambda x. e \alt (v, \dots, v) \alt {\ctor{c} v} \alt \withterm{e, \dots, e} \\
&\text{Evaluation contexts} & C &\bnf [\cdot] \alt \lambda x:\tau.\, C \alt C~e \alt e~C \alt {\kw{factor} w \kw{in} C} \\
&&    &\bnfalt (e, \dots, e, C, e, \dots, e) \alt {\kw{let} (x, \dots, x) = C \kw{in} e}\alt {\kw{let} (x, \dots, x) = e \kw{in} C} \\
&&    &\bnfalt C.i \alt {\ctor{c} C} \alt {\kw{case} C \kw{of} {\ctor{c_1} x} \casearrow e \casealt \cdots \casealt {\ctor{c_n} x} \casearrow e}
\end{align*}
\end{minipage}

\caption{Operational semantics.}
\label{fig:operational}
\end{figure*}


We conclude this section with a proof sketch of type soundness.

\begin{lemma}[Linear substitution preserves typing]
    \label{thm:subst_preserves_typing_linear}%
          Suppose\/ \(\Gamma; \Delta_0, x_1\colon\tau_1 \vdash e' : \tau'\)
          and\/ \(\Gamma; \Delta_1 \vdash e_1 : \tau_1\).
          Then \(\Gamma; \Delta_0,\Delta_1 \vdash e'\{x_1:=e_1\} : \tau'\).
\end{lemma}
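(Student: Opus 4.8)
The plan is to prove the lemma by structural induction on the derivation of $\Gamma; \Delta_0, x_1\colon\tau_1 \vdash e' : \tau'$, keeping the second hypothesis $\Gamma; \Delta_1 \vdash e_1 : \tau_1$ and the expression $e_1$ fixed throughout. The classical context $\Gamma$ never changes, so all the bookkeeping concerns the linear context. I would adopt a freshness (Barendregt) convention so that bound variables in $e'$ avoid $\fv{e_1}$, letting the $\lambda$- and $\kw{let}$-cases go through without explicit renaming. \emph{Base case:} if $e'$ is a variable, only the linear variable rule can apply, since the classical variable rule demands an empty linear context, which is incompatible with the presence of $x_1\colon\tau_1$. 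Hence $e' = x_1$, the context $\Delta_0$ is empty, and $\tau' = \tau_1$; because $x_1\{x_1:=e_1\} = e_1$, the goal is exactly the second hypothesis.

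\emph{Multiplicative cases.} For application $e_a~e_b$, the multiplicative tuple, the $\kw{let}$-elimination, and the scrutinee position of a $\kw{case}$, the typing rule partitions the linear context into blocks. Because $x_1$ is used linearly, it occurs in exactly one block; I route the merged context $\Delta_1$ into that same block, apply the induction hypothesis to the one subderivation that mentions $x_1$, and leave the others untouched (substitution there is vacuous, as $x_1$ does not occur free in them). Reassembling with the original rule yields a derivation in $\Gamma; \Delta_0, \Delta_1$.

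\emph{Additive cases.} For $\kw{amb}$, the additive tuple $\withterm{\cdots}$, and the branches of a $\kw{case}$, every subexpression is typed in the \emph{same} linear context $\Delta_0, x_1\colon\tau_1$. Here I apply the induction hypothesis to each subexpression, so the substitution propagates into every branch and each is retyped in $\Gamma; \Delta_0, \Delta_1$; reapplying the rule concludes. Note that $e_1$ may now appear syntactically in several branches, which is sound precisely because these additive rules share rather than split their linear resources. The degenerate instance is $\kwe{fail}$: since $\kwe{fail}\{x_1:=e_1\} = \kwe{fail}$, it retypes directly in the enlarged context. The remaining congruence cases ($\lambda$, $\kw{factor}$, projection $e.i$, and injection $\inj{i}$) are direct appeals to the induction hypothesis.

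\emph{Main obstacle.} The delicate point is the contrast between the multiplicative and additive rules: I must confirm that in the multiplicative rules $x_1$ is routed to the \emph{unique} sub-context, so that $\Delta_1$ is merged in exactly one place and nothing is duplicated, while in the additive rules copying $e_1$ into every branch is legitimate because those branches share one linear context. Getting this split-versus-share distinction right, together with the vacuous $\kwe{fail}$ case where a linearly bound variable need not actually be consumed, is where the linear discipline does the real work.
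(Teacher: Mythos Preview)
Your proposal is correct and follows exactly the approach the paper takes: induction on the typing derivation of $e'$. The paper's proof is in fact just that one line, so your detailed case analysis (multiplicative splitting versus additive sharing, the variable base case forced to be $x_1$ by the linear rule, the vacuous $\kwe{fail}$ case) simply fills in what the paper leaves implicit.
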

\begin{proof}
  By induction on the typing derivation of $e'$.
\end{proof}

\begin{definition}
We say that a global environment $\gamma = (x_1=e_1,\dotsc)$ is \emph{well-typed} for a context $\Gamma = (x_1:\tau_1,\dotsc)$ iff $\Gamma; \cdot \vdash e_i \colon \tau_i$ for all~$i$.
\end{definition}

\begin{proposition}[Reduction preserves typing]\label{thm:reduction_preserves_typing}%
  Let\/ $\gamma$ be well-typed for~$\Gamma$.
  If\/ \(\Gamma; \Delta \vdash e \colon \tau\) and $\gamma \vdash e \Longrightarrow E'$, then $\Gamma; \Delta \vdash e' \colon \tau$ for all $(w, e') \in E'$.
\end{proposition}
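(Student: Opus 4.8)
The plan is to proceed by induction on the derivation of $\gamma \vdash e \Longrightarrow E'$, which amounts to a case analysis on the last reduction rule applied: the axioms \eqref{eq:reduction_global}--\eqref{eq:reduction_sum} are base cases, and the congruence rule \eqref{eq:reduction_congruence} is the single inductive step. In each base case I first invert the typing derivation of $\Gamma; \Delta \vdash e : \tau$ to expose the types and linear contexts of the subexpressions of $e$, and then reassemble a typing derivation for every $e'$ occurring in $E'$.

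The base cases divide into two kinds. The cases for $\kw{amb}$, $\kwe{fail}$, $\kw{factor}$, and additive projection $\withterm{e_1,\dots,e_n}.i$ are immediate: inverting the matching typing rule already gives $\Gamma; \Delta \vdash e' : \tau$ for each surviving result, with $\kwe{fail}$ holding vacuously since $E' = \emptyset$. The interesting base cases are the substituting reductions. For global lookup \eqref{eq:reduction_global}, since the reduced variable $x$ is bound in $\gamma$ it must be typed by the classical variable rule, which forces $\Delta = \cdot$ and $\tau = \Gamma(x)$; the hypothesis $\Gamma; \cdot \vdash e_i : \Gamma(x_i)$ then supplies exactly $\Gamma; \Delta \vdash e : \tau$. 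For $\beta$-reduction \eqref{eq:reduction_function}, inverting the application and $\lambda$ rules yields $\Delta = \Delta_0, \Delta_1$ with $\Gamma; \Delta_0, x_1{:}\tau_1 \vdash e' : \tau'$ and $\Gamma; \Delta_1 \vdash v_1 : \tau_1$, so the linear substitution lemma (\cref{thm:subst_preserves_typing_linear}) closes the case; the sum case \eqref{eq:reduction_sum} is analogous, using inversion of the injection and $\kw{case}$ rules. The product let \eqref{eq:reduction_product} applies \cref{thm:subst_preserves_typing_linear} $n$ times, peeling off one substitution $x_j := v_j$ at a time; because the components $v_j$ are typed in disjoint linear contexts $\Delta_j$, the order of substitution is immaterial and the accumulated linear context is precisely the original $\Delta$.

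The inductive case, the congruence rule, is where the real work lies and is the main obstacle. Here $e = C[e_0]$ and $E' = \{(w, C[e_0']) \mid (w, e_0') \in E''\}$ for some reduction $\gamma \vdash e_0 \Longrightarrow E''$. The proof needs a replacement lemma for evaluation contexts: if $\Gamma; \Delta \vdash C[e_0] : \tau$, then there exist a type $\sigma$ and a linear context $\Delta_0$ (possibly extending the ambient context with binders that $C$ introduces) such that $\Gamma; \Delta_0 \vdash e_0 : \sigma$, and moreover $\Gamma; \Delta \vdash C[e_0'] : \tau$ for \emph{every} $e_0'$ with $\Gamma; \Delta_0 \vdash e_0' : \sigma$. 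Granting this lemma, the induction hypothesis applied to $\gamma \vdash e_0 \Longrightarrow E''$ gives $\Gamma; \Delta_0 \vdash e_0' : \sigma$ for each $(w, e_0') \in E''$, and replacement then delivers $\Gamma; \Delta \vdash C[e_0'] : \tau$, which is exactly what is required for all of $E'$.

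The replacement lemma itself I would prove by induction on the structure of $C$. The content is entirely the linear-context bookkeeping: each context former ($C\,e$, $e\,C$, $\timesterm{\dots, C, \dots}$, the bound-expression and body positions of $\kw{let}$, $C.i$, $\inj{i}C$, $\kw{factor} w \kw{in} C$, and the $\kw{case}$ scrutinee) type-checks the hole in a sub-context carved off by the partitioning of $\Delta$ in its typing rule, so that the surrounding context reuses exactly the complementary bindings and substituting an equally typed $e_0'$ leaves those surrounding derivations untouched; the $\lambda x_1.\,C$ former additionally extends the hole's context by $x_1{:}\tau_1$, which is harmless since the preservation statement is quantified over an arbitrary linear context $\Delta$. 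The crucial point that makes the accounting go through is that additive tuples $\withterm{\dots}$ are \emph{not} evaluation contexts (they are already values), so no context former ever shares the hole's linear bindings across several branches; thus the hole always owns a single, well-defined slice $\Delta_0$ of the linear context.
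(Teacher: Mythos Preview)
Your proposal is correct and follows the same approach as the paper: induction on the reduction derivation, with the global case handled by the assumption on~$\gamma$, the substituting cases \eqref{eq:reduction_function}, \eqref{eq:reduction_product}, \eqref{eq:reduction_sum} handled by \cref{thm:subst_preserves_typing_linear}, and the congruence case handled by the induction hypothesis. Your explicit formulation and proof sketch of the replacement lemma for evaluation contexts fleshes out what the paper leaves implicit in the single phrase ``uses the induction hypothesis,'' and your observation that additive tuples are not evaluation contexts is the right way to see why the linear-context partitioning stays coherent under context decomposition.
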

\begin{proof}
  By induction on the derivation of $\gamma \vdash e \Longrightarrow E'$.
  Case \eqref{eq:reduction_global} uses the well-typedness of~$\gamma$.
  Cases \eqref{eq:reduction_function}, \eqref{eq:reduction_product}, \eqref{eq:reduction_sum} use \cref{thm:subst_preserves_typing_linear}.
  Case \eqref{eq:reduction_congruence} uses the induction hypothesis.
\end{proof}

\begin{proposition}[Progress]
  Let\/ $\gamma$ be well-typed for~$\Gamma$.
  If\/ \(\Gamma; \cdot \vdash e \colon \tau\) then either $e$ is a value or $\gamma \vdash e \Longrightarrow E'$ for some $E'$.
\end{proposition}
\begin{proof}
  By induction on the typing derivation of~$e$.
  The benign nondeterminism mentioned above is not used.
\end{proof}

\subsection{Denotational Semantics}
\label{sec:denotational}

In this section, we define the denotation of a complete program $p:\tau$, which is a distribution over the set denoted by the type~$\tau$.
First we define a \emph{distribution} over a set~$X$ to be a mapping $X \rightarrow [0, \infty]$.
This makes sense in measure theory whenever $X$ is countable; for us, $X$ is always finite.
Note that weights can be irrational or infinite (unlike in the syntax and operational semantics). If $\chi_1, \chi_2$ are distributions over~$X$, we define the (complete) partial order $\chi_1 \leq \chi_2$ iff for all $x \in X$, $\chi_1(x) \leq \chi_2(x)$.

We handle recursive calls in our denotational semantics in a standard way.
Without the global environment (our sole source of recursive calls), we would just let each expression~$e$ denote
a mapping from environments to distributions over semantic values,
where an environment maps each free variable of~$e$ to its semantic value.
With the global environment, we have to first define this denotation relative to an interpretation of each global name as a distribution.
Thus, the denotation of a program's global definitions maps an interpretation of each global name to another interpretation of each global name.
Finally, we take the least fixed point of this monotonic map~\citep{kozen-semantics}.

\begin{definition}
\label{def:denote_type}%
A type \(\tau\) denotes a finite set $\denote{\tau}$ of \emph{semantic values}, defined by induction on~$\tau$:
\begin{align*}
  \denote{\tau_1 \otimes \dots \otimes \tau_n} &= \denote{\tau_1} \times \dots \times \denote{\tau_n} \\
  \denote{\ctor{c_1}\tau_1 \oplus \dots \oplus \ctor{c_n}\tau_n} &= \{\, \plusdenotation{c_i}{v} \mid 1 \le i \le n,\ v \in \denote{\tau_i} \,\} \\
  \denote{\tau_1 \llp \tau_2} &= \{\, \arrowdenotation{v_1}{v_2} \mid v_1 \in \denote{\tau_1},\ v_2 \in \denote{\tau_2} \,\} \\
  \denote{\tau_1 \with \dots \with \tau_n} &= \{\, \withdenotation{i}{v} \mid 1 \le i \le n,\ v \in \denote{\tau_i} \,\}
\end{align*}%
The first two cases are unsurprising:
the tuple type $\tau_1 \otimes \dots \otimes \tau_n$ denotes a set of tuples, and
the union type $\ctor{c_1}\tau_1 \oplus \dots \oplus \ctor{c_n}\tau_n$ denotes a disjoint union, whose elements $\plusdenotation{c_i}{v}$ are just pairs of $\ctor{c_i}$ and~$v$.

The last two cases are where our denotational semantics is nonstandard.
One might expect $\denote{\tau_1\llp\tau_2}$ to be the set of functions $\denote{\tau_2}^{\denote{\tau_1}}$.
But the denotation of an expression of type $\tau_1\llp\tau_2$ will involve a distribution over $\denote{\tau_1\llp\tau_2}$, and that would have $|\denote{\tau_2}|^{|\denote{\tau_1}|}$ weights and lead to programs that require exponential time and space.
Linearity permits a more efficient way.
Intuitively, when a function is created, it~nondeterministically guesses what (one) argument value it will receive. Later, applying the function (once) is just a matter of unifying the actual argument with this guess.
The semantic value $\arrowdenotation{v_1}{v_2}$ is just a suggestively notated pair of $v_1$ and~$v_2$.
Hence, the cardinality $\bigl|\denote{\tau_1 \llp \tau_2}\bigr|$ is only $\bigl|\denote{\tau_1}\bigr|\cdot\bigl|\denote{\tau_2}\bigr|$, enabling PERPL to handle many programs in polynomial time.

Similarly, when an additive tuple is created, it~nondeterministically guesses which (one) component will be demanded.
The semantic value $\withdenotation{i}{v}$ is just a pair of $i$ and~$v$, and the cardinality $\bigl|\denote{\tau_1 \with \dots \with \tau_n}\bigr|$ is not $\bigl|\denote{\tau_1}\bigr| \cdots \bigl|\denote{\tau_n}\bigr|$ but only $\bigl|\denote{\tau_1}\bigr| + \dots + \bigl|\denote{\tau_n}\bigr|$.
\end{definition}

\begin{example}
Since $\kw{Unit}$ is the type of 0-tuples, the set $\denote{\kw{Unit}}$ is $\{\unitdenotation\}$.
Since $\kw{Bool}$ is syntactic sugar for ${\ctor{\id{True}}\unittype} \oplus {\ctor{\id{False}}\unittype}$, the set $\denote{\kw{Bool}}$ is $\{\plusdenotation{\ctor{\id{True}}}{\unitdenotation},\plusdenotation{\ctor{\id{False}}}{\unitdenotation}\}$.

Recall from \cref{sec:adt} that
$\id{Position} =
 \let\sep\relax
 \def\do#1{\sep
           \ctor{\id{#1}} \unittype
           \def\sep{\oplus}}
 \do{0} \do{1} \do{2} \do{3}
$, so
$\denote{\id{Position}} = \{
 \let\sep\relax
 \def\do#1{\sep
           \plusdenotation{\ctor{\id{#1}}}{\unitdenotation}
           \def\sep{,}}
 \do{0} \do{1} \do{2} \do{3}
\}$.
Then $\denote{\id{Position} \llp \id{Position}}$ has $4 \cdot 4 = 16$ members, not $4^4 = 256$:
\begin{equation*}
\denote{\id{Position} \llp \id{Position}} = \bigl\{
\def\sep{}
\def\DO#1{\sep
          &\let\sep\relax
           \def\do##1{\sep
                      \bigl( \plusdenotation{\ctor{\id{#1}}}{\unitdenotation}, \plusdenotation{\ctor{\id{##1}}}{\unitdenotation} \bigr)
                      \def\sep{,}}
           \do{0} \do{1} \do{2} \do{3}
          \def\sep{,\\}}
\begin{aligned}[t]
   \DO{0} \DO{1} \DO{2} \DO{3}
\smash[t]{\bigr\}\text.}
\end{aligned}
\end{equation*}
Furthermore, $\denote{(\id{Position} \llp \id{Position}) \llp \id{Position}}$ has $4 \cdot 4 \cdot 4 = 64$ members, not $4^{4^4} \approx 10^{154}$.
\end{example}

A~global typing context $\Gamma$ denotes the set of all mappings from variables in $\Gamma$ to distributions over semantic values. That is, $\denote{\Gamma}$ contains all possible $\eta$ such that for each $x:\tau \in \Gamma$, we have a distribution $\eta(x) \colon \denote{\tau} \rightarrow [0,\infty]$.
Thus, $\denote{\Gamma}$ is generally uncountable.

A~local typing context $\Delta$ denotes the set of all mappings from variables in $\Delta$ to semantic values. That is, $\denote{\Delta}$ contains all possible $\delta$ such that if $x:\tau \in \Delta$, then $\delta(x) \in \denote{\tau}$.
Thus, $\denote{\Delta}$ is a Cartesian product of finite sets and itself finite.

\begin{example}
In \cref{ex:inconsistent}, the global typing context is $\Gamma=(\id{flip}:\kwe{Bool}, \id{gen}:\unittype)$. Since $\kwe{Bool}$ has two values and $\unittype$ has one, each $\eta\in\denote{\Gamma}$ stores 3 weights:
\begin{align*}
\eta(\id{flip})\bigl(\plusdenotation{\ctor{\id{True}}}{\unitdenotation}\bigr) &\in [0,\infty] &
\eta(\id{flip})\bigl(\plusdenotation{\ctor{\id{False}}}{\unitdenotation}\bigr) &\in [0,\infty] &
\eta(\id{gen})\bigl(\unitdenotation\bigr) &\in [0,\infty]
\end{align*}
There are no local variables in \cref{ex:inconsistent}, so the local typing context $\Delta$ is everywhere empty, and the unique $\delta\in\denote{\Delta}$ is also empty.
\end{example}

A~typing judgement \(\Gamma; \Delta \vdash e:\tau\) denotes a mapping that takes an $\eta \in \denote{\Gamma}$ to a distribution
over \(\denote{\Delta} \times \denote{\tau}\).
We write this distribution as $\denote{\Gamma; \Delta \vdash e:\tau}_\eta$, but
as $\denote{e}_\eta$
or even $\denote{e}$ for short, and we define it compositionally on the typing derivation, using the equations in \cref{fig:denotational}.
The Iverson bracket $\mathbb{I}[\mathord\cdot]$ is 1 if its contents are true, 0 otherwise.

\begin{figure*}
\begin{minipage}{\linewidth}
\begin{subequations} \label{eq:denotational}
\begin{align}
  \denote{x}_\eta(\delta, v) &= \begin{shortcases}
    \eta(x)(v) & x\in\domain{\eta} \\
    \mathbb{I}[v = \delta(x)] & x\in\domain{\delta}
  \end{shortcases} \label{eq:denote_var} \\
  \denote{\lambda x_1. e'}(\delta_0, \arrowdenotation{v_1}{v'}) &= \denote{e'}(\delta_0 \cup \{(x_1,v_1)\}, v') \label{eq:denote_abs} \\
  \denote{e_0~e_1}(\delta_0 \cup \delta_1, v') &= \sum_{v_1\in\denote{\tau_1}} \, \denote{e_0}(\delta_0, \arrowdenotation{v_1}{v'}) \cdot \denote{e_1}(\delta_1, v_1) \label{eq:denote_app} \\
  \denote{\kw{amb} e_1~e_2}(\delta, v) &= \denote{e_1}(\delta, v) + \denote{e_2}(\delta, v) \label{eq:denote_amb} \\
  \denote{\kwe{fail}}(\delta, v) &= 0 \label{eq:denote_fail} \\
  \denote{\kw{factor} w \kw{in} e}(\delta, v) &= w \cdot \denote{e}(\delta, v) \label{eq:denote_factor} \\
  \denote{(e_1, \dots, e_n)} (\delta_1 \cup \dots \cup \delta_n, (v_1, \dots, v_n)) &= \denote{e_1}(\delta_1, v_1) \cdots \denote{e_n}(\delta_n, v_n) \label{eq:denote_mulprod} \\
  \denote{\kw{let} (x_1, \dots, x_n) = e \kw{in} e'} (\delta \cup \delta', v') &=
  \smash[b]{\sum_{v_1, \dots, v_n}} \denote{e}(\delta, (v_1, \dots, v_n)) \cdot {} \notag \\
    &\hspace{5em} \denote{e'}(\delta' \cup \{(x_1, v_1), \dots, (x_n, v_n)\}, v') \label{eq:denote_mulprodelim} \\
  \denote{\withterm{e_1, \dots, e_n}} (\delta, \withdenotation{i}{v}) &= \denote{e_i}(\delta, v) \label{eq:denote_addprod} \\
  \denote{e.i} (\delta, v) &= \denote{e}(\delta, \withdenotation{i}{v}) \label{eq:denote_addprodelim} \\
  \denote{\ctor{c} e}(\delta, \plusdenotation{c'}{v}) &= \begin{shortcases}
    \denote{e}(\delta, v) & c'=c \\
    0 & c' \neq c
  \end{shortcases} \label{eq:denote_inj} \\
  \begin{array}[b]{@{}r@{}l@{}}
  \denote{
    \kw{case} e \kw{of}{} & {\ctor{c_1} x_1} \casearrow e'_1 \casealt \dots \\
             {}\casealt{} & {\ctor{c_n} x_n} \casearrow e'_n
  }(\delta \cup \delta', v')
  \end{array} &=
    \sum_{i=1}^n \sum_{v} \, \denote{e}(\delta, \plusdenotation{c_i}{v}) \cdot
    \denote{e'_i}(\delta' \cup \{(x_i, v)\}, v')
    \label{eq:denote_case}
\end{align}
\end{subequations}
\end{minipage}
\caption{Denotational semantics. In $\denote{e}(\delta, v)$, the domain of $\delta$ is always the set of free variables in $e$, so in \labelcref{eq:denote_app}, the left-hand side uniquely determines $\delta_0$ and $\delta_1$ on the right-hand side, and similarly in \labelcref{eq:denote_mulprod,eq:denote_mulprodelim,eq:denote_case}.}
\label{fig:denotational}
\end{figure*}

\begin{example} \label{ex:flip_denotation}
Consider the expression $\kw{amb} (\kw{factor} p \kw{in} \kwe{true})~(\kw{factor} q \kw{in} \kwe{false})$ from \cref{ex:coin}. We build up the denotation of this expression as follows.
\begin{gather}
  \denote{\unitterm}(\emptyset,\unitdenotation) \eqby{eq:denote_mulprod} 1 \notag \\
\begin{flalign*}
  \denote{\kwe{true}}(\emptyset,\plusdenotation{c}{\unitdenotation}) &\eqby{eq:denote_inj} \mathbb{I}[c={\ctor{\id{True}}}] &
  \denote{\kwe{false}}(\emptyset,\plusdenotation{c}{\unitdenotation}) &\eqby{eq:denote_inj} \mathbb{I}[c={\ctor{\id{False}}}] \\
  \denote{\kw{factor} p \kw{in} \kwe{true}}(\emptyset,\plusdenotation{c}{\unitdenotation}) &\eqby{eq:denote_factor} p \cdot \mathbb{I}[c={\ctor{\id{True}}}] &
  \denote{\kw{factor} q \kw{in} \kwe{false}}(\emptyset,\plusdenotation{c}{\unitdenotation}) &\eqby{eq:denote_factor} q \cdot \mathbb{I}[c={\ctor{\id{False}}}]
\end{flalign*} \\
  \denote{\kw{amb} (\kw{factor} p \kw{in} \kwe{true})~(\kw{factor} q \kw{in} \kwe{false})}\bigl(\emptyset,\plusdenotation{c}{\unitdenotation}\bigr) \eqby{eq:denote_amb} \begin{shortcases} p &c={\ctor{\id{True}}} \\ q &c={\ctor{\id{False}}}. \end{shortcases} \label{eq:denote-coin}
\end{gather}
\end{example}
\begin{example} \label{ex:inconsistent_denotation}
Moving on to \cref{ex:inconsistent}, assume that $\eta$ is given.
\begin{align}
    \denote{\id{flip}}_\eta\bigl(\emptyset, \plusdenotation{c}{\unitdenotation}\bigr) &\eqby{eq:denote_var} \eta(\id{flip})\bigl(\plusdenotation{c}{\unitdenotation}\bigr) \notag \\
    \denote{\id{gen}}_\eta\bigl(\emptyset, \unitdenotation\bigr) &\eqby{eq:denote_var} \eta(\id{gen})\bigl(\unitdenotation\bigr) \notag \\
    \denote{\kw{let} \unitterm = \id{gen} \kw{in} \unitterm}_\eta\bigl(\emptyset, \unitdenotation\bigr) &\eqby{eq:denote_mulprodelim} \eta(\id{gen})\bigl(\unitdenotation\bigr) \cdot 1 \notag \\
    \denote{\kw{let} \unitterm = \id{gen} \kw{in} {\kw{let} \unitterm = \id{gen} \kw{in} \unitterm}}_\eta \bigl(\emptyset, \unitdenotation\bigr) &\eqby{eq:denote_mulprodelim} \eta(\id{gen})\bigl(\unitdenotation\bigr) \cdot \eta(\id{gen})\bigl(\unitdenotation\bigr) \cdot 1 \notag \\
    \begin{aligned}[b]
    \denote{&{\kw{if} \id{flip} \kw{then} {\kw{let} \unitterm = \id{gen}}} \\[-\jot] &\quad {\kw{in} {\kw{let} \unitterm = \id{gen} \kw{in} \unitterm} \kw{else} \unitterm}}
    \end{aligned}_\eta \bigl(\emptyset, \unitdenotation\bigr)
    &\eqby{eq:denote_case} \eta(\id{flip})\bigl(\plusdenotation{\ctor{\id{True}}}{\unitdenotation}\bigr) \cdot \eta(\id{gen})\bigl(\unitdenotation\bigr) \cdot \eta(\id{gen})\bigl(\unitdenotation\bigr) \cdot 1 \notag \\*[-\jot]
    &\quad+ \eta(\id{flip})\bigl(\plusdenotation{\ctor{\id{False}}}{\unitdenotation}\bigr) \cdot 1.
\label{eq:denote-gen}
\end{align}
\end{example}

Given a set of global definitions $\gamma = (x_1=e_1:\tau_1,\dotsc,x_n=e_n:\tau_n)$,
we define its denotation~$\denote{\gamma}$ to be the global environment~$\eta$
that is the least solution to the equations
\begin{equation}
    \eta(x_i)(v_i) = \denote{e_i}_\eta(\emptyset,v_i)
\label{eq:fixpoint}
\end{equation}
for each $i=1,\dotsc,n$ and each $v_i$ in~$\denote{\tau_i}$.
Finally, the program
${\kw{define} x_1 = e_1} \seq \dotso \seq {\kw{define} x_n = e_n} \seq e$
denotes the distribution $\denote{e}_{\denote{\gamma}}(\emptyset,v)$
over semantic values~$v$.

\begin{example} \label{ex:inconsistent-equations}
For \cref{ex:coin,ex:inconsistent}, the equations are
\begin{align}
    \eta(\id{flip})\bigl(\plusdenotation{c}{\unitdenotation}\bigr) &\eqby{eq:fixpoint} \denote{\kw{amb} (\kw{factor} p \kw{in} \kwe{true})~(\kw{factor} q \kw{in} \kwe{false})}\bigl(\emptyset,\plusdenotation{c}{\unitdenotation}\bigr) \label{eq:denote-global-flip} \\
    &\eqby{eq:denote-coin} \begin{shortcases} p & c={\ctor{\id{True}}} \\ q & c={\ctor{\id{False}}} \end{shortcases} \label{eq:denote-global-flip-sub} \\
    \eta(\id{gen})\bigl(\unitdenotation\bigr) &\eqby{eq:fixpoint} \denote{\kw{if} \id{flip} \kw{then} {\kw{let} \unitterm = \id{gen} \kw{in} {\kw{let} \unitterm = \id{gen} \kw{in} \unitterm}} \kw{else} \unitterm}\bigl(\emptyset, \unitdenotation\bigr) \label{eq:denote-global-gen} \\
    &\eqby{eq:denote-gen} \eta(\id{flip})\bigl(\plusdenotation{\ctor{\id{True}}}{\unitdenotation}\bigr) \cdot \eta(\id{gen})\bigl(\unitdenotation\bigr) \cdot \eta(\id{gen})\bigl(\unitdenotation\bigr) \cdot 1 + \eta(\id{flip})\bigl(\plusdenotation{\ctor{\id{False}}}{\unitdenotation}\bigr) \cdot 1. \label{eq:denote-global-gen-sub}
\end{align}
In all, there are three equations in three unknowns, $\eta(\id{flip})\bigl(\plusdenotation{\ctor{\id{True}}}{\unitdenotation}\bigr)$, $\eta(\id{flip})\bigl(\plusdenotation{\ctor{\id{False}}}{\unitdenotation}\bigr)$, and $\eta(\id{gen})\bigl(\unitdenotation\bigr)$.
The whole program in \cref{ex:inconsistent}, of type $\unittype$, denotes a single weight, which we write as $z$ for short:
\begin{equation}
z = \denote{\id{gen}}\bigl(\emptyset, \unitdenotation\bigr) \eqby{eq:denote_var} \eta(\id{gen})\bigl(\unitdenotation\bigr).
\label{eq:denote-inconsistent}
\end{equation}
Putting \labelcref{eq:denote-global-flip-sub,eq:denote-global-gen-sub,eq:denote-inconsistent} together, we get \cref{e:inconsistent-equation} as promised:
\begin{align}
  z &= p z^2 + q.
\label{eq:denote-inconsistent-sub}
\end{align}
\end{example}

\subsection{Soundness}

We justify our denotational semantics by showing
that it is consistent with our standard operational semantics,
thanks to the linear type system.

\begin{definition}
\label{def:deterministic}
We say that the denotation \(\denote{\Gamma; \Delta \vdash e:\tau}_\eta\) is \emph{deterministic} if, for each $\delta \in \denote{\Delta}$,
it~assigns weight 1 to just one $v \in \denote{\tau}$ and weight 0 to all other $v \in \denote{\tau}$.
\end{definition}

\begin{lemma}[Linear substitution preserves denotation]
\leavevmode

\postdisplaypenalty=500
          Suppose\/ \(\Gamma; \Delta_0, x_1\colon\tau_1 \vdash e' : \tau'\)
          and\/ \(\Gamma; \Delta_1 \vdash e_1 : \tau_1\)
          (as in \cref{thm:subst_preserves_typing_linear}).
          Then\label{thm:subst_preserves_denote_linear}
  \begin{equation}
    \denote{e'\{x_1:=e_1\}}_\eta(\delta_0 \cup \delta_1, v') 
    = \sum_{v_1\in\denote{\tau_1}} \, \denote{e'}_\eta(\delta_0 \cup \{(x_1, v_1)\}, v') \cdot \denote{e_1}_\eta(\delta_1, v_1)
  \end{equation}
for all $\eta \in \denote{\Gamma}$,
$\delta_0 \in \denote{\Delta_0}$,
$\delta_1 \in \denote{\Delta_1}$, and
$v' \in \denote{\tau'}$.
\end{lemma}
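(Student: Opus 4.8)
The plan is to prove the identity by induction on the typing derivation of $e'$, with a case analysis on its final rule. The guiding principle is that linearity forces $x_1$ to be used exactly once in $e'$, so in every rule that partitions the linear context, $x_1\colon\tau_1$ lands in exactly one premise, and the substitution $e'\{x_1:=e_1\}$ rewrites only that one subexpression. This is precisely what makes the single sum over $v_1\in\denote{\tau_1}$ on the right-hand side correct: the denotations of the other subexpressions do not mention $v_1$ and can be pulled outside the sum. Typing of the substituted term is guaranteed throughout by \cref{thm:subst_preserves_typing_linear}.

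For the base case I would first observe that the only applicable variable rule is the linear one with $e'=x_1$: were $e'$ any other variable, classical or linear, its linear context could not contain $x_1\colon\tau_1$, contradicting the hypothesis. The linear rule then forces $\Delta_0$ to be empty, so $\delta_0=\emptyset$, and by \eqref{eq:denote_var} the right-hand side collapses to $\sum_{v_1}\mathbb{I}[v'=v_1]\cdot\denote{e_1}_\eta(\delta_1,v_1)=\denote{e_1}_\eta(\delta_1,v')$, which matches the left-hand side $\denote{e'\{x_1:=e_1\}}_\eta=\denote{e_1}_\eta$.

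For the inductive step I would split the constructs into two groups. The context-splitting constructs — application \eqref{eq:denote_app}, multiplicative tuples, the multiplicative \kw{let}, and the scrutinee of a \kw{case} — form the heart of the argument. In each I would read off the typing rule to write $\Delta_0,x_1\colon\tau_1$ as a partition, note that by linearity $x_1$ occurs in exactly one block, split $\delta_0$ compatibly, and apply the induction hypothesis to the unique subexpression containing $x_1$ with the appropriate target value (for instance $(v_1,v')$ in the application case). Since the remaining subexpressions' denotations are independent of the dummy index, they factor out of the sum, and reassembling via the same semantic rule yields exactly $\sum_{v_1}\denote{e'}_\eta(\delta_0\cup\{(x_1,v_1)\},v')\cdot\denote{e_1}_\eta(\delta_1,v_1)$. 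The remaining constructs share a single linear context across subexpressions — \kw{amb}, additive tuples, \kw{factor}, injections, and the branches of a \kw{case}. Here $x_1$ lies in the shared context, so the induction hypothesis applies to each subexpression; I then use that addition, scaling by $w$, and selection of a component or branch index all commute with the sum over $v_1$. The \kwe{fail} case holds because both sides are identically $0$. The abstraction and projection rules are routine: for $\lambda z.a$ I would fold the fresh binding $z$ into the $\delta_0$ argument before invoking the hypothesis.

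I expect the main obstacle to be bookkeeping rather than conceptual: keeping each partition of the linear contexts synchronized with the corresponding splitting of $\delta_0$, and correctly interchanging the (countable, $[0,\infty]$-valued) sums when both the induction hypothesis's sum over $v_1$ and a construct's own sum — over component values or the branch index — are present. The interchange is justified because all summands are nonnegative, so the order is immaterial (rearrangement of nonnegative series), provided one adopts the convention $0\cdot\infty=0$ to handle \kwe{fail} and mismatched injection tags. The one genuinely load-bearing step is the appeal to linearity in the context-splitting cases: if $x_1$ could occur in two multiplicatively combined subexpressions, the left-hand side would produce a double sum $\sum_{v_1}\sum_{v_1'}$ with two independent copies of the argument guess, which is not the single sum on the right. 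Thus the lemma truly depends on the linear discipline, not merely on well-typedness.
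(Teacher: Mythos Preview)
Your proposal is correct and follows essentially the same approach as the paper: induction on the typing derivation of~$e'$, with the key observation that linearity forces $x_1$ into exactly one premise of each context-splitting rule. The paper's proof (in the appendix) shows only the cases $x_1$, $x\neq x_1$ (ruled out by inversion), $\lambda x.e$, and application, all of which match your treatment; your account is in fact more thorough in enumerating the remaining constructs and in flagging the role of nonnegativity for interchanging sums.
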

\begin{proof}
By induction on the typing derivation of~$e'$.
See \cref{sec:proof_subst_preserves_denote_linear} for more details.
\end{proof}

\begin{theorem}[Reduction preserves denotation]\label{thm:preserve}%
If\/ $\Gamma; \Delta \vdash e: \tau$ and $\gamma \vdash e \Longrightarrow E'$ in the operational semantics,
then in the denotational semantics, for all $\delta \in \denote{\Delta}$ and $v \in \denote{\tau}$,
\[
    \denote{e}_{\denote{\gamma}}(\delta,v)
    = \sum_{(w,e')\in E'} w \cdot \denote{e'}_{\denote{\gamma}}(\delta,v).
\]
\end{theorem}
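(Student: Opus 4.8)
The plan is to induct on the derivation of $\gamma \vdash e \Longrightarrow E'$, treating the reduction axioms \eqref{eq:reduction_global}--\eqref{eq:reduction_sum} as base cases and the congruence rule \eqref{eq:reduction_congruence} as the single inductive step. Throughout, the global environment $\eta = \denote{\gamma}$ is held fixed, and since all weights lie in $[0,\infty]$, every rearrangement or interchange of sums I perform is justified by Tonelli's theorem for nonnegative series. Because \cref{thm:reduction_preserves_typing} guarantees that each $e'$ in $E'$ has the same type $\tau$ and linear context $\Delta$ as $e$, the right-hand denotations $\denote{e'}_\eta(\delta, v)$ are well-defined at the same arguments, so both sides of the claimed equation make sense.

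For the base cases I unfold the relevant clause of \cref{fig:denotational} on the left and match it against the singleton, two-element, or empty distribution on the right. The $\kw{amb}$, $\kwe{fail}$, and $\kw{factor}$ cases are immediate from their denotation clauses. The global case \eqref{eq:reduction_global} is exactly the fixpoint equation \eqref{eq:fixpoint}: since $x$ reduces under the empty linear context, $\delta = \emptyset$ and $\denote{x}_\eta(\emptyset, v) = \eta(x)(v) = \denote{e}_\eta(\emptyset, v)$ where $(x=e) \in \gamma$. The three substitution cases---function application \eqref{eq:reduction_function}, product destructuring \eqref{eq:reduction_product}, and $\kw{case}$ on a sum \eqref{eq:reduction_sum}---are where the linear substitution lemma (\cref{thm:subst_preserves_denote_linear}) carries the argument. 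For instance, expanding $\denote{(\lambda x_1. e')\,v_1}$ via \eqref{eq:denote_app} and then \eqref{eq:denote_abs} gives $\sum_{u \in \denote{\tau_1}} \denote{e'}_\eta(\delta_0 \cup \{(x_1, u)\}, v') \cdot \denote{v_1}_\eta(\delta_1, u)$, which is precisely $\denote{e'\{x_1 := v_1\}}_\eta(\delta_0 \cup \delta_1, v')$ by the lemma. The product and sum cases are analogous, using an $n$-fold iteration of the same lemma, after the clause for $\inj{i}$ (respectively for tuples) collapses the outer summation; the $\withterm{\cdots}.i$ projection case requires no substitution and follows by composing its two relevant clauses.

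The real content is the congruence case \eqref{eq:reduction_congruence}, which I expect to be the main obstacle. Here I would isolate a \emph{context lemma}: for every evaluation context $C$, the operation that takes a distribution $f$ over $\denote{\Delta_0} \times \denote{\tau_0}$ and returns the distribution computed from $\denote{C[e_0]}$ with $\denote{e_0}$ replaced by $f$ is $[0,\infty]$-linear, that is, additive and commuting with nonnegative scaling. Granting this, the induction hypothesis $\denote{e_0}_\eta(\delta_0, v_0) = \sum_{(w, e_0') \in E_0'} w \cdot \denote{e_0'}_\eta(\delta_0, v_0)$ pushes straight through $C$ to yield $\denote{C[e_0]}_\eta = \sum_{(w, e_0')} w \cdot \denote{C[e_0']}_\eta$, exactly as required. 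The context lemma itself is a routine structural induction on $C$, bottoming out at $C = [\cdot]$ (the identity): the key observation is that every clause of \cref{fig:denotational} expresses the denotation of a compound expression as a sum of products in which each immediate subexpression's denotation occurs exactly once and to the first power. This multilinearity is exactly what the linear type discipline makes manifest, since it partitions the linear context into disjoint pieces $\Delta_1, \dots, \Delta_n$ so that distinct subexpressions are evaluated against disjoint environments and never couple nonlinearly. Because an evaluation context has a single hole, composing these multilinear layers gives linearity in the hole. I would finally note that the appeal to \cref{thm:reduction_preserves_typing} keeps the typing derivation surrounding the hole fixed under reduction, so that ``replacing $\denote{e_0}$ by $f$'' is unambiguous against one and the same ambient derivation.
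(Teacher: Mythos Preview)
Your proposal is correct and takes essentially the same approach as the paper: induction on the reduction derivation, with the substitution cases discharged by \cref{thm:subst_preserves_denote_linear}. The paper's proof is a sketch that shows only the $\beta$-reduction case; your treatment of the congruence case via an explicit context lemma (linearity of $\denote{C[\cdot]}$ in the hole) is a reasonable elaboration of what the paper leaves implicit in the phrase ``by induction,'' and your remark about Tonelli is harmless overkill since all the sets $\denote{\tau}$ and $\denote{\Delta}$ are finite.
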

\begin{proof}
By induction on the reduction derivation of $e$.
We show the case $\gamma \vdash (\lambda x_1. e')\,e_1 \Longrightarrow \{(1, e'\{x_1:=e_1\})\}$ where $e_1$ is a syntactic value:
\begin{align*}
  \denote{(\lambda x_1. e')\,e_1}(\delta_0 \cup \delta_1, v') &\eqby{eq:denote_app}
  \sum_{v_1} \, \denote{\lambda x_1. e'}(\delta_0, \arrowdenotation{v_1}{v'}) \cdot \denote{e_1}(\delta_1, v_1) \\
  &\eqby{eq:denote_abs} \sum_{v_1} \, \denote{e'}(\delta_0 \cup \{(x_1,v_1)\}, v') \cdot \denote{e_1}(\delta_1, v_1) \\
  &\eqby{} 1 \cdot \denote{e'\{x_1:=e_1\}}(\delta_0 \cup \delta_1, v') \text{ by \cref{thm:subst_preserves_denote_linear}.} \qedhere
\end{align*}
\end{proof}

\section{Inference}
\label{sec:mspe}

The denotation of a program under the above semantics is the least fixed point of a system of equations, which has a particular form that has been well studied in other contexts.

\begin{definition}
  A \emph{monotone system of polynomial equations}, or MSPE \citep{etessami+yannakakis:2009,esparza+:2008}, is a system of equations
  \begin{equation*}
  \begin{aligned}
    z_1 &= P_1(z_1, \dots, z_n) \\[-2ex]
    &\vdotswithin{=} \\[-1ex]
    z_n &= P_n(z_1, \dots, z_n)
  \end{aligned}
  \end{equation*}
  where
  the $z_i$ are called \emph{weight variables} (to distinguish them from variables in PERPL) and
  each $P_i$ is a polynomial with nonnegative real coefficients. 
  We write $\mathbf{z}$ for a vector in $[0,\infty]^n$ of assignments to the weight variables. If $\mathbf{z}, \mathbf{z}'$ are such assignments, we write $\mathbf{z} \leq \mathbf{z}'$ iff for all $i$, $z_i \leq z'_i$.
  We write $\mathbf{P}(\mathbf{z})$ for the vector $[P_i(z_1, \ldots, z_n)]_{i=1}^n \in [0,\infty]^n$. Thus the MSPE can be written succinctly as $\mathbf{z} = \mathbf{P}(\mathbf{z})$.
\end{definition}

\subsection{Constructing an MSPE}

The first step of inference is to instantiate \cref{fig:denotational,eq:fixpoint} for all the expressions of the program, which has the form of an MSPE.

\begin{proposition}
  The denotation of a program is a distribution whose weight values are components of the least solution (in $[0,\infty]^n$) of an MSPE.
\end{proposition}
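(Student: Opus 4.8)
The plan is to recognize the fixed-point equations~\eqref{eq:fixpoint} that define $\denote{\gamma}$ as an MSPE and to read the program's output weights off its least solution. First I would introduce the weight variables. Because each type denotes a finite set, I can enumerate, for every global name $x_i$ and every $v_i \in \denote{\tau_i}$, a weight variable $z_{i,v_i} := \eta(x_i)(v_i)$; since there are finitely many global names and each $\denote{\tau_i}$ is finite, this is a finite collection, which I assemble into a vector $\mathbf{z} \in [0,\infty]^n$ (so $n$ is the total number of such pairs). A global environment $\eta$ is then exactly such an assignment $\mathbf{z}$.

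The crux is a lemma proved by induction on the typing derivation: for every typable expression $\Gamma; \Delta \vdash e : \tau$ and every fixed $\delta \in \denote{\Delta}$ and $v \in \denote{\tau}$, the map $\eta \mapsto \denote{e}_\eta(\delta, v)$ is a polynomial with nonnegative real coefficients in the weight variables $z_{i,v_i}$. Each clause of Figure~\ref{fig:denotational} is dispatched by noting that this class of polynomials is closed under the operations that arise: the variable case yields either a single weight variable $z_{i,v}$ or the constant $\mathbb{I}[v = \delta(x)] \in \{0,1\}$; $\kw{factor}$ multiplies by a nonnegative $w$; $\kw{amb}$, multiplicative tuples, let-bindings, and case-expressions combine the inductively-supplied subpolynomials by finite sums and products. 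The decisive observation is that every sum ranging over semantic values---such as the sum over $v_1 \in \denote{\tau_1}$ in the application clause~\eqref{eq:denote_app}---is finite, precisely because types denote finite sets; this is what keeps each denotation a genuine polynomial rather than an infinite series.

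Applying this lemma to each defining expression $e_i$ with the empty linear environment $\delta = \emptyset$ (legitimate since global definitions are typed in the empty linear context) rewrites equation~\eqref{eq:fixpoint} as $z_{i,v_i} = P_{i,v_i}(\mathbf{z})$ with each $P_{i,v_i}$ a nonnegative-coefficient polynomial. Collected over all pairs $(i,v_i)$, these are exactly an MSPE $\mathbf{z} = \mathbf{P}(\mathbf{z})$, and $\denote{\gamma}$ is defined to be its least solution, so existence and identity of the least fixed point are already at hand ($\mathbf{P}$ is monotone because nonnegative-coefficient polynomials are monotone on $[0,\infty]^n$). To handle the program's trailing expression $e : \tau$, I would adjoin for each $v \in \denote{\tau}$ a further weight variable $z_{e,v}$ with equation $z_{e,v} = \denote{e}_\eta(\emptyset, v) = P_{e,v}(\mathbf{z})$; because $e$ is not itself a global definition, none of these polynomials mention the new variables, so appending the equations leaves the least solution of the original coordinates unchanged while making each output weight $\denote{e}_{\denote{\gamma}}(\emptyset, v)$ literally the component $z_{e,v}$ of the least solution, as required.

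I expect the main obstacle to be the bookkeeping inside the inductive lemma rather than any conceptual difficulty: one must check in the application, let, and case clauses that finite sums and products of the inductive polynomials still carry nonnegative coefficients and involve only the designated weight variables. The finiteness of each $\denote{\tau}$ is what underwrites the whole argument, and I would flag it explicitly as the hypothesis that rules out non-polynomial behavior.
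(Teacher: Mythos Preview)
Your proposal is correct, but it packages the argument differently from the paper. The paper introduces a weight variable for \emph{every} triple $(\text{subexpression }e,\ \delta \in \denote{\Delta},\ v \in \denote{\tau})$ appearing in the typing derivation, together with a second family of variables $\eta(x)(v)$ for global names; each clause of \cref{fig:denotational} then \emph{is} the polynomial equation for one of these variables, so no inductive lemma is needed---one simply observes that each right-hand side in \cref{fig:denotational} is manifestly a finite sum of finite products of weight variables and nonnegative constants, and that finiteness of $\denote{\tau}$ and $\denote{\Delta}$ keeps the total variable count finite. You instead keep weight variables only for the global names $\eta(x_i)(v_i)$ (plus the trailing expression) and prove, by structural induction, that the composite map $\eta \mapsto \denote{e}_\eta(\delta,v)$ is a nonnegative-coefficient polynomial in those variables alone. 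Your route buys a smaller MSPE (intermediate subexpression variables are substituted away up front) at the cost of the explicit inductive lemma; the paper's route trades a larger system for the convenience of reading the equations directly off the semantics. Both rest on the same load-bearing fact you correctly flagged: each $\denote{\tau}$ is finite, so every sum in \cref{fig:denotational} is finite.
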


\begin{proof}
  The MSPE has two kinds of weight variables:
  \begin{enumerate}
  \item For every subexpression $\Gamma; \Delta \vdash e: \tau$, every $\delta \in \denote{\Delta}$, and every $v \in \denote{\tau}$, there is a weight variable $\denote{e}_\eta(\delta, v)$.
        \Cref{fig:denotational} gives an equation whose left-hand side is this weight variable and whose right-hand side is a polynomial in the weight variables with nonnegative coefficients.
        Examples of such equations are \labelcref{eq:denote-coin,eq:denote-gen,eq:denote-inconsistent}.
  \item For every global variable $x : \tau$ and every $v \in \denote{\tau}$, there is a weight variable $\eta(x)(v)$.
        \Cref{eq:fixpoint} gives an equation whose left-hand side is this weight variable and whose right-hand side is a weight variable of the first kind.
        Examples of such equations are \labelcref{eq:denote-global-flip,eq:denote-global-gen}.
  \end{enumerate}
  The denotation of a program
  ${\kw{define} x_1 = e_1} \seq \dotso \seq {\kw{define} x_n = e_n} \seq e$
  with type $\tau$
  is the distribution that assigns to each $v \in \denote{\tau}$ the weight $\denote{e}(\emptyset; v)$, which is a weight variable of the first kind above.

  All that needs to be shown is that the MSPE has finite size.
  Under \cref{def:denote_type}, $\denote{\tau}$~is finite for every~$\tau$, which can be shown by induction on the structure of~$\tau$. Also, $\denote{\Delta}$~is finite, being a finite Cartesian product of finite sets. So the number of weight variables is finite.
\end{proof}

Even though the number of weight variables is finite, it can be exponential in the size of the program, because $\denote{\Delta}$ is the product of as many sets as $\Delta$ has variables.
If~only for this reason, PERPL inference is intractable in general.
This blowup is not surprising given that, even without any recursive calls or data, PERPL can easily express all discrete Bayes nets \citep{cooper-computational} and conjunctive queries \citep{chandra-optimal}, and inherits their intractability.

Even for a family of programs---such as the typical parser---for which the number of weight variables is polynomial in the size of the program, an unwisely chosen dependency can dramatically increase the degree of the polynomial; this is a concern as well for the polynomial-time dynamic-programming inference algorithms that we are trying to recover automatically.
As with Bayes nets and conjunctive queries, finding the optimal inference strategy (tree decomposition) is NP-hard, but many heuristics help in practice.
Also, in practice the weight variables can be packed into a tensor to take advantage of vectorized parallelism.

\subsection{Solving the MSPE}
\label{sec:solve_mspe}

The general strategy for solving an MSPE automatically is to decompose it into smaller MSPEs.
If~$z_1, z_2$ are weight variables, we write $z_1 \prec z_2$ if there is an equation whose left-hand side is $z_2$ and whose right-hand side contains $z_1$.
Then find the strongly connected components (SCCs) of~$\prec$.
In~\cref{ex:inconsistent-equations},
\begin{align*}
\eta(\id{flip})\bigl(\plusdenotation{\ctor{\id{True}}}{\unitdenotation}\bigr) &\prec \eta(\id{gen})\bigl(\unitdenotation\bigr) &
\eta(\id{flip})\bigl(\plusdenotation{\ctor{\id{False}}}{\unitdenotation}\bigr) &\prec \eta(\id{gen})\bigl(\unitdenotation\bigr) &
\eta(\id{gen})\bigl(\unitdenotation\bigr) &\prec \eta(\id{gen})\bigl(\unitdenotation\bigr)
\end{align*}
so each of the three variables forms its own SCC\@.
We visit each SCC in topological order, solving it and substituting its solution into the remaining equations.

The easiest, and most common, case is when an SCC has just one weight variable, whose equation (after substituting earlier weight variables) must be of the form $z = w$ where $w$ is a constant.
Many existing PPLs with exact inference handle only this case, that is, when $\prec$ is acyclic.

The next easiest case is when an SCC has more than one weight variable but (after substituting earlier weight variables) its equations are all linear. These can be solved \emph{directly}---as opposed to iteratively---in the semiring $[0,\infty]$ by one of the algorithms of \citet{lehmann:1977}, such as Gaussian elimination. This case includes all loops (\cref{ex:loop}) and arises often in practice; for example, PCFG rules of the form $(X \rightarrow X)$ can be used an unbounded number of times. Most existing parsers limit how many times such rules can be applied \citep[e.g.,][]{collins:1999,taskar+:2004,finkel+:2008}, but PERPL makes it easy to write code that handles such situations exactly.

The most general case is when the equations for the SCC are nonlinear. An example is parsing using a PCFG with rules of the form $(X \rightarrow \epsilon)$. Again, implementations usually resort to arbitrary limits on how such rules can be applied \citep[e.g.,][]{cai+:2011}.
But these cases can be solved using a generalization of Newton's method to $\omega$-continuous semirings \citep{esparza+:2007,esparza+:2010semiring},
which is guaranteed to converge to the least solution of an MSPE\@:
\begin{align*}
  \mathbf{z}^{(0)} &= \mathbf{0} \\
  \mathbf{z}^{(i+1)} &= \mathbf{z}^{(i)} + \bigl(\tfrac{\partial\mathbf{P}}{\partial\mathbf{z}}\bigl(\mathbf{z}^{(i)}\bigr)\bigr)^* \bigl(\mathbf{P}(\mathbf{z}^{(i)}) - \mathbf{z}^{(i)}\bigr)
\end{align*}
where $\tfrac{\partial\mathbf{P}}{\partial\mathbf{z}}$ is the \emph{formal} derivative of $\mathbf{P}$ (that is, defined only using the sum and product rules, not using limits).
For any matrix $\mathbf{A}$, the \emph{closure} $\mathbf{A}^* = \sum_{i=0}^\infty \mathbf{A}^i$ can be computed by an algorithm of \citet{lehmann:1977}.
And $\infty - \infty$ can be defined to be anything (say, zero).

Although the iterates of Newton's method are in general only approximations, there is a known number of iterations that is guaranteed to reach any desired level of accuracy \citep{stewart+:2015}. This is also true of, for example, logarithms, and in the context of machine learning, such computations are generally considered ``exact,'' in contrast to methods based on random sampling.

\begin{example}
  In \cref{eq:denote-inconsistent-sub},
  let $p = \tfrac23$ and $q = \tfrac13$. Then
  $P(z) = \tfrac23 z^2 + \tfrac13$ and
  $\tfrac{\partial P}{\partial z}(z) = \tfrac43z$,
  and using Newton's method to solve $z = P(z)$ gives
  \begin{align*}
    z^{(0)} &= 0 \\
    z^{(1)} &= 0 + 0^* \left(\tfrac13 - 0\right) = \tfrac13 \approx 0.33333 \\
    z^{(2)} &= \tfrac13 + \left(\tfrac49\right)^* \left(\tfrac{11}{27} - \tfrac13\right) = \tfrac7{15} \approx 0.46667 \\
    z^{(3)} &= \tfrac7{15} + \left(\tfrac{28}{45}\right)^* \left(\tfrac{323}{675} - \tfrac7{15}\right) = \tfrac{127}{255} \approx 0.49804 \\
    z^{(4)} &= \tfrac{127}{255} + \left(\tfrac{508}{765}\right)^* \left(\tfrac{97283}{195075} - \tfrac{127}{255}\right) = \tfrac{32767}{65536} \approx 0.49999
  \end{align*}
  and so on.
  But suppose $p=q=1$. Then
  $P(z) = z^2 + 1 $ and $\tfrac{\partial P}{\partial z}(z) = 2z$,
  and Newton's method gives
  \begin{align*}
    z^{(0)} &= 0 \\
    z^{(1)} &= 0 + 0^* (1 - 0) = 1 \\
    z^{(2)} &= 1 + 2^* (2 - 1) = \infty \\
    z^{(3)} &= \infty + \infty^* (\infty - \infty) = \infty \text.
  \end{align*}
  Accordingly, our implementation actually outputs \texttt{inf}.
\end{example}

In machine learning applications, it is often useful to optimize a quantity involving the distribution computed by a program, by adjusting parameters that the program depends on. In \cref{ex:parity-producer-consumer} for instance, having computed the likelihood $L$ of some observed strings, we might want to maximize~it by adjusting~$p$. To adjust~$p$ by gradient descent, we can differentiate the MSPE denoted by the program with respect to $p$ and then solve the resulting (linear) system of equations for $\frac{\mathrm{d}L}{\mathrm{d}p}$. 

\subsection{Factor Graph Grammars}

Our implementation actually does not compile PERPL programs directly to MSPEs, but to \emph{factor graph grammars} (FGGs) \citep{chiang+riley:2020}, whose inference algorithm constructs an MSPE in turn and solves it while taking advantage of vectorized parallelism. FGGs are a formalism for describing probability models on recursive structures that is more general than factor graphs, case--factor diagrams \citep{mcallester+:2008} and sum--product networks \citep{poon+domingos:2011}. Details of the translation to FGGs are in \cref{sec:fggs}.

\newcommand\transform[2]{#1\llbracket #2\rrbracket}
\newcommand\defunc[1]{\transform{\mathcal{D}_{#1}}}
\newcommand\refunc[1]{\transform{\mathcal{R}_{#1}}}
\newcommand\transformsto[1]{\xrightarrow{#1}}
\newcommand\defuncsto{\transformsto{\mathcal{D}}}
\newcommand\refuncsto{\transformsto{\mathcal{R}}}

\newcommand\rec{\sigma} 
\newcommand\recbody{\bar\sigma} 
\newcommand\recone{\recbody\{\alpha:=\rec\}} 
\newcommand\nonrec{\varphi} 
\newcommand\nonrecone{\recbody\{\alpha:=\nonrec\}} 
\newcommand\occ{m} 
\newcommand\occtype{\varphi} 
\newcommand\free{\vec{y}} 
\newcommand\freesubi{e} 
\newcommand\freetype{\vec{\varphi}} 
\newcommand\freelength{{|\free_i|}}

\tikzset{
  dr/.style={
    every edge/.style={draw,->,>=latex,auto=left},
    every node/.style={execute at begin node={\strut}},inner ysep=0pt
  }
}

\section{Recursive Types}
\label{sec:recursive}

In this section, we extend the language to allow recursive types (\cref{sec:mu_definition}), and then, because our denotational semantics does not include recursive types and because known exact inference methods do not support recursive types, we show how to eliminate them (\crefrange{sec:eliminate_mu}{sec:infer_dr_sequence}).

\subsection{Definitions}
\label{sec:mu_definition}

As shown in \cref{fig:recursive}, we add iso-recursive types \citep{pierce} with introduction and elimination forms $\kw{fold}$ and $\kw{unfold}$ (also sometimes called $\kw{roll}$ and $\kw{unroll}$; they do not mean catamorphism and anamorphism).
\begin{figure}
\begin{align}
&\text{Syntax} && \makebox[33em][c]{$\displaystyle
  e \bnf {\kw{fold}_{\mu^t\alpha.\tau} e} \alt {\kw{unfold}_{\mu^t\alpha.\tau} x = e \kw{in} e'} \hspace{4em}
  \tau \bnf \mu^t \alpha.\, \tau \alt \alpha
$} \notag\\
&\text{Typing} && \makebox[33em][c]{$\displaystyle
\inferrule{\Gamma; \Delta \vdash e : \tau\{\alpha:=\mu^t\alpha.\tau\}}{\Gamma; \Delta \vdash {\kw{fold}_{\mu^t\alpha.\tau} e} : \mu^t\alpha.\,\tau} \qquad
\inferrule{\Gamma; \Delta \vdash e : \mu^t\alpha.\,\tau \\ \Gamma; \Delta', x: \tau\{\alpha:=\mu^t\alpha.\tau\} \vdash e' : \tau'}{\Gamma; \Delta, \Delta' \vdash {\kw{unfold}_{\mu^t\alpha.\tau} x = e \kw{in} e'} : \tau'}
$} \notag\\
&\text{Reduction} && \makebox[30em][c]{\hspace{3em}$\displaystyle
\label{e:unfold-fold}
\gamma \vdash {\kw{unfold}_{\mu^t\alpha.\tau} x = (\kw{fold}_{\mu^t\alpha.\tau} v) \kw{in} e'} \Longrightarrow \dirac{e'\{x:=v\}}
$} \\
&\makebox[5.5em][l]{Syntactic values} && \makebox[33em][c]{$\displaystyle
    v \bnf {\kw{fold}_{\mu^t\alpha.\tau} v}
$} \notag\\
&\makebox[5.5em][l]{Evaluation contexts} && \makebox[33em][c]{$\displaystyle
    C \bnf {\kw{fold} C} \alt {\kw{unfold} x = C \kw{in} e} \alt {\kw{unfold} x = e \kw{in} C}
$} \notag
\end{align}
\caption{Syntax and semantics of recursive types.}
\label{fig:recursive}
\end{figure}
Our definition of recursive types has two slightly unusual features.
First, $\kw{unfold}$ expressions have a scope~$e'$, which will be used below (\cref{sec:refunctionalization}) when eliminating recursive types. Invariably, $e'$ is a $\kw{case}$ expression, so we often write $\kw{case} {\kw{unfold} e} \kw{of} \ldots$ as shorthand for $\kw{unfold} x = \nolinebreak e \kw{in} {\kw{case} x \kw{of} \ldots\,}$.

Second, the superscript $t$ is a \emph{tag} drawn from any infinite set such as~$\mathbb{N}$.
It~is used to distinguish recursive types that would otherwise be considered equal, which is sometimes necessary for eliminating them (illustrated in \cref{ex:parity-producer-consumer-again} below).
As with nominal typing, we consider two recursive types that differ only in their tag to be two different types.
Additionally, we require that different recursive types must have different tags; that is, if the program contains both $\mu^t\alpha.\tau_1$ and $\mu^t\alpha.\tau_2$, then $\tau_1=\tau_2$.
(This requirement is easy to satisfy, and used in \cref{thm:evolve}.)
Tags can be inferred by using a different tag variable for each occurrence of~$\mu$, and unifying tag variables during type checking as necessary. (Our implementation of PERPL also infers tag polymorphism in global definitions and datatypes, by a straightforward generalization of Hindley--Milner--Damas type inference. It then eliminates this inferred polymorphism by monomorphization.)

\begin{example} \label{ex:odd}
In \cref{ex:odd_defunc,ex:odd_refunc}, we will transform the following simple program to eliminate the recursive type $\text{\lstinline{Nat}} \equiv \mu^t\alpha.\, (\ctor{\id{Zero}} \unittype \oplus \ctor{\id{Succ}} \alpha)$. This program samples a natural number $n$ from an exponential distribution (by flipping a coin repeatedly and counting how many flips are \lstinline{true} before the first \lstinline{false}), and tests whether $n$ is odd.
\begin{lstlisting}[belowskip=0pt]
data Nat = Zero | Succ Nat
define sample : Nat = if flip then fold (Succ sample) else fold Zero
define odd (n: Nat) : Bool =
  unfold n' = n in case n' of Zero => false | Succ m => not (odd m)
odd sample
\end{lstlisting}
\end{example}

\subsection{Eliminating Recursive Types}
\label{sec:eliminate_mu}

Recursive types such as \lstinline{Nat} above
cannot be handled by the conversion to an MSPE (\cref{sec:mspe}) because they would give rise to an infinite number of weight variables. However, they can often be transformed away.
In this section, we define these transformations and show how they work on \cref{ex:odd}.
In \cref{sec:correctness_mu_stub,sec:correctness_mu}, we argue that they preserve program meaning, even in the presence of probabilistic effects.
In \cref{sec:infer_dr_sequence}, we provide a greedy algorithm that finds a successful sequence of transformations whenever one exists.

Our transformations are closely related to defunctionalization and refunctionalization, which are well-known. However, our formulation is slightly nonstandard, and our usage of them in a language with probabilistic effects is novel. Most notably, whereas traditional defunctionalization often introduces a recursive type, our usage of it eliminates a recursive type.

\subsubsection{Defunctionalization}
\label{sec:defunctionalization}

Suppose we want to eliminate a recursive type $\rec = \mu^t \alpha.\, \recbody$ from a given program.
Let the $n$ occurrences of $\kw{fold}_\rec$ in the program be $\kw{fold}_\rec \occ_1, \ldots, \kw{fold}_\rec \occ_n$.
For each~$i$, let the nonglobal free variables of~$\occ_i$ form the tuple $\free_i:\freetype_i$.
We define a transformation $\defunc\rec{\cdot}$ that changes terms of type $\rec$ to type $\nonrec = {\ctor{\id{Fold}_1} \freetype_1} \oplus \cdots \oplus {\ctor{\id{Fold}_n} \freetype_n}$, where $\ctor{\id{Fold}_1},\dotsc,\ctor{\id{Fold}_n}$ are constructors.
Because our purpose is to eliminate~$\rec$, it~doesn't help to replace~$\rec$ by another type that contains~$\rec$, and replacing $\rec$ in $\nonrec$ would cause infinite regress, so we require that $\nonrec$ not contain~$\rec$.

On types, our transformation $\defunc\rec{\cdot}$ just changes occurrences of $\rec$ to~$\nonrec$, so it leaves $\freetype_i$ unchanged.
On programs, it~adds a global function $u_\rec$ ($u$~stands for unfold), of type $\nonrec \llp \nonrecone$:
\begin{equation}
\label{e:defunc_u}
  \kw{define} u_\rec = \lambda x\colon\nonrec.\, {\kw{case} x \kw{of} {\ctor{\id{Fold}_1} \free_1} \casearrow \defunc\rec{\occ_1} \casealt \cdots \casealt {\ctor{\id{Fold}_n} \free_n} \casearrow \defunc\rec{\occ_n}}
\end{equation}
On terms, it~postpones from $\kw{fold}_\rec$ to $\kw{unfold}_\rec$ the work done by $u_\rec$:
\begin{align}
\label{e:defunc_terms}
  \defunc\rec{\kw{fold}_\rec \occ_i} &= {\ctor{\id{Fold}_i} \free_i} &
  \defunc\rec{\kw{unfold}_\rec x = e \kw{in} e'} &= {\kw{let} x = u_\rec~\defunc\rec{e} \kw{in} \defunc\rec{e'}}
\end{align}
The other cases are uninteresting.

If the $\occ_i$ are all abstractions, and if the $e'$ in each $\kw{unfold}_\rec x = e \kw{in} e'$ is an application of~$x$, then the $\mathcal{D}_\rec$ transformation is equivalent to defunctionalization \citep{reynolds:1972,danvy-defunctionalization}, with $u_\rec$ usually called \lstinline{apply}$_\rec$. Although the original purpose of defunctionalization was to get rid of $\lambda$-abstractions by postponing them to their applications, in general it can be used to get rid of any introduction forms by postponing them to their elimination forms; here, we use it to get rid of $\kw{fold}$s by postponing them to their $\kw{unfold}$s.

\begin{example} \label{ex:odd_defunc}
  In \cref{ex:odd}, there are two occurrences of $\kw{fold}$, inside $\id{sample}$. Since neither of them has any free variables, $\nonrec = \ctor{\id{Fold}_1}\unittype \oplus \ctor{\id{Fold}_2}\unittype$ and so we can defunctionalize $\id{Nat}$. Create new types \lstinline{NatFolded} and \lstinline{NatUnfolded}, which take the roles of $\nonrec$ and $\nonrecone$ above, respectively:
\begin{lstlisting}
data NatFolded = in1 | in2
data NatUnfolded = Zero | Succ NatFolded
\end{lstlisting}
The bodies of the $\kw{fold}$ expressions move into a new
function \lstinline{u_Nat}, which plays the role of $u_\rec$:
\begin{lstlisting}
define u_Nat (n: NatFolded) : NatUnfolded =
  case n of in1 => Succ sample | in2 => Zero
\end{lstlisting}
The rest of the program is transformed by \cref{e:defunc_terms}:
\begin{lstlisting}[belowskip=0pt]
define sample : NatFolded = if flip then in1 else in2
define odd (n: NatFolded) : Bool =
  let n' = u_Nat n in case n' of Zero => false | Succ m => not (odd m)
odd sample
\end{lstlisting}
\end{example}

\subsubsection{Refunctionalization}
\label{sec:refunctionalization}

Again suppose we want to eliminate a recursive type $\rec = \mu^t \alpha. \recbody$ from a given program.
Instead of moving the work done at $\kw{fold}$ing, we can move the work done at $\kw{unfold}$ing.
Without loss of generality, assume that every $\kw{unfold}_\rec$ expression binds the same variable name~$x$, and let the $n$ occurrences of $\kw{unfold}_\rec$ in the program be $\kw{unfold}_\rec x = \dotsb \kw{in} \occ_i$, where $1\le i\le n$.
For each~$i$, let the nonglobal free variables of the body $\occ_i:\occtype_i$, except~$x$, form the tuple $\free_i : \freetype_i$.
We~define a transformation $\refunc\rec{\cdot}$ that changes terms of type $\rec$ to type
\begin{equation} \label{e:refunc_type}
    \nonrec = (\freetype_1 \llp \occtype_1) \with \cdots \with (\freetype_n \llp \occtype_n) \text.
\end{equation}
Again because our purpose is to eliminate~$\rec$, we require that $\nonrec$ not contain~$\rec$.

On types, the transformation $\refunc\rec\cdot$ just changes occurrences of $\rec$ to $\nonrec$, so it leaves $\freetype_i$ and $\occtype_i$ unchanged.
On programs, it~adds a global function $f_\rec$ ($f$~stands for fold), of type $\nonrecone \llp \nonrec$:
\begin{equation}
  \kw{define} f_\rec = \lambda x\colon \nonrecone.\,
    \withterm{ \lambda \free_1.\, \refunc\rec{\occ_1}, \ldots,
               \lambda \free_n.\, \refunc\rec{\occ_n} }
  \label{e:refunc_f}
\end{equation}
On terms, it~moves from $\kw{unfold}_\rec$ to $\kw{fold}_\rec$ the work done by $f_\rec$:
\begin{align}
  \refunc\rec{\kw{fold}_\rec e} &= f_\rec~\refunc\rec{e} &
  \refunc\rec{\kw{unfold}_\rec x = e \kw{in} \occ_i} &= (\refunc\rec{e}.i)~\free_i \label{e:refunc_terms}
\end{align}
This is essentially the same as refunctionalization \citep{danvy-defunctionalization,danvy-refunctionalization}; in \citeauthor{danvy-refunctionalization}'s terminology, we have \emph{disentangled} the $\kw{unfold}_\rec$ expressions \pagebreak[1]into the apply functions $\lambda \free_i.\, \refunc\rec{\occ_i}$ and \emph{merged} them using an additive tuple (which works even if they have different types). The original definition of refunctionalization assumed that $\recbody$ was a union type and created a function for each case; we create a single function $f_\rec$ for the same purpose.

\begin{example} \label{ex:odd_refunc}
In \cref{ex:odd}, there is just one $\kw{unfold}$ expression. It has type $\kw{Bool}$, and its body has no free variables other than \lstinline{n'}. So instead of defunctionalizing $\id{Nat}$, we can also refunctionalize~it. Create new types \lstinline{NatFolded} and \lstinline{NatUnfolded}, which take the roles of $\nonrec$ and $\nonrecone$ above, respectively:
\begin{lstlisting}
type NatFolded = Unit -> Bool
data NatUnfolded = Zero | Succ NatFolded
\end{lstlisting}
(According to \cref{e:refunc_type}, \lstinline{NatFolded} should be a unary additive tuple type, which has no notation in this paper. To keep the presentation simple, we just omit unary tuple types here.)

The body of the $\kw{unfold}$ expression moves into a new function \lstinline{f_Nat}, which plays the role of $f_\rec$:
\begin{lstlisting}
define f_Nat (n': NatUnfolded) : NatFolded =
  \\(): Unit. case n' of Zero => false | Succ m => not (odd m)
\end{lstlisting}
The rest of the program is transformed by \cref{e:refunc_terms}:
\begin{lstlisting}[belowskip=0pt]
define sample : NatFolded = if flip then f_Nat (Succ sample) else f_Nat Zero
define odd (n: NatFolded) : Bool = n ()
odd sample
\end{lstlisting}
\end{example}

\subsection{Correctness}
\label{sec:correctness_mu_stub}

It is easy to show that the transforms $\mathcal{D}$ and~$\mathcal{R}$ preserve types; see \cref{sec:defunc_preserves_types,sec:refunc_preserves_types}.
In \cref{sec:correctness_mu}, we~show that they furthermore preserve the overall meaning of a program (a distribution over $\unittype$, without loss of generality).
Intuitively, this is because they merely change where the work done by $u_\rec$ or~$f_\rec$ is expressed:
$\mathcal{D}_\rec$~moves the work done by~$u_\rec$ from $\kw{fold}_\rec$ to $\kw{unfold}_\rec$, whereas $\mathcal{R}_\rec$ moves the work done by~$f_\rec$ from $\kw{unfold}_\rec$ to $\kw{fold}_\rec$.

\subsection{Example: CFG Parsing}
\label{sec:parsing_examples}
\label{ex:parity-producer-consumer-again}

We next show how defunctionalization and refunctionalization together enable PERPL to handle some programs that neither transformation alone can.
Specifically, we eliminate recursive types from the CFG parser in \cref{ex:parity-producer-consumer}.
This larger example demonstrates how the classic CYK algorithm is recovered, as well as some additional aspects of the transformations.

We repeat \cref{ex:parity-producer-consumer} below, with a few more details spelled out.
First, we write \lstinline{fold}s and \lstinline{unfold}s explicitly. 
Second, as mentioned above, tag inference automatically distinguishes two uses of \lstinline{String}, which we call \lstinline{String[1]} and \lstinline{String[2]}. If it were not for tags, we would not be able to handle this program: it would not be possible to defunctionalize \lstinline{String} because of the free variable $\id{acc}$ in \lstinline{fold (Cons A acc)}, and it would not be possible to refunctionalize \lstinline{String} either because of the free variable $\id{xs}$ inside the second $\kw{case} {\kw{unfold} \id{ys}}$ expression.
  
Finally, some variables are used nonlinearly. For the nonrecursive type \lstinline{Terminal}, this is harmless; \cref{sec:robust} discusses how to handle it. But for recursive types, we need to add a \lstinline{discard} function to turn non-use into linear use; \cref{sec:affine} shows how this can be done automatically instead.

\begin{lstlisting}
data String[1] = Nil[1] | Cons[1] Terminal String[1]
data String[2] = Nil[2] | Cons[2] Terminal String[2]

define gen (lhs: Nonterminal) (acc: String[1]) : String[1] =
  case lhs of S => if flip then gen S (gen S acc) else fold (Cons[1] A acc)

define equal (xs: String[1]) (ys: String[2]) : Bool =
  case unfold xs of
    Nil[1] => case unfold ys of Nil[2] => true | Cons[2] y ys => discard[2] ys
    Cons[1] x xs => case unfold ys of Nil[2] => discard[1] xs
                                    Cons[2] y ys => x = y and equal xs ys

define discard[1] (xs: String[1]) : Bool =
  case unfold xs of Nil[1] => false | Cons[1] x xs => discard[1] xs
define discard[2] (ys: String[2]) : Bool =
  case unfold ys of Nil[2] => false | Cons[2] y ys => discard[2] ys

equal (gen S (fold Nil[1]))
      (fold (Cons[2] A (fold (Cons[2] A (fold (Cons[2] A (fold Nil[2])))))))
\end{lstlisting}

There are four occurrences of $\kw{fold}_{\text{\lstinline{String[2]}}}$, all on the last line. Since none of them have any free variables, we can defunctionalize \lstinline{String[2]}. Create new types \lstinline{String[2]Folded} and \lstinline{String[2]Unfolded}, which take the roles of $\nonrec$ and $\nonrecone$ in \cref{sec:defunctionalization}:
\begin{lstlisting}
data String[2]Folded = in1 | in2 | in3 | in4
data String[2]Unfolded = Nil[2] | Cons[2] Terminal String[2]Folded
\end{lstlisting}
The four values of type \lstinline{String[2]Folded} can be thought of as positions between the symbols of the input string $aaa$ (including the beginning and end of the string).
The new function \lstinline{u_String[2]} can then be thought of as taking a string position and returning the input symbol at that position, together with its successor position.
\begin{lstlisting}
define u_String[2] (ys: String[2]Folded) : String[2]Unfolded =
  case ys of in1 => Cons[2] A in2 | in2 => Cons[2] A in3
             in3 => Cons[2] A in4 | in4 => Nil[2]
\end{lstlisting}
Functions \lstinline{gen} and \lstinline{discard[1]} remain the same, but \lstinline{equal}, \lstinline{discard[2]}, and the last line become{\divide\smallskipamount 3
\begin{lstlisting}
define equal (xs: String[1]) (ys: String[2]Folded) : Bool =
  case unfold xs of
    Nil[1] => case u_String[2] ys of Nil[2] => true | Cons[2] y ys => discard[2] ys
    Cons[1] x xs => case u_String[2] ys of Nil[2] => discard[1] xs
                                       Cons[2] y ys => x = y and equal xs ys

define discard[2] (ys: String[2]Folded) : Bool =
  case u_String[2] ys of Nil[2] => false | Cons[2] y ys => discard[2] ys

equal (gen S (fold Nil[1])) in1
\end{lstlisting}}

As for \lstinline{String[1]}, it cannot be defunctionalized, because \lstinline{fold (Cons[1] A acc)} has a free variable \lstinline{acc} of type \lstinline{String[1]}.
Can it be refunctionalized?
There are two $\kw{case} {\kw{unfold}_{\text{\lstinline{String[1]}}}}$ expressions, one in \lstinline{equal} and one in \lstinline{discard[1]}. Both bodies have type $\kw{Bool}$; the first has a free variable \lstinline{ys} with type \lstinline{String[2]Folded}, and the second has no free variables. So yes, we can refunctionalize \lstinline{String[1]} by creating new types \lstinline{String[1]Folded} and \lstinline{String[1]Unfolded}, which take the roles of $\nonrec$ and $\nonrecone$ in \cref{sec:refunctionalization}:
\begin{lstlisting}
type String[1]Folded = (String[2]Folded -> Bool) & (Unit -> Bool)
data String[1]Unfolded = Nil[1] | Cons[1] Terminal String[1]Folded
\end{lstlisting}
A \lstinline{String[1]Folded} can be thought of as a string accessed through two ``methods'' \citep[cf.][]{rendel-automatic}: the first (with type \lstinline{String[2]Folded->Bool}) compares it with a suffix of the input string, and the second (with type \lstinline{Unit->Bool}) always returns $\kwe{false}$.
These methods are implemented in the new function \lstinline{f_String[1]}:
\begin{lstlisting}
define f_String[1] (xs: String[1]Unfolded) : String[1]Folded = 
  <\\ys: String[2]Folded. case xs of
     Nil[1] => case u_String[2] ys of Nil[2] => true | Cons[2] y ys => discard[2] ys
     Cons[1] x xs => case u_String[2] ys of Nil[2] => discard[1] xs
                                        Cons[2] y ys => x = y and equal xs ys,
    \\(): Unit. case xs of Nil[1] => false | Cons[1] x xs => discard[1] xs>

define equal (xs: String[1]Folded) (ys: String[2]Folded) : Bool = xs.1 ys
define discard[1] (xs: String[1]Folded) : Bool = xs.2 ()
\end{lstlisting}
And occurrences of $\kw{fold}_{\text{\lstinline{String[1]}}}$ are changed to calls to \lstinline{f_String[1]}.
\begin{lstlisting}
define gen (lhs: Nonterminal) (acc: String[1]Folded) : String[1]Folded =
  case lhs of S => if flip then gen S (gen S acc) else f_String[1] (Cons[1] A acc)

equal (gen S (f_String[1] Nil[1])) in1
\end{lstlisting}

\newcommand{\semval}[1]{\overline{#1}}
No recursive types remain, so this program converts to an MSPE with a finite number of weight variables.
For an input string $w = w_1 \dotsm w_n$, there are $O(n^2)$ equations in $O(n^2)$ weight variables, and each equation has $O(n)$ terms.
They can be ordered so that each weight variable depends only on earlier variables, and solving them is equivalent to the CYK algorithm.

To show this equivalence with less clutter, we define
\begin{equation}
  \semval{\ell} \enspace=\enspace \withdenotation{1}{(\arrowdenotation{\plusdenotation{\ctor{\id{Fold}_\ell}}{\unitdenotation}}{\plusdenotation{\ctor{\id{True}}}{\unitdenotation}})} \enspace\in\enspace \denote{\text{\lstinline{String[1]Folded}}} \qquad 1 \le \ell \le n+1.
\end{equation}
The set $\denote{\text{\lstinline{String[1]Folded}}} = \denote{\text{\lstinline{(String[2]Folded -> Bool) & (Unit -> Bool)}}}$ has $2n+4$ possible values (not $O(2^n)$, thanks to our treatment of functions as nondeterministic pairs).
Of these, the $\semval{\ell}$ are the interesting ones; each corresponds to a generated suffix that is equal to the input suffix starting at position~$\ell$. Thus, it will be easiest to think of them as string positions.

Then the equations simplify to
\begin{align}
  \eta(\id{gen})(\arrowdenotation{\plusdenotation{\ctor{\id{S}}}{\unitdenotation}}{\arrowdenotation{\semval{k}}{\semval{i}}}) &=
  \sum_j p \cdot \eta(\id{gen})(\arrowdenotation{\plusdenotation{\ctor{\id{S}}}{\unitdenotation}}{\arrowdenotation{\semval{j}}{\semval{i}}}) \cdot \eta(\id{gen})(\arrowdenotation{\plusdenotation{\ctor{\id{S}}}{\unitdenotation}}{\arrowdenotation{\semval{k}}{\semval{j}}}) \notag \\
  &\quad + q \cdot \eta(\text{\lstinline{f_String[1]}})(\arrowdenotation{\plusdenotation{\ctor{\id{Cons}}}{(\plusdenotation{\ctor{\id{A}}}{\unitdenotation}, \semval{k})}}{\semval{i}}) \label{eq:cky1} \\
  \eta(\text{\lstinline{f_String[1]}})(\arrowdenotation{\plusdenotation{\ctor{\id{Cons}}}{(\plusdenotation{\ctor{\id{A}}}{\unitdenotation}, \semval{k})}}{\semval{i}}) &= \mathbb{I}[w_i = a \land i = k-1] \label{eq:cky2}
\end{align}
and the probability of the whole program being true is $\eta(\id{gen})(\arrowdenotation{\plusdenotation{\ctor{\id{S}}}{\unitdenotation}}{\arrowdenotation{\semval{n+1}}{\semval{1}}})$.
\Cref{eq:cky2} corresponds to the initialization in CYK\@; it gives weight 1 to occurrences of terminal symbols.
\Cref{eq:cky1} corresponds to the main triple loop in CYK\@; it computes the weight of all derivations $S \Rightarrow^\ast w_i \cdots w_{k-1}$.

\subsection{Inferring the Sequence of Transformations}
\label{sec:infer_dr_sequence}

In the previous section, we saw that not all recursive types are amenable to both defunctionalization and refunctionalization.
Furthermore, some transformations can preclude others. For example, even though \lstinline{String[2]} could be refunctionalized, it would preclude refunctionalizing \lstinline{String[1]} because, in function \lstinline{equal}, it would change the type of variable \lstinline{ys} from \lstinline{String[2]} to \lstinline{(Unit -> Bool) & ((Terminal ** String[1]) -> Bool) & (Unit -> Bool)}. This type in turn contains \lstinline{String[1]}, and \lstinline{ys} occurs free within the ${\kw{case} {\kw{unfold}_{\text{\lstinline{String[1]}}}} \id{xs}}$ expression.
Although it may seem that such dependencies between transformations would make it difficult to find a successful sequence of transformations that eliminates all recursive types,
the good news is that there is a simple greedy algorithm for deciding whether a successful sequence of transformations exists, and if so, to find~it.
\begin{trivlist}
\item While there are any recursive types remaining:
  \begin{itemize}
  \item If there is a recursive type $\rec$ such that $\defunc\rec{\rec}$ contains no recursive type, defunctionalize $\rec$.
  \item If there is a recursive type $\rec$ such that $\refunc\rec{\rec}$ contains no recursive type, refunctionalize~$\rec$.
  \item Else, there is no successful sequence of transformations.
  \end{itemize}
\end{trivlist}
We give the formal statement of correctness here and defer its proof to the appendix.
\begin{definition}
  A \emph{DR-sequence} $S$ for a program~$p$ is a sequence $\mathcal{F}^{(1)}, \dots, \mathcal{F}^{(n)}$,
  where each $\mathcal{F}^{(i)}$ is either $\mathcal{D}_\rec$ or $\mathcal{R}_\rec$ for some recursive type $\rec$.
  We say that $S$ is \emph{successful} if
  it is empty and $p$ has no recursive types,
  or if $S = \mathcal{F} \cdot S'$, where
  $\transform{\mathcal{F}}{p}$ is well-defined (that is, $\transform{\mathcal{F}}{\rec}$ does not contain~$\rec$) and
  $S'$ is a successful DR-sequence for~$\transform{\mathcal{F}}{p}$.
\end{definition}
\begin{proposition}
\label{thm:infer_dr_sequence}
  Let $p$ be a program with at least one recursive type and a successful DR-sequence.
  \begin{enumerate}
  \item\label{item:greedy1} There is a recursive type $\rec$ and a transformation $\mathcal{F} \in \{\mathcal{D}, \mathcal{R}\}$ such that $\transform{\mathcal{F}_\rec}{\rec}$ contains no recursive type.
  \item\label{item:greedy2} For any such $\rec$ and $\mathcal{F}$, the program $\transform{\mathcal{F}_\rec}{p}$ is well-defined and has a successful DR-sequence.
  \end{enumerate}
\end{proposition}
\begin{proof}
    See \cref{sec:proof_infer_dr_sequence}.
\end{proof}

A note on complexity:
The transformations $\mathcal{D}$ and~$\mathcal{R}$ only move code around, so any DR-sequence will not make the program much bigger.
However, some successful DR-sequences may lead to more efficient inference than others.
Moreover, there exist programs whose size is blown up exponentially by the monomorphization that needs to take place before the DR-sequence.

\section{Further Examples: Pushdown Automata}
\label{sec:pda_examples}

PERPL automates the derivation of nonobvious inference algorithms from natural probabilistic models expressed using unbounded recursion.
We've seen already how
\cref{ex:parity-producer-consumer-again} derived the CYK algorithm from code that generates strings from a PCFG and compares them with an input string.
In this section, we present three further examples, all related to pushdown automata (PDAs), to illustrate what kinds of recursive data structures PERPL can and cannot handle.

\subsection{One Stack}
\label{eg:pda}

The parser of \cref{sec:parsing_examples} could also be implemented by converting the weighted CFG to a weighted nondeterministic PDA and running the PDA\@.
Define a type for stacks:
\begin{lstlisting}
data Stack = StkNil | StkCons Nonterminal Stack
\end{lstlisting}
(Our implementation of PERPL supports polymorphic datatypes such as \lstinline{List} and monomorphizes their uses such as \lstinline{List} \lstinline{Nonterminal}.)
The stack is initially just one~\lstinline{S}\@.
The \lstinline{run_pda} function below runs a nondeterministic PDA on a string.
This PDA keeps no state, and accepts by empty stack.
\begin{lstlisting}
define run_pda (zs: Stack) (ws: String) : Bool =
  case unfold zs of
    StkNil => case unfold ws of Nil => true                             -- accept
                                Cons w ws => discard ws                 -- reject
    StkCons z zs => z = S and                                           -- pop S
      if flip then                                                      
        run_pda (fold (StkCons S (fold (StkCons S zs)))) ws             -- push SS
      else                                                              
        case unfold ws of Nil => discard_stk zs                         -- reject
                          Cons w ws => w = A and run_pda zs ws          -- scan a
define discard (ws: String) =
  case unfold ws of Nil => false | Cons w ws => discard ws
define discard_stk (zs: Stack) =
  case unfold zs of StkNil => false | StkCons z zs => discard_stk zs

run_pda (fold (StkCons S (fold StkNil)))
        (fold (Cons A (fold (Cons A (fold (Cons A (fold Nil)))))))
\end{lstlisting}
For simplicity, we have hard-coded the PDA into this program; it~would be easy to add more symbols and transitions.

Unlike our previous examples using recursive data, a program that uses a stack like this is not structured as a producer--consumer pipeline.
Nevertheless, \lstinline{String} can be defunctionalized as in \cref{ex:parity-producer-consumer-again},
yielding the following, where
\lstinline{StringFolded}, \lstinline{StringUnfolded}, \lstinline{u_String}, and \lstinline{discard} are as before:
\begin{lstlisting}
define run_pda (zs: Stack) (ws: StringFolded) : Bool =
  case unfold zs of
    StkNil => case u_String ws of Nil => true                           -- accept
                                  Cons w ws => discard ws               -- reject
    StkCons z zs => z = S and                                           -- pop S
      if flip then                                                       
        run_pda (fold (StkCons S (fold (StkCons S zs)))) ws             -- push SS
      else                                                               
        case u_String ws of Nil => discard_stk zs                       -- reject
                            Cons w ws => w = A and run_pda zs ws        -- scan a
    
run_pda (fold (StkCons S (fold StkNil))) in1
\end{lstlisting}
Although \lstinline{Stack} cannot be defunctionalized because there are two occurrences of $\kw{fold}_\id{Stack}$ with a free variable \lstinline{zs} of type \lstinline{Stack}, the two occurrences of $\kw{unfold}_\id{Stack}$ meet the criterion for refunctionalization. So~we refunctionalize \lstinline{Stack} by creating new types \lstinline{StackFolded} and \lstinline{StackUnfolded}:
\begin{lstlisting}
type StackFolded = (StringFolded -> Bool) & (Unit -> Bool)
data StackUnfolded = StkNil | StkCons Nonterminal StackFolded

define f_Stack (zs: StackUnfolded) : StackFolded =
  <\\ws: StringFolded. case zs of
       StkNil => case u_String ws of Nil => true                        -- accept
                                     Cons w ws => discard ws            -- reject
       StkCons z zs => z = S and                                        -- pop S
         if flip then                                                   
           run_pda (f_Stack (StkCons S (f_Stack (StkCons S zs)))) ws    -- push SS
         else                                                           
           case u_String ws of Nil => discard_stk zs                    -- reject
                               Cons w ws => w = A and run_pda zs ws,    -- scan a
    \\(): Unit. case zs of StkNil => false | StkCons z zs => discard_stk zs>
  
define run_pda (zs: StackFolded) (ws: StringFolded) : Bool = zs.1 ws
define discard_stk (zs: StackFolded) : Bool = zs.2 ()
        
run_pda (f_Stack (StkCons S (f_Stack StkNil))) in1
\end{lstlisting}
Solving the corresponding MSPE is similar to Lang's algorithm for PDA recognition \citep{lang:1974}; as a matter of fact, refunctionalization has rederived a faster version of the algorithm found only recently \citep{butoi+:2022}. As observed by \citet{danvy-refunctionalization}, this final program resembles the final program in \cref{ex:parity-producer-consumer-again}, but written in continuation-passing style: the \lstinline{Stack} data structure has been replaced by \lstinline{StackFolded}, the type of continuations expecting a \lstinline{StringFolded} or a \lstinline{Unit}.

\subsection{Two Stacks}
\label{sec:2pda}

An example of a program where defunctionalization and refunctionalization fail would be one that processes strings while maintaining two stacks instead of one.
Since a two-stack PDA is equivalent to a Turing machine, such a program cannot be tractable, and it is desirable for PERPL to reject it at compile time.
We omit a full program listing, but it would have a form like:
\begin{lstlisting}
define run_2pda (left: Stack[1]) (right: Stack[2]) (ws: String) : Bool =
  case unfold left of
    $\cdots$ case unfold right of
          $\cdots$ case unfold ws of
                $\cdots$ fold (StkCons[1] z left) $\cdots$  -- push onto left stack
                $\cdots$ fold (StkCons[2] z right) $\cdots$ -- push onto right stack
\end{lstlisting}

As in \cref{eg:pda}, the push operations render both \lstinline{Stack} types non-defunctionalizable.
Moreover, here \lstinline{case unfold left} has free variable \lstinline{right}, so refunctionalizing \lstinline{Stack[1]} would change it into a type containing \lstinline{Stack[2]}, while \lstinline{case unfold right} has free variable \lstinline{left}, so refunctionalizing \lstinline{Stack[2]} would change it into a type containing \lstinline{Stack[1]}. Thus, by \cref{thm:infer_dr_sequence}, neither type is refunctionalizable.

\subsection{A Stack of Stacks}
\label{sec:epda}

In contrast to the two-stack PDA in \cref{sec:2pda}, if multiple stacks are nested, refunctionalization does succeed, and it turns out to convert nested stacks into nested continuations.
\emph{Embedded PDAs} (EPDAs) are a generalization of PDAs where memory is structured not as a stack of symbols but as a stack of \emph{stacks} of symbols \citep{vijay-shanker:1994}. They are equivalent to tree-adjoining grammars (TAGs), which generalize CFGs. Moreover, TAGs can be parsed using a CYK-style algorithm \citep{vijay-shankar-joshi-1985-computational}. Both of these results were major achievements in 1980s computational linguistics. Yet we show below that PERPL automatically takes a program that runs an EPDA and converts it into equations for a CYK-style TAG parser \citep[cf.][]{alonso-etal-2000-redefinition}.

In the following code, we append a \lstinline{*} to names relating to the stack-of-stacks.
\begin{lstlisting}
data Stack  = StkNil  | StkCons  Nonterminal Stack
data Stack* = StkNil* | StkCons* Stack       Stack*
\end{lstlisting}
Whenever the top stack is empty, it is automatically popped.
A move of the EPDA consists of popping a symbol \lstinline{z} from the stack and then doing one of the following:
\begin{lstlisting}
data Action =
  Pop                                -- do nothing further
  Scan Terminal                      -- scan terminal from input string
  Push Nonterminal Nonterminal       -- push y then x onto top stack
  PushAbove Nonterminal Nonterminal  -- push y then push stack [x] above top stack
  PushBelow Nonterminal Nonterminal  -- push x then push stack [y] below top stack
\end{lstlisting}
We assume a function called \lstinline{transition} with type \lstinline{Nonterminal -> Action}.
The following function decides whether the EPDA accepts a string. To sidestep the complication of explicit \lstinline{discard} functions, it~simply returns $\unitterm$ if the EPDA accepts and $\kw{fail}$s otherwise.
\begin{lstlisting}
define run_epda (zs*: Stack*) (ws: String) : Unit = case zs* of
  StkNil* => case ws of Nil => () | Cons w ws => fail
  StkCons* z* zs* => case z* of
    StkNil => run_epda zs* ws
    StkCons z zs => case transition z of
      Pop => run_epda (StkCons* zs zs*) ws
      Scan a => case ws of Nil => fail
                           Cons w ws => if w = a then run_epda (StkCons* zs zs*) ws
                                                 else fail
      Push x y => run_epda (StkCons* (StkCons x (StkCons y zs)) zs*) ws
      PushAbove x y =>
        run_epda (StkCons* (StkCons x StkNil) (StkCons* (StkCons y zs) zs*)) ws
      PushBelow x y =>
        run_epda (StkCons* (StkCons x zs) (StkCons* (StkCons y StkNil) zs*)) ws

run_epda (StkCons* (StkCons Z StkNil) StkNil*) (Cons A (Cons A (Cons A Nil)))
\end{lstlisting}

\noindent
Like in \cref{eg:pda}, we defunctionalize \lstinline{String} and refunctionalize \lstinline{Stack} and \lstinline{Stack*}, then simplify:
\begin{lstlisting}
type StackFolded = Stack*Folded -> StringFolded -> Unit
type Stack*Folded = StringFolded -> Unit

define f_StkNil* : Stack*Folded = \\ws: StringFolded. case u_String ws of
  Nil => () | Cons w ws => fail

define f_StkNil : StackFolded = \\zs*: Stack*Folded. \\ws: StringFolded. zs* ws
define f_StkCons (z: Nonterminal) (zs: StackFolded) : StackFolded =
  \\zs*: Stack*Folded. \\ws: StringFolded. case transition z of
    Pop => zs zs* ws
    Scan a => case u_String ws of Nil => fail
                                  Cons w ws => if w = a then zs zs* ws else fail
    Push x y => f_StkCons x (f_StkCons y zs) zs* ws
    PushAbove x y => f_StkCons x f_StkNil (f_StkCons y zs zs*) ws
    PushBelow x y => f_StkCons x zs (f_StkCons y f_StkNil zs*) ws

f_StkCons Z f_StkNil f_StkNil* in1
\end{lstlisting}

As before, \lstinline{StringFolded} can be thought of as a position in the input string, as can \lstinline{Stack*Folded}. Then \lstinline{StackFolded} can be thought of as a pair of string positions.
The most computationally expensive expression is \lstinline{f_StkCons x (f_StkCons y zs)}, which represents $O(n^4)$ weight variables because it and its free variable \lstinline{zs} both have type \lstinline{StackFolded}. Each of these weight variables is a sum over $O(n^2)$ possible values of the subexpression \lstinline{f_StkCons y zs}. Thus the program denotes a system of equations of size $O(n^6)$.
Readers familiar with TAG will recognize $O(n^6)$ as the time complexity of CYK-style TAG parsing.
Indeed, as we detail in \cref{sec:epda_details}, when this program is converted to an FGG\@, the FGG is essentially a TAG\@, and PERPL's inference amounts to CYK-style TAG parsing.

Above, we noted that the result of refunctionalizing \cref{eg:pda} was written in continuation-passing style. The twice-refunctionalized program above is written in extended continuation-passing style \citep{danvy+filinski:1990}, with \lstinline{StackFolded} being the type of continuations and \lstinline{Stack*Folded} being the type of metacontinuations.
Moreover, generalizations of EPDAs to automata with $k$-nested stacks (so-called $k$-EPDAs) can be expressed by similar PERPL programs and successfully compiled.
We hence conjecture more generally that PERPL's program transformations relate the \emph{control hierarchy} of context-free grammars \citep{weir-geometric} to the (serendipitously same-named) \emph{control hierarchy} of continuations \cite{sitaram-control,danvy+filinski:1990}.

\section{Benchmark}

We empirically demonstrate the improved speed and accuracy enabled by PERPL\@, by comparing against two existing probabilistic languages: WebPPL \citep{goodman-design}, which supports unbounded recursive calls and data, but does not perform exact inference on them; and Dice \citep{holtzen2020dice}, which performs exact inference, but only supports bounded loops.
Because so much of PERPL's expressive power lies in its novel support for unbounded recursive data, it~is tricky to find a benchmark that allows a quantitative comparison.
We took the PCFG parsing problem in \cref{ex:parity-producer-consumer} and expressed it as idiomatically as we could (\cref{s:pcfg-gen}): in WebPPL using unbounded recursion, and in Dice by unrolling a loop that builds up long parses from shorter ones.
All experiments were performed on a 2.8\thinspace GHz CPU with 16\thinspace GB of RAM\@.

\paragraph{Comparison with approximate inference}

\begin{figure}
\centering
\newcommand\WebPPL[1]{%
        \begin{axis}[width=3in, height=1.6in, scale only axis, axis lines=left, tick align=outside,
                     title={String length = #1},
                     scatter/classes={rejection={mark=x,teal},
                                      SMC={mark=+,lime!50!black},
                                      incrementalMH={mark=asterisk,red},
                                      MCMC={mark=star,orange}}]
            \addplot [scatter, only marks, scatter src=explicit symbolic]
                table [col sep=comma, x=time, y=prob, meta=method]
                {benchmarks/pcfg-wppl-#1.csv};
            \csvreader[separator=tab, no head, filter strcmp={\csvcoli}{#1}]
                {benchmarks/pcfg-exact.tsv}{}{\edef\prob{\csvcolii}}
            \addplot [blue] coordinates { (0,0) }
                (axis cs:\pgfkeysvalueof{/pgfplots/xmin}, \prob) --
                (axis cs:\pgfkeysvalueof{/pgfplots/xmax}, \prob);
            \csvreader[separator=comma, filter strcmp={\csvcolix}{#1}]
                {benchmarks/pcfg-perpl.csv}{}{\edef\pplTime{\csvcolii}}
            \addplot [perpl] coordinates { (\pplTime,\prob) }
                (axis cs:\pplTime, \pgfkeysvalueof{/pgfplots/ymin}) --
                (axis cs:\pplTime, \pgfkeysvalueof{/pgfplots/ymax});
            \csvreader[separator=comma, filter strcmp={\csvcolix}{#1}]
                {benchmarks/pcfg-dice.csv}{}{\edef\diceTime{\csvcolii}}
            \addplot [dice] coordinates { (\diceTime,\prob) }
                (axis cs:\diceTime, \pgfkeysvalueof{/pgfplots/ymin}) --
                (axis cs:\diceTime, \pgfkeysvalueof{/pgfplots/ymax});
        \end{axis}}%
\pgfkeys{/pgfplots/xlabel near ticks,
         /pgfplots/ylabel near ticks,
         /pgfplots/xmin=0, /pgfplots/ymin=0,
         /pgfplots/vertical legend/.style={
             legend image code/.code={\draw [mark repeat=2,mark phase=2,#1] plot coordinates {(0,-0.2cm) (0,0) (0,0.2cm)};}
         }
}%
\resizebox{\textwidth}{!}{%
\begin{tabular}{@{}c@{}c@{}c@{}}
    \begin{tikzpicture}
        \pgfkeys{/pgfplots/ylabel={Probability estimate},
                 /pgfplots/ymax=1.9505654407077242e-3}
        \WebPPL{5}
    \end{tikzpicture}
&
    \begin{tikzpicture}
        \pgfkeys{/pgfplots/ymax=4e-6}
        \WebPPL{10}
    \end{tikzpicture}
\\
    \begin{tikzpicture}
        \pgfkeys{/pgfplots/xlabel={Parsing time (seconds)},
                 /pgfplots/ylabel={Probability estimate},
                 /pgfplots/ymax=1e-7}
        \WebPPL{15}
    \end{tikzpicture}
&
    \begin{tikzpicture}
        \pgfkeys{/pgfplots/xlabel={Parsing time (seconds)},
                 /pgfplots/xmax=270,
                 /pgfplots/ymax=5e-11}
        \WebPPL{20}
    \end{tikzpicture}
\end{tabular}}
\\
\begin{tikzpicture}
  \begin{axis}[hide axis,height=0.7in,
               legend style={
                 draw=none, inner sep=0pt, font={\footnotesize},
                 /tikz/every odd column/.append style={column sep=0.2cm},
                 /tikz/every even column/.append style={column sep=0.4cm, every node/.style={inner sep=0, anchor=mid}}
               },
               legend columns=-1,
               xmin=0, xmax=1, ymin=0, ymax=1]
    \addlegendimage{mark=x,teal,only marks}
    \addlegendentry{rejection}
    \addlegendimage{mark=+,lime!50!black,only marks}
    \addlegendentry{SMC}
    \addlegendimage{mark=asterisk,red,only marks}
    \addlegendentry{incrementalMH}
    \addlegendimage{mark=star,orange,only marks}
    \addlegendentry{MCMC}
    \addlegendimage{blue}
    \addlegendentry{true probability}
    \addlegendimage{vertical legend, dice}
    \addlegendentry{Dice time}
    \addlegendimage{vertical legend, perpl}
    \addlegendentry{PERPL time}
  \end{axis}
\end{tikzpicture}

\caption{The WebPPL version of \protect\cref{ex:parity-producer-consumer} (PCFG parsing), using a variety of inference methods, displays high variance for all but the shortest strings. Dice and PERPL obtain exact probabilities much faster. For string length 15, one data point with probability estimate $3\cdot10^{-7}$ has been omitted. We~ran rejection sampling up to 50\thinspace M samples, SMC up to 3\thinspace M particles, and incrementalMH and MCMC up to 30\thinspace M samples.}
\label{fig:webppl}
\end{figure}

\Cref{fig:webppl} shows that general-purpose approximate inference algorithms in WebPPL are inadequate for PCFG parsing, especially as the string parsed gets longer.
Each scatterplot depicts the result of trying to parse a different string length, and each point represents one inference run.
The horizontal axis shows the time taken; variation is caused primarily by different inference methods and different sample sizes.
The vertical axis shows the probability estimated; variation is caused by random sampling, which has trouble explaining low-probability events: when the string length is 15\@, all probability estimates are either 0 or a wild overestimate, and when the string length exceeds 19\@, all estimates are~0\@.
In~contrast, Dice and PERPL produce answers within $0.000001\%$ of the truth (shown by horizontal lines), in a fraction of the time taken by WebPPL (shown by vertical lines; also see below).

\begin{wrapfigure}[15]{r}{.52\textwidth}
\vspace*{-3ex}
\scalebox{.68}{%
\begin{tikzpicture}
    \begin{semilogyaxis}[width=.75\textwidth, height=.5\textwidth, axis lines=left, tick align=outside,
                         legend pos=south east, legend style={draw=none},
                         xlabel={String length}, ylabel={Parsing time (seconds)},
                         xlabel near ticks, ylabel near ticks,
                         mark size=1.5pt,
                         error bars/error mark options/.append={,solid},
                         error bars/error mark options/.append={,xshift=-1pt}, 
                         error bars/y dir=both, error bars/y explicit]
        \addplot [mark=x,teal]
            table [row sep=crcr, col sep=comma, x=length, y=mean,
                   y error expr=\thisrow{stddev}/sqrt(\thisrow{count})] {
command,mean,stddev,median,user,system,min,max,parameter_size,length,count\\
node '/tmp/159 rejection.js',0.18163816695000004,0.0023313041465508086,0.1810613572,0.20747699249999996,0.0159657625,0.1791810402,0.1868099262,159,1,16\\
node '/tmp/17484 rejection.js',0.22076230358461538,0.0026441655076991847,0.2201412792,0.29432871846153846,0.027742710769230763,0.2168911772,0.2256466162,17484,2,13\\
node '/tmp/103730 rejection.js',0.3755898757000001,0.038252558682726666,0.39829241170000007,0.48539378000000005,0.03474468,0.31992369820000005,0.4149076702000001,103730,3,10\\
node '/tmp/468570 rejection.js',0.8564300963000001,0.007853115896333763,0.8566300662000002,0.98615598,0.04218738,0.8451779032000001,0.8719463472000001,468570,4,10\\
node '/tmp/1851340 rejection.js',3.2892243228,0.9664557775378929,2.8826662417,3.44931608,0.07638948,2.5591205992,5.2800514972000006,1851340,5,10\\
node '/tmp/6886680 rejection.js',11.741958543299997,0.5804831325393106,11.8538665222,12.012017279999998,0.19969057999999998,10.6481063582,12.585761375199999,6886680,6,10\\
node '/tmp/24346848 rejection.js',42.3095981988,0.5395645640985457,42.28227547269999,43.97272798,0.73422298,41.2170988882,43.1503111822,24346848,7,10\\
            }
            -- (axis cs:8,143.7345099440861652) 
            -- (axis cs:9,475.2831637011542388) 
            -- (axis cs:10,1567.5328378334634987) 
            ;
        \addlegendentry{WebPPL}
        \addplot [dice]
            table [col sep=comma, x=parameter_length, y=mean,
                   y error expr=\thisrow{stddev}/sqrt(10)]
            {benchmarks/pcfg-dice.csv};
        \addlegendentry{Dice}
        \addplot [perpl]
            table [col sep=comma, x=parameter_length, y=mean,
                   y error expr=\thisrow{stddev}/sqrt(10)]
            {benchmarks/pcfg-perpl.csv};
        \addlegendentry{PERPL}
    \end{semilogyaxis}
\end{tikzpicture}}
\caption{The PCFG parser of \protect\cref{ex:parity-producer-consumer-again} scales to longer strings much better than its Dice equivalent. Parsing time ($y$-axis) is on a logarithmic scale.
         The WebPPL curve shows how long it takes rejection sampling just to get estimates within 5\% of the truth with probability 95\%.}
\label{fig:dice}
\end{wrapfigure}%
\paragraph{Comparison with exact inference}

\Cref{fig:dice} shows that PERPL scales better than Dice to parsing longer strings.
The vertical axis shows time in log scale.
Along the horizontal axis, we varied the string length between 1 and~100\@, but Dice ran out of memory at length~22\@.
The plot shows that the running time of Dice (not to mention WebPPL) grows much more quickly than that of PERPL.

\section{Related Work}

\paragraph{Exact inference and recursion}

PERPL appears to be the first PPL that performs exact inference while
allowing both unbounded recursive calls and \emph{unbounded recursive data}.
Here we consider other PPLs that support exact inference.

The ``stochastic Lisp'' of \citet{koller-effective} computes exact distributions by a generalization of variable elimination.
Many PPLs that followed enable exact inference via graph representations.
IBAL \citep{pfeffer:2001} compiles to a factor graph; 
Fun \citep{borgstrom+:2011} compiles to factor graphs with gates \citep{minka+winn:2008}. Dice \citep{holtzen2020dice} and \textsc{BernoulliProb} \citep{claret+:2013} compile to binary\slash algebraic decision diagrams \citep{bryant:1986,bahar+:1997}, which are also used by the probabilistic model checker PRISM \citep{kwiatkowska-prism}. SPPL \citep{saad+:2021} compiles to sum-product networks \citep{poon+domingos:2011}, and FSPN \citep{stuhlmuller-dynamic} generalizes sum-product networks to represent recursive dependencies.
PSI \citep{gehr+:2016,gehr+:2020} and ProbZelus \citep{atkinson-semi-symbolic} perform exact inference symbolically.
Finally, the expectations---and more generally \emph{moments}---of distributions can be computed automatically by representing them with polynomials, interval arithmetic, and linear recurrences \citep{sankaranarayanan-reasoning,bartocci-automatic,bouissou-uncertainty,moosbrugger-moment}.

Most of these PPLs, to our knowledge, do not support unbounded recursive calls or loops, in the sense of allowing programs where for any $N$, there is a branch of computation with recursion\slash iteration depth greater than $N$ and nonzero weight.
There are four exceptions.
\textsc{BernoulliProb} handles loops using fixed-point iteration (whereas PERPL always compiles loops to linear equations then solves them directly).
FSPN handles recursive calls as in PERPL\@, by solving a system of equations.
ProbZelus handles a \emph{stream} of observations by maintaining a symbolic state, which can represent and update a distribution exactly.
The methods for computing moments just mentioned are focused on unbounded loops, but do not handle other recursive calls such as those in a PCFG\@.

Most of these PPLs, to our knowledge, do not allow unbounded recursive data such as strings, trees, or stacks.
Stochastic Lisp does, evaluating them lazily so they can be of unbounded or infinite size without necessarily causing nontermination.
However, \citet{pfeffer:2005} later noted that the interaction between lazy evaluation and memoization
was problematic.
This line of research continued in IBAL with a new evaluation strategy for lazy memoization, but later languages in this line, such as Figaro \citep{pfeffer:2016}, supported exact inference only for programs with bounded recursive calls and data.
Our use of defunctionalization can also be seen as lazy evaluation and covers similar cases to those that stochastic Lisp and IBAL intended to. But because it requires delayed computations to depend only on finite types, memoization is straightforward.
Also, lazy evaluation
would not turn programs using stacks into polynomial-time algorithms; refunctionalization does.

\paragraph{Probabilistic function semantics}

The semantics of first-class probabilistic functions has
long been known to require more than endowing a domain of functions with
measurable-space structure~\citep{aumann-borel}.
Compared to recent semantics of probabilistic functions and recursion
using quasi-Borel spaces~\citep{heunen-convenient} and logical
relations~\citep{wand-contextual}, our elementary treatment is limited
to models where inference without sampling is possible,
by solving MSPEs.

\paragraph{Linear logic}

We build on linear logic \citep{girard-linear,walker:2005} in several ways.
First, the contrast between ``eager'' evaluation of~$\otimes$ and ``lazy'' evaluation of~$\with$ draws on the typical computational interpretation of linear logic~\citep{abramsky-computational}.
Second, polarity \citep{girard-new}, which we use to allow local nonlinear bindings (\cref{sec:robust}), has been used to structure the type system of a PPL \citep{ehrhard-probabilistic}.
Finally, our treatment of linearly used functions as nondeterministic pairs, which dates back to encoding $\lambda$-calculus variables as Prolog variables \citep{pereira-prolog}, is reminiscent of the symmetric monoidal closed category $\mathcal{R}^\Pi$ of \citet{laird-weighted}, whose tensor product~$\otimes$ as well as exponential~$\llp$ are just the Cartesian product of sets.
Whereas the PPLs in this literature \citep{laird-weighted,laird-higher-order,ehrhard-full,ehrhard-probabilistic}
allow functions to be used any number of times and interpret them by infinitary means,
we use linearity to \emph{limit} expressivity to finitely many weight variables, reducing inference to solving an MSPE.

\paragraph{De- and refunctionalization}

De- and refunctionalization \citep{reynolds:1972,danvy-defunctionalization,danvy-refunctionalization,rendel-automatic} are not new, but using defunctionalization to eliminate non-functions appears novel.
Defunctionalization has been proven correct before \citep{nielsen-denotational,banerjee-design,pottier-polymorphic}, but not for a language with effects other than nontermination.
Refunctionalization is defined as the (left) inverse of defunctionalization, so the correctness of one follows from that of the other. However, our $\mathcal{R}$ transform incorporates additional steps of \emph{disentangling} and \emph{merging apply functions} presented by~\citet{danvy-refunctionalization}.

\section{Conclusion}

PERPL allows programmers to write probabilistic code using unbounded recursive calls and data and still maintain exact inference. We have seen that the compilation to MSPEs can automatically derive inference algorithms that originally took significant intellectual effort to discover.

We have implemented a PERPL compiler in Haskell,
including nonlinear bindings for positive types (\cref{sec:robust}), affine bindings for all types (\cref{sec:affine}), and defunctionalization and refunctionalization (\cref{sec:eliminate_mu,sec:infer_dr_sequence}).
Rather than compile directly to an MSPE\@, it~compiles to an FGG (\cref{sec:translation}). Our implementation of FGGs, which employs the inference methods of \cref{sec:solve_mspe} and makes them automatically differentiable using PyTorch \citep{pytorch}, will be the subject of a future paper. Both implementations are released as open-source software.\footnote{\begin{tabular}[t]{@{}ll}PERPL: \url{https://github.com/diprism/perpl}\quad FGGs: \url{https://github.com/diprism/fggs}\end{tabular}}

\bibliography{references}

\makeatletter
\par\bigskip\noindent{\small\normalfont\@received}\par
\def\@received{}
\makeatother

\clearpage
\EveryShipout{\addtocounter{TotPages}{-1}} 

\appendix

\section{Additional Proofs}

\subsection{Proof of Lemma \ref{thm:subst_preserves_denote_linear}} 
\label{sec:proof_subst_preserves_denote_linear}
By induction on the typing derivation of~$e'$. We show a few core cases.

\begin{trivlist}
\item Case $x_1$:
  Because $\Delta_0$ is empty, $\delta_0$ is empty as well.
  \begin{align*}
    \denote{x_1\{x_1 := e_1\}}(\delta_1, v') = \denote{e_1}(\delta_1, v')
    &= \sum_{v_1} \mathbb{I}[v'=v_1] \cdot \denote{e_1}(\delta_1, v_1) \\
    &= \sum_{v_1} \denote{x_1}(\{(x_1, v_1)\}, v') \cdot \denote{e_1}(\delta_1, v_1).
  \end{align*}

\item Case $x \neq x_1$: Ruled out by inversion.
  
\item Case $\lambda x. e$: By Barendregt's variable convention, $x \neq x_1$ and $x \not\in \fv{e_1}$.
  \begin{align*}
    \denote{(\lambda x. e) \{x_1 := e_1\}}(\delta_0 \cup \delta_1, \arrowdenotation{v_x}{v_e})
    &= \denote{\lambda x. (e \{x_1 := e_1\})}(\delta_0 \cup \delta_1, \arrowdenotation{v_x}{v_e}) \\
    &= \denote{e\{x_1:=e_1\}}(\delta_0 \cup \delta_1 \cup \{(x, v_x)\}, v_e) \\
    &= \sum_{v_1} \denote{e}(\delta_0 \cup \{(x,v_x), (x_1,v_1)\}, v_e) \cdot \denote{e_1}(\delta_1, v_1) \\
    &= \sum_{v_1} \denote{\lambda x. e}(\delta_0 \cup \{(x_1,v_1)\}, \arrowdenotation{v_x}{v_e}) \cdot \denote{e_1}(\delta_1, v_1).
  \end{align*}
  
\item Case $e'_0~e'_1$:
  Thanks to linearity, $x_1$ occurs free in either $e'_0$ or $e'_1$ but not both. If $x_1 \in \fv{e'_0}$,
  \begin{align*}
    \makebox[2em][l]{$\displaystyle \denote{(e'_0~e'_1) \{x_1 := e_1\}}(\delta'_0\cup\delta'_1\cup\delta_1, v') $}\\
    &= \denote{(e'_0 \{x_1 := e_1\})~e'_1}(\delta'_0\cup\delta'_1\cup\delta_1, v') \\
    &= \sum_{v'_1} \denote{e'_0 \{x_1 := e_1\}}(\delta'_0\cup\delta_1, \arrowdenotation{v'_1}{v'}) \cdot \denote{e'_1}(\delta'_1, v'_1) \\
    &= \sum_{v'_1} \sum_{v_1} \denote{e'_0} (\delta'_0\cup\{(x_1, v_1)\}, \arrowdenotation{v'_1}{v'}) \cdot \denote{e_1}(\delta_1, v_1) \cdot \denote{e'_1}(\delta'_1, v'_1) \\
    &= \sum_{v_1} \denote{e'_0~e'_1} (\delta'_0\cup\delta'_1\cup\{(x_1, v_1)\}, v') \cdot \denote{e_1}(\delta_1, v_1).
  \end{align*}
  Similarly if $x_1 \in \fv{e'_1}$.
\qed
\end{trivlist}

\subsection{Proofs of Type Preservation}

\label{sec:defunc_preserves_types}
\begin{proposition} \label{thm:defunc_preserves_types}
  If $e : \tau$, then $\defunc\rec{e} : \defunc\rec{\tau}$.
\end{proposition}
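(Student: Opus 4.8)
The plan is to prove the statement by induction on the typing derivation of $e$, reading the judgement with contexts: if $\Gamma; \Delta \vdash e : \tau$, then $\defunc\rec{\Gamma}; \defunc\rec{\Delta} \vdash \defunc\rec{e} : \defunc\rec{\tau}$, where $\defunc\rec{\cdot}$ acts pointwise on the types of each context and $\defunc\rec{\Gamma}$ is additionally understood to provide the new global $u_\rec : \nonrec \llp \nonrecone$ introduced in~\eqref{e:defunc_u}. Before the induction I would record two facts about the type-level transformation. First, since $\rec$ is a $\mu$-type and not one of $\otimes$, $\with$, $\oplus$, or $\llp$, the operation ``replace every occurrence of $\rec$ by $\nonrec$'' commutes with all four type constructors, e.g.\ $\defunc\rec{\tau_1 \llp \tau_2} = \defunc\rec{\tau_1} \llp \defunc\rec{\tau_2}$. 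Second, $\recbody$ cannot literally contain $\rec$ (that would make $\rec = \mu^t\alpha.\recbody$ an infinite type expression), so $\defunc\rec{\cdot}$ fixes $\recbody$; combined with the requirement that $\nonrec$ not contain $\rec$, and with the fact that substitution commutes with the transformation, this yields the identity $\defunc\rec{\recone} = \nonrecone$.

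For every typing rule other than those for $\kw{fold}_\rec$ and $\kw{unfold}_\rec$, the term transformation is homomorphic, and the first fact shows that $\defunc\rec{\cdot}$ commutes with the type in the rule's conclusion. I therefore apply the induction hypothesis to each premise and reassemble with the same typing rule, noting that the pointwise action of $\defunc\rec{\cdot}$ on contexts respects the context splitting $\Delta = \Delta_0, \Delta_1$ used by the multiplicative rules. These cases are routine.

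The fold case needs care. Here $e = \kw{fold}_\rec \occ_i$ for one of the enumerated occurrences, and $\defunc\rec{e} = \inj{i}\free_i$, which must receive type $\nonrec = \freetype_1 \oplus \cdots \oplus \freetype_n$. The key point is that, by linearity, the linear context of a well-typed $\kw{fold}_\rec \occ_i$ consists of exactly the free linear variables of $\occ_i$---precisely $\free_i : \freetype_i$---since fold binds nothing and has $\occ_i$ as its only subterm. Moreover $\defunc\rec{\freetype_i} = \freetype_i$, because $\freetype_i$ is a summand of $\nonrec$ and hence contains no $\rec$. Thus in the transformed context the tuple $\free_i$ still has type $\freetype_i$ by the product rule, and a single application of the injection rule gives $\inj{i}\free_i : \nonrec = \defunc\rec{\rec}$, as required.

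For the unfold case, $\defunc\rec{\kw{unfold}_\rec x = e \kw{in} e'} = {\kw{let} x = u_\rec~\defunc\rec{e} \kw{in} \defunc\rec{e'}}$. The induction hypothesis gives $\defunc\rec{e} : \defunc\rec{\rec} = \nonrec$ and, in the context extended by $x : \defunc\rec{\recone}$, gives $\defunc\rec{e'} : \defunc\rec{\tau'}$. Since $u_\rec : \nonrec \llp \nonrecone$, the application $u_\rec~\defunc\rec{e}$ has type $\nonrecone$, and the second preliminary fact $\defunc\rec{\recone} = \nonrecone$ makes this match the type expected for $x$; binding with the derived typing rule for $\kw{let}$ (desugared to application) then produces $\defunc\rec{\tau'}$. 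I expect the main obstacle to be the fold case, specifically the careful use of linearity to identify the ambient linear context with $\free_i : \freetype_i$ and the verification that $\defunc\rec{\freetype_i} = \freetype_i$; the substitution identity $\defunc\rec{\recone} = \nonrecone$ used in the unfold case is the other place where the argument is more than mechanical.
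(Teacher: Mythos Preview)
Your proposal is correct and follows essentially the same approach as the paper: induction on the typing derivation with $\kw{fold}_\rec$ and $\kw{unfold}_\rec$ as the only interesting cases. You are in fact more explicit than the paper about the auxiliary type-level facts (commutation with constructors, $\defunc\rec{\recone} = \nonrecone$, and $\defunc\rec{\freetype_i} = \freetype_i$) and about why linearity forces the ambient context at $\kw{fold}_\rec \occ_i$ to be exactly $\free_i : \freetype_i$; the paper leaves these implicit.
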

\begin{proof}
By induction on the typing derivation of $e : \tau$. The only interesting cases are $\kw{fold}_\rec$ and $\kw{unfold}_\rec$ expressions.
The nonlinear typing context is unchanged and elided below.

Case ${\kw{fold}_\rec \occ_i}$:
The translation $\defunc\rec{\kw{fold}_\rec \occ_i} = {\ctor{\id{Fold}_i} \free_i}$ has the following typing derivation.
\begin{gather*}
  \inferrule{\free_i : \freetype_i}{{\ctor{\id{Fold}_i} \free_i} : \nonrec}
\end{gather*}

Case ${\kw{unfold}_\rec x = e \kw{in} e'}$:
Because $\occ_i : \recone$, thanks to the induction hypothesis, the function $u_\rec$ has the following typing derivation.
\begin{gather*}
  \inferrule*{
    \inferrule*
        {\free_i : \freetype_i \vdash \defunc\rec{\occ_i} : \nonrecone \quad \text{for each~$i$}}
        {x : \nonrec \vdash {\kw{case} x \kw{of} {\ctor{\id{Fold}_1} \free_1} \casearrow \defunc\rec{\occ_1} \casealt \cdots \casealt {\ctor{\id{Fold}_n} \free_n} \casearrow \defunc\rec{\occ_n}} : \nonrecone}
  }
  {\lambda x.\, {\kw{case} x \kw{of} {\ctor{\id{Fold}_1} \free_1} \casearrow \defunc\rec{\occ_1} \casealt \cdots \casealt {\ctor{\id{Fold}_n} \free_n} \casearrow \defunc\rec{\occ_n}} : \nonrec \llp \nonrecone}
\end{gather*}
Then the translation of ${\kw{unfold}_\rec x = e \kw{in} e'}:\tau$ has the following typing derivation.
\begin{equation*}
  \inferrule*{
    \inferrule*{
      u_\rec : \nonrec \llp \nonrecone \quad
      \defunc\rec{e} : \nonrec
    }{
      u_\rec~\defunc\rec{e} : \nonrecone
    }
    \enspace
    x : \nonrecone \vdash \defunc\rec{e'} : \defunc\rec{\tau}
  }{
    \kw{let} x = u_\rec~\defunc\rec{e} \kw{in} \defunc\rec{e'} : \defunc\rec{\tau}
  }
\qedhere
\end{equation*}
\end{proof}

\label{sec:refunc_preserves_types}
\begin{proposition} \label{thm:refunc_preserves_types}
If $e : \tau$, then $\refunc\rec{e} : \refunc\rec{\tau}$.
\end{proposition}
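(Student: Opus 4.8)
The plan is to mirror the proof of \Cref{thm:defunc_preserves_types} almost line for line, by induction on the typing derivation of $e : \tau$, eliding the classical context $\Gamma$ except to note that it is silently augmented by the new global definition $f_\rec : \nonrecone \llp \nonrec$. Before starting the induction I would record the type-level bookkeeping that makes every case go through. Since refunctionalization is only applied when $\nonrec$ does not contain $\rec$, each $\freetype_i$ and each $\occtype_i$ is $\rec$-free, so $\refunc\rec{\freetype_i} = \freetype_i$ and $\refunc\rec{\occtype_i} = \occtype_i$; moreover $\rec = \mu^t\alpha.\recbody$ cannot occur as a proper subterm of its own body $\recbody$, so the only occurrences of $\rec$ in $\recone = \recbody\{\alpha:=\rec\}$ are those introduced for $\alpha$, and hence $\refunc\rec{\recone} = \recbody\{\alpha:=\nonrec\} = \nonrecone$. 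Of course $\refunc\rec{\rec} = \nonrec$ by definition.

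The heart of the argument, and the step I expect to be the main obstacle, is checking that the added global $f_\rec = \lambda x\colon\nonrecone.\,\withterm{\lambda\free_1.\,\refunc\rec{\occ_1}, \ldots, \lambda\free_n.\,\refunc\rec{\occ_n}}$ is itself well-typed at $\nonrecone \llp \nonrec$. Applying the function rule puts $x : \nonrecone$ into the linear context; the additive-tuple rule then type-checks every branch $\lambda\free_i.\,\refunc\rec{\occ_i}$ against this same context $x : \nonrecone$, which is exactly what the additive reading demands (only one branch is ever demanded and consumed). For each branch, the function rule reduces the goal to $\Gamma; x : \nonrecone, \free_i : \freetype_i \vdash \refunc\rec{\occ_i} : \occtype_i$. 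But the original program supplies a derivation $\Gamma; \free_i : \freetype_i, x : \recone \vdash \occ_i : \occtype_i$ (a premise of the $i$-th $\kw{unfold}_\rec$ typing), and applying the induction hypothesis to it transforms the context to $x : \refunc\rec{\recone} = \nonrecone$ while leaving $\free_i : \freetype_i$ and the result type $\occtype_i$ fixed, by the preliminaries above. The one subtlety worth flagging is that the $\occ_i$ are drawn from across the whole program rather than being subterms of the current $e$; this is the same mild circularity already present in the defunctionalization proof, and I would resolve it by reading the statement as being about a fixed, well-typed program, so that every $\occ_i$'s typing derivation is available to appeal to.

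With $f_\rec$ in hand, the two interesting cases of the induction are short. For $\kw{fold}_\rec e$ the premise gives $e : \recone$, so the induction hypothesis gives $\refunc\rec{e} : \nonrecone$, and since $\refunc\rec{\kw{fold}_\rec e} = f_\rec~\refunc\rec{e}$ the application rule against $f_\rec : \nonrecone \llp \nonrec$ yields type $\nonrec = \refunc\rec{\rec}$. For $\kw{unfold}_\rec x = e \kw{in} \occ_i$ with result type $\occtype_i$, the premise gives $e : \rec$, so by the induction hypothesis $\refunc\rec{e} : \nonrec = (\freetype_1 \llp \occtype_1) \with \cdots \with (\freetype_n \llp \occtype_n)$; the additive projection rule gives $\refunc\rec{e}.i : \freetype_i \llp \occtype_i$, and applying it to $\free_i : \freetype_i$ gives the translated term $(\refunc\rec{e}.i)~\free_i$ the type $\occtype_i = \refunc\rec{\occtype_i}$, as required. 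Here I would also verify that the linear contexts partition correctly: in the source rule $e$ is typed in $\Delta$ and the extra free variables $\free_i$ occupy $\Delta'$, and the translated application splits its linear context into exactly $\refunc\rec{\Delta}$ (for the projected function) and $\free_i : \freetype_i = \refunc\rec{\Delta'}$ (for the argument), matching the original $\Delta, \Delta'$. All remaining cases are purely structural and coincide with the corresponding congruence cases of \Cref{thm:defunc_preserves_types}.
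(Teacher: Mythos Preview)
Your proposal is correct and follows essentially the same route as the paper's proof: induction on the typing derivation, with the only interesting cases being $\kw{fold}_\rec$ and $\kw{unfold}_\rec$, and a separate check that the new global $f_\rec$ is well-typed at $\nonrecone \llp \nonrec$. Your write-up is in fact more careful than the paper's rather terse argument: you make explicit the type-level bookkeeping ($\refunc\rec{\freetype_i}=\freetype_i$, $\refunc\rec{\occtype_i}=\occtype_i$, $\refunc\rec{\recone}=\nonrecone$), you verify that the linear contexts partition correctly in the $\kw{unfold}_\rec$ case, and you flag the mild circularity in appealing to the typing of the $\occ_i$ drawn from elsewhere in the program---all points the paper leaves implicit.
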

\begin{proof}
By induction on the typing derivation of $e : \tau$. The only interesting cases are $\kw{fold}_\rec$ and $\kw{unfold}_\rec$ expressions.
The nonlinear typing context is unchanged and elided below.

Case ${\kw{fold}_\rec e}$:
Because $\occ_i : \occtype_i$, thanks to the induction hypothesis,
the function $f_\rec$ has the following typing derivation.
\begin{gather*}
  \inferrule*{
    \inferrule*{
      \inferrule*{
          x : \nonrecone, \free_i : \freetype_i
          \vdash \refunc\rec{\occ_i} : \occtype_i
      }{
        x : \nonrecone
        \vdash \lambda \free_i.\, \refunc\rec{\occ_i} : \freetype_i \llp \occtype_i}
    }{
      x : \nonrecone
      \vdash \withterm{ \lambda \free_1.\, \refunc\rec{\occ_1}, \ldots,
                        \lambda \free_n.\, \refunc\rec{\occ_n}}
      : \nonrec
    }
  }{
    \lambda x.\, \withterm{ \lambda \free_1.\, \refunc\rec{\occ_1}, \ldots,
                            \lambda \free_n.\, \refunc\rec{\occ_n} }
    : \nonrecone \llp \nonrec
  }
\end{gather*}
Then the translation of ${\kw{fold}_\rec e} : \nonrec$ has the following typing derivation.
\begin{gather*}
  \inferrule*{
    f_\rec : \nonrecone \llp \nonrec \\
    \refunc\rec{e} : \nonrecone
  }{
    f_\rec~\refunc\rec{e} : \nonrec
  }
\end{gather*}

Case ${\kw{unfold}_\rec x = \occ_i \kw{in} \occ_i}$:
The translation $\refunc\rec{\kw{unfold}_\rec x = \occ_i \kw{in} \occ_i} = (\refunc\rec{\occ_i}.i)~\free_i$ has the following typing derivation.
\begin{equation*}
  \inferrule*{\inferrule*{\refunc\rec{\occ_i} : \nonrec}
                         {\refunc\rec{\occ_i}.i : \freetype_i \llp \occtype_i} \\
              \free_i : \freetype_i}
             {(\refunc\rec{\occ_i}.i)~\free_i : \occtype_i}
\qedhere
\end{equation*}
\end{proof}

\subsection{Correctness of Transformations}
\label{sec:correctness_mu}

Sometimes, the programs before and after transformation can be executed in lockstep, given that our operational semantics allows evaluation anywhere in an expression.
In \cref{ex:parity-producer-consumer-again}, having fixed the target type $\rec=\text{\lstinline{String[2]}}$ and located the $n=4$ occurrences of $\kw{fold}_{\text{\lstinline{String[2]}}}$, we can treat the transformation $\defunc{\text{\lstinline{String[2]}}}{\cdot}$ as a relation~$\sim$ on terms,
\begin{equation}
\label{e:simplistic-lockstep}
    e^r \sim e^d
    \quad\text{iff}\quad
    \defunc{\text{\lstinline{String[2]}}}{e^r} = e^d
    \text.
\end{equation}
\Cref{fig:lockstep} shows an instance of such lockstep execution.

\newcommand\rotreduce{\raisebox{0.45cm}[\height+0.5cm]{\rotatebox{-90}{$\Longrightarrow$}}}
\newcommand\rotreducestar{\raisebox{0.45cm}[\height+0.5cm]{\rotatebox{-90}{$\Longrightarrow^{\mathclap{\ast}}$}}}
\newcommand\lstbox[1]{\text{\begin{lstlisting}#1\end{lstlisting}}}
\begin{figure}
\begin{tabular}{ccc}
\begin{lstlisting}
unfold ys = fold Nil[2] in
  case ys of Nil[2] => true
             Cons[2] _ _ => discard[2] ys
\end{lstlisting}
& $\sim$ &
\begin{lstlisting}
let ys = u_String[2] (in4 ()) in
  case ys of Nil[2] => true
             Cons[2] _ _ => discard[2] ys
\end{lstlisting}
\\
\multicolumn{1}{c}{\rotreduce} & & \multicolumn{1}{c}{\rotreducestar} \\
\begin{lstlisting}
case Nil[2] of Nil[2] => true
             Cons[2] _ _ => discard[2] ys
\end{lstlisting}
& $\sim$ &
\begin{lstlisting}
case Nil[2] of Nil[2] => true
             Cons[2] _ _ => discard[2] ys
\end{lstlisting}
\\
\multicolumn{1}{c}{\rotreduce} & & \multicolumn{1}{c}{\rotreduce} \\
\multicolumn{1}{c}{\lstinline{true}}
& $\sim$ &
\multicolumn{1}{c}{\lstinline{true}}
\end{tabular}
\caption{Lockstep execution in a simple case of defunctionalization.
    We write $e$ for each distribution $\{(1, e)\}$.}
\label{fig:lockstep}
\end{figure}

We'd like to be able to show that if $e^r \sim e^d$, then for any reduction performed on one side, both sides can be reduced (in zero or more steps) to related distributions $E^r \sim E^d$, as in the above example.
Unfortunately, the relation defined
in~\eqref{e:simplistic-lockstep} is too simple to keep track of
reductions, especially
probabilistic choices, made inside $\kw{fold}_\rec$ expressions.
For example, instead of parsing a literal string in
\cref{ex:parity-producer-consumer-again}, it may be useful to parse a random string. When
an expression, say, $e^r = {\kw{fold}_\rec
(\kw{amb} {\dots}\enspace{\dots})}$ reduces on the left
side, the corresponding reduction(s)
on the right side would have to be made inside the
global definition of $u_\rec$.  Worse, different
occurrences of~$e^r$ can make different choices, and the
global definition of $u_\rec$ only has room to record one of them.

\begin{figure}
\[\begin{array}{@{}lc@{}}
\makebox[2em][l]{Mixture} &
    \inferrule{\gamma\vdash E^r_j \sim E^d_j \enspace \text{for all $j\in J$ finite}}
              {\textstyle \gamma\vdash \sum_{j\in J} w_j \cdot E^r_j \sim \sum_{j\in J} w_j \cdot E^d_j}
\\\\
\text{Fold} &
    \inferrule{\gamma \vdash E^r_j \sim E^d_j \enspace \text{for $1\le j\le \freelength$} \\
               \gamma \vdash \bigl\{\bigl(w_1\dotsm w_\freelength, {\kw{fold}_\rec \occ_i\{\free_i:=(\freesubi^r_1,\dotsc,\freesubi^r_\freelength)\}}\bigr) \mid (w_j,\freesubi^r_j)\in E^r_j \text{ for $1\le j\le \freelength$}\bigr\} \Longrightarrow^* E^r}
              {\gamma \vdash E^r \sim \bigl\{\bigl(w_1\dotsm w_\freelength, {\ctor{Fold}_i(\freesubi^d_1,\dotsc,\freesubi^d_\freelength)}\bigr) \mid (w_j,\freesubi^d_j)\in E^d_j \text{ for $1\le j\le \freelength$}\bigr\}}
\\\\
\makebox[2em][l]{Unfold} &
    \inferrule{\gamma \vdash E^r \sim E^d \\ \gamma \vdash E^{\prime r} \sim E^{\prime d}}
              {\relax
               {\begin{array}[t]{@{}r@{}l@{}}
                 \gamma \vdash{}& \bigl\{\bigl(ww',{\kw{unfold}_\rec x = e^r \kw{in} e^{\prime r}}\bigr) \mid (w,e^r)\in E^r, (w',e^{\prime r})\in E^{\prime r}\bigr\} \\
                        {}\sim{}& \bigl\{\bigl(ww',{\kw{let} x = u_\rec~e^d \kw{in} e^{\prime d}}\bigr) \mid (w,e^d)\in E^d, (w',e^{\prime d})\in E^{\prime d}\bigr\}
                \end{array}}}
\\\\
\makebox[2em][l]{Congruence} &
    \inferrule{}
              {\gamma \vdash \{(1,x)\}\sim\{(1,x)\}}
\qquad
    \inferrule{}
              {\gamma \vdash \{(1,\kw{fail})\}\sim\{(1,\kw{fail})\}}
\\\\ &
    \inferrule{\gamma \vdash E^r \sim E^d}
              {\relax
               {\begin{array}[t]{@{}r@{}l@{}}
                 \gamma \vdash{}& \bigl\{\bigl(w, \lambda x.\, e^r\bigr) \mid (w,e^r)\in E^r\bigr\} \\
                        {}\sim{}& \bigl\{\bigl(w, \lambda x.\, e^d\bigr) \mid (w,e^d)\in E^d\bigr\}
                \end{array}}}
    \qquad
    \inferrule{\gamma \vdash E^r_0 \sim E^d_0 \\
               \gamma \vdash E^r_1 \sim E^d_1}
              {\relax
               {\begin{array}[t]{@{}r@{}l@{}}
                 \gamma \vdash{}& \bigl\{\bigl(w_0 w_1, e^r_0~e^r_1\bigr) \mid (w_0,e^r_0)\in E^r_0, (w_1,e^r_1)\in E^r_1\bigr\} \\
                        {}\sim{}& \bigl\{\bigl(w_0 w_1, e^d_0~e^d_1\bigr) \mid (w_0,e^d_0)\in E^d_0, (w_1,e^d_1)\in E^d_1\bigr\}
                \end{array}}}
\\\\ &
    \hfill\text{and so on for all other syntax constructions}
\end{array}\]
\caption{Relating distributions of expressions before and after defunctionalization. In the Fold and Unfold rules, the definitions of $\rec$, $\occ_i$, and $\free_i$ are as in \cref{sec:defunctionalization}.}
\label{fig:sim}
\end{figure}

\Cref{fig:sim} defines an enriched relation~$\sim$ that accounts for this complexity.
Like the simplistic attempt at~$\sim$ above, this relation depends implicitly on a defunctionalization target (fixing $\rec$, $\occ_i$, and $\free_i$).
However, what's related are no longer just terms ($e^r \sim e^d$), but distributions over terms ($E^r \sim E^d$).
These distributions are built by the Mixture rule, by taking linear combinations of related distributions.
Also, because the reduction relation $\Longrightarrow^*$ is parameterized by the global definitions~$\gamma$ on the left side, the relation~$\sim$ is parameterized by~$\gamma$ too.

We no longer define the relation~$\sim$ in terms of the transform~$\mathcal{D}_\rec$ as in~\eqref{e:simplistic-lockstep}.
Still, the rest of the inference rules in \cref{fig:sim} give rise to the transform~$\mathcal{D}_\rec$ as a consequence:
\begin{proposition}
    $ \dirac{e^r} \sim \dirac{\defunc\rec{e^r}} $.
\end{proposition}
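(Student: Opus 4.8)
The plan is to prove $\dirac{e^r} \sim \dirac{\defunc\rec{e^r}}$ by structural induction on the term $e^r$, using crucially that a Dirac distribution is a singleton: every rule of \cref{fig:sim} can be instantiated with singleton distributions as its premises, and it then concludes a relation between singletons. For each $e^r$ I will match the goal against the conclusion of whichever rule corresponds to the head constructor of $e^r$ and discharge its premises either by an axiom or by the induction hypothesis on the immediate subterms.

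First the routine cases. A variable $x$ satisfies $\defunc\rec{x}=x$, so the Congruence axiom $\gamma\vdash\dirac{x}\sim\dirac{x}$ is already the goal; $\kwe{fail}$ is identical. For every head constructor \emph{other than} $\kw{fold}_\rec$ and $\kw{unfold}_\rec$, the transform $\defunc\rec\cdot$ is a homomorphism that merely descends into subterms, so the matching congruence rule applies---those for $\lambda$ and application are written out in \cref{fig:sim}, and the cases for $\kw{amb}$, $\kw{factor}$, multiplicative and additive tuples, projection, injection, $\kw{let}$, $\kw{case}$, and $\kw{fold}_{\rec'}/\kw{unfold}_{\rec'}$ for types $\rec'\neq\rec$ are the promised ``and so on'' rules. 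For example, at an application $e^r_0~e^r_1$ the induction hypothesis yields $\dirac{e^r_0}\sim\dirac{\defunc\rec{e^r_0}}$ and $\dirac{e^r_1}\sim\dirac{\defunc\rec{e^r_1}}$, and feeding these singletons to the application congruence rule gives $\dirac{e^r_0~e^r_1}\sim\dirac{\defunc\rec{e^r_0}~\defunc\rec{e^r_1}}$, whose right-hand side is $\dirac{\defunc\rec{e^r_0~e^r_1}}$ by definition of the transform.

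The one genuinely interesting case is $e^r=\kw{fold}_\rec\occ_i$, where $\defunc\rec{e^r}=\inj{i}\free_i$ and the Fold rule must be used. The key move is to instantiate the Fold premises $E^r_j\sim E^d_j$ with the trivial relations $\dirac{y_{i,j}}\sim\dirac{y_{i,j}}$ relating each free variable of $\occ_i$ to itself, each obtained from the variable Congruence axiom. Then every weight $w_j$ equals $1$ and every $\freesubi^r_j,\freesubi^d_j$ is the variable $y_{i,j}$, so the distribution on the left of the rule is $\dirac{\kw{fold}_\rec\occ_i\{\free_i:=\free_i\}}=\dirac{\kw{fold}_\rec\occ_i}$, which I take as $E^r$ using \emph{zero} reduction steps, while the distribution on the right collapses to $\dirac{\inj{i}\free_i}$. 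The Fold rule thus concludes exactly $\dirac{\kw{fold}_\rec\occ_i}\sim\dirac{\inj{i}\free_i}$, the desired goal; note that $\occ_i$ appears untransformed here, its transform $\defunc\rec{\occ_i}$ living only inside the definition of $u_\rec$ in $\gamma$. The companion case $e^r=\kw{unfold}_\rec x=e\kw{in}e'$ is then a single use of the Unfold rule: the induction hypothesis supplies $\dirac{e}\sim\dirac{\defunc\rec{e}}$ and $\dirac{e'}\sim\dirac{\defunc\rec{e'}}$, and the Unfold conclusion, specialized to singletons, is $\dirac{\kw{unfold}_\rec x=e\kw{in}e'}\sim\dirac{\kw{let} x=u_\rec~\defunc\rec{e}\kw{in}\defunc\rec{e'}}$, matching $\defunc\rec{\kw{unfold}_\rec x=e\kw{in}e'}$.

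I expect the only real obstacle to be bookkeeping in the Fold case rather than any depth: one must notice that the Fold rule, designed to absorb arbitrary (and possibly probabilistic) computation occurring inside a $\kw{fold}_\rec$, degenerates for a bare $\kw{fold}_\rec\occ_i$ to the identity substitution with an empty reduction, so neither genuine reduction nor mixing is ever invoked. Consequently the Mixture rule is not needed for this proposition, and the argument is insensitive to the ambient $\gamma$ because the sole appeal to $\Longrightarrow^*$ uses zero steps. I would finally check the edge case $\freelength=0$ (an $\occ_i$ with no free variables): there the empty product of weights is $1$ and the empty-tuple substitution is the identity, so the same reasoning delivers $\dirac{\kw{fold}_\rec\occ_i}\sim\dirac{\inj{i}\unitterm}$, as required.
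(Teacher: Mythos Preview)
Your proposal is correct and follows essentially the same approach as the paper: structural induction on~$e^r$, with the Congruence rules handling the routine cases and the Fold and Unfold rules handling the two cases singled out in~\eqref{e:defunc_terms}. Your explicit unpacking of the Fold case---instantiating each $E^r_j\sim E^d_j$ with the variable axiom $\dirac{y_{i,j}}\sim\dirac{y_{i,j}}$ and taking zero reduction steps so that the identity substitution collapses the left side to $\dirac{\kw{fold}_\rec\occ_i}$---is exactly the detail the paper leaves implicit, and your observation that the induction hypothesis is not invoked on $\occ_i$ itself (since $\defunc\rec{\occ_i}$ lives only in the global $u_\rec$) is on point.
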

\begin{proof}
    By induction on~$e^r$. The interesting cases---those shown in~\eqref{e:defunc_terms}---use
    the Fold and Unfold rules. The other, uninteresting cases use the
    Congruence rules.
\end{proof}

The Fold rule's second premise allows $\occ_i$ inside $\kw{fold}_\rec$ to be reduced arbitrarily and probabilistically to an entire distribution of terms and yet stay related to the same $\ctor{\id{Fold}_i}$ expression.
These reductions have executed already on the left side, in $E^r$, but are ``pent up'' on the right side.
They must wait for the $\kw{fold}_\rec$ on the left side to meet its annihilating $\kw{unfold}_\rec$, whose translation on the right side provides the $u_\rec$ whose definition contains $\occ_i$.

\newcommand\fststep[1]{E^{#1\dag}}
\newcommand\sndstep[1]{E^{#1\ddag}}

\begin{lemma}[Substitution preserves $\sim$]
\label{lemma:subst_preserves_sim}
    Suppose $\gamma\vdash E^r \sim E^d$ and $\gamma\vdash E^r_1 \sim E^d_1$.
    If $x_1$ is a linear variable, or if $x_1$ is a nonlinear variable but $E^r_1$ and~$E^d_1$ use no linear variables, then
    \begin{align*}
        \gamma\vdash{}&
        \bigl\{\bigl(w w_1, e^r\{x_1:=e^r_1\}\bigr) \mid (w,e^r)\in E^r, (w_1,e^r_1)\in E^r_1\bigr\}
    \\  {}\sim{}&
        \bigl\{\bigl(w w_1, e^d\{x_1:=e^d_1\}\bigr) \mid (w,e^d)\in E^d, (w_1,e^d_1)\in E^d_1\bigr\}
        \text.
    \end{align*}
\end{lemma}
\begin{proof}
    By induction on the derivation of $\gamma\vdash E^r \sim E^d$.
\end{proof}
\begin{theorem}
\label{thm:d-correct}Whenever $E^r \sim E^d$:
\begin{enumerate}
\item
    If $E^r \Longrightarrow \fststep{r}$, then
    there exist $\sndstep{r}$ and $\sndstep{d}$ such that
    $\fststep{r} \Longrightarrow^* \sndstep{r}$,
    $E^d         \Longrightarrow^* \sndstep{d}$, and
    $\sndstep{r} \sim \sndstep{d}$.
\item
    If $E^d \Longrightarrow \fststep{d}$, then
    there exist $\sndstep{r}$ and $\sndstep{d}$ such that
    $E^r         \Longrightarrow^* \sndstep{r}$,
    $\fststep{d} \Longrightarrow^* \sndstep{d}$, and
    $\sndstep{r} \sim \sndstep{d}$.
\end{enumerate}
\end{theorem}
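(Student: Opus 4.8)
The plan is to prove both directions simultaneously by induction on the derivation of $\gamma \vdash E^r \sim E^d$, with a case analysis on the last rule used (Mixture, Fold, Unfold, and the Congruence variants) and, inside each, on where the given single step fires. Here \cref{def:red_distr} is what I lean on first: any reduction of a distribution localizes to one weight–expression summand, so for the \emph{Mixture} rule a step on $\sum_j w_j E^r_j$ (resp.\ on the $E^d$ side) picks out one component $E^r_{j_0}$; I would apply the induction hypothesis to that component, then reassemble with Mixture by scaling with $w_{j_0}$ and re-adding the untouched summands. The \emph{Congruence} rules are routine: a step is either interior to a subcomponent—dispatched by the hypothesis and re-wrapped with the same rule—or a top-level redex such as $(\lambda x.e)\,v$, in which case the matching redex fires on the other side and the two outcomes are related by \cref{lemma:subst_preserves_sim}.

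The two substantial cases are \emph{Fold} and \emph{Unfold}, which carry the asymmetry of defunctionalization. For \emph{Fold} I would exploit the slack already built into its second premise, $\{(w_1\cdots w_m, \kw{fold}_\rec\,\occ_i\{\free_i := \vec e^{\,r}\})\} \Longrightarrow^* E^r$. If the left side steps, $E^r \Longrightarrow \fststep{r}$, I just extend this premise by one step, so $\fststep{r}$ stays Fold-related to the \emph{unchanged} $E^d$ and no right-side motion is needed. If the right side steps, the redex must be interior to some component $e^d_{j_0}$ (the head $\inj i$ is not itself a redex), corresponding to a reduction of $E^d_{j_0}$; I would apply the hypothesis to the premise $E^r_{j_0} \sim E^d_{j_0}$, drive $\fststep{d}$ to the $\inj i$-distribution rebuilt from the reduced component, and then replay those same component reductions inside the folds on the left. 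Linearity is essential here: since $\free_i$ are the nonglobal (hence linear) free variables of $\occ_i$, each $e^r_{j_0}$ occurs exactly once in $\occ_i\{\free_i := \vec e^{\,r}\}$, so these interior reductions neither duplicate nor erase redexes and commute with whatever reductions already produced $E^r$ from the fold-stuff. I would close the case with a confluence (diamond) argument—the benign nondeterminism of our operational semantics—to find a common reduct $\sndstep{r}$ of $E^r$ and of the fold-stuff rebuilt with the reduced component, which Fold then relates to $\sndstep{d}$.

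The crux is the \emph{Unfold} case, where a $\kw{fold}_\rec$ finally meets its $\kw{unfold}_\rec$ and the reductions ``pent up'' on the right are discharged. When the unfold–fold step \eqref{e:unfold-fold} fires on the left, the term being unfolded is a fold value, so inversion on the premise $E^r \sim E^d$ forces it to come from the Fold rule (possibly under Mixture); hence $E^d$ is an $\inj i$-distribution and $u_\rec$ is exactly the global whose $i$-th branch is $\defunc\rec{\occ_i}$. On the right I would run $u_\rec$ on $\inj i\,\vec e^{\,d}$ via \eqref{eq:reduction_global}, \eqref{eq:reduction_function}, and \eqref{eq:reduction_sum}, exposing $\defunc\rec{\occ_i}\{\free_i := \vec e^{\,d}\}$, and then use the Proposition that $\dirac{e^r}\sim\dirac{\defunc\rec{e^r}}$ together with \cref{lemma:subst_preserves_sim} to see that this right-hand computation is $\sim$-related to $\occ_i\{\free_i := \vec e^{\,r}\}$ on the left. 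Replaying—through the first part of the theorem—precisely the interior reductions the left already performed to produce its fold value then lets the right catch up to the translated value, after which one more use of \cref{lemma:subst_preserves_sim} on the bodies $E'^r \sim E'^d$ relates $e'^r\{x := v\}$ to the right-hand let-substitution.

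I expect the main obstacle to be making this replay well-founded. The appeal to the theorem on the inner $\occ_i$-reductions must be justified by a decreasing measure—plausibly a lexicographic order pairing the $\sim$-derivation with the length of the pending reductions recorded in the Fold premise—and the commutation facts invoked in the Fold case must be discharged from the linear discipline rather than assumed. Verifying that the slack $\Longrightarrow^*$ in both conclusions always absorbs exactly this pent-up work, and that the induction terminates despite reductions potentially increasing term size, is where I anticipate the delicate bookkeeping.
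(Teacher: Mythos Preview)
Your approach is the paper's: induction on the $\sim$ derivation, with \cref{lemma:subst_preserves_sim} handling top-level redexes and the Fold rule's second premise absorbing pent-up left-side reductions. The one structural move you miss is the paper's opening observation that every use of Mixture can be commuted to the outermost level of a $\sim$ derivation; dispatching Mixture once there (via \cref{def:red_distr}) lets the remainder of the argument assume Mixture is absent, so the only residual nondeterminism comes from the $\Longrightarrow^*$ in Fold's second premise. This replaces your inline Mixture case and, more usefully, streamlines the Congruence and Unfold cases: when a single step acts on one element of a multi-element distribution, all other elements share the same outer shape, so the paper simply uses the slack in $\fststep{r}\Longrightarrow^*\sndstep{r}$ (or $\fststep{d}\Longrightarrow^*\sndstep{d}$) to mimic that step on the rest, rather than splitting and reassembling as you do. Conversely, you are more explicit than the paper on two points it treats tersely. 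For the Fold case the paper states only the left-side direction (penting up $E^r\Longrightarrow\fststep{r}$); your handling of a right-side step inside some $e^d_{j_0}$, with the confluence appeal to reconcile the induction-hypothesis reduction of $E^r_{j_0}$ against the already-pent-up reductions, fills a gap the paper leaves implicit. And your well-foundedness worry about replaying pent-up reductions in the Unfold case is exactly what the paper's one-line ``release into $E^d\Longrightarrow^*\sndstep{d}$'' and its phrase ``simultaneous induction on the given derivations of $\sim$ and $\Longrightarrow$'' are gesturing at; your proposed lexicographic measure is a reasonable way to make that precise.
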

\begin{proof}
    First observe that any use of the Mixture rule in a derivation of $E^r \sim E^d$ either occurs at the outermost (bottommost) level or can be commuted to occur there.
    And if $E^r \sim E^d$ is derived using the Mixture rule at the outermost level, then the result follows from the definition of $\Longrightarrow$ on distributions (\cref{def:red_distr}).
    Thus, we may assume from here on that the Mixture rule is not used, and so the only source of nondeterminism (\cref{def:deterministic}) is the second premise of the Fold rule.
    We proceed by simultaneous induction on the given derivations of $\sim$ and $\Longrightarrow$.

    If $E^r \sim E^d$ is derived using a Congruence rule, say, the one for application (the last rule shown in \cref{fig:sim}), then the given reduction ($E^r \Longrightarrow \fststep{r}$ or $E^d \Longrightarrow \fststep{d}$) is either a $\beta$-reduction at the top level or a reduction of a subexpression ($e^r_0$ or $e^r_1$ or $e^d_0$ or~$e^d_1$) inside an evaluation context. Handle the $\beta$-reduction case using \cref{lemma:subst_preserves_sim}. Handle the subexpression case using the induction hypothesis. In both cases, if the given reduction operates on a nondeterministic distribution (more precisely, if~$E_0$ in \cref{def:red_distr} is nonzero), then use the flexibility afforded by $\fststep{r} \Longrightarrow^* \sndstep{r}$ or $\fststep{d} \Longrightarrow^* \sndstep{d}$ to mimic the given reduction on the rest of the distribution.

    If $E^r \sim E^d$ is derived using the Unfold rule, then the given reduction is \eqref{e:unfold-fold} or \eqref{eq:reduction_congruence}.
    Handle these possibilities like with a Congruence rule. The only difference is that, after handling \eqref{e:unfold-fold} using \cref{lemma:subst_preserves_sim}, any pent-up reductions from the Fold rule need to be released into~$E^d \Longrightarrow^* \sndstep{d}$.

    If $E^r \sim E^d$ is derived using the Fold rule, then a reduction $E^r \Longrightarrow \fststep{r}$ can just be pent up in the witness for the second premise.
\end{proof}
Finally, we can show that the program (possibly surrounded with some code that converts the answer to $\unitterm$) denotes the same distribution before and after $\mathcal{D}$.
\begin{proposition}
    $ \bigl\{\bigl(w^r,\unitterm\bigr)\bigr\} \sim
      \bigl\{\bigl(w^d,\unitterm\bigr)\bigr\}$
    just in case $w^r = w^d$.
\end{proposition}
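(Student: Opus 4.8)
The plan is to prove the two directions separately, with the ``if'' direction nearly immediate and the ``only if'' direction resting on a single invariant of~$\sim$. For the ``if'' direction, suppose $w^r = w^d = w$. The empty-tuple case of the Congruence rules (the ``and so on'' family in \cref{fig:sim}) has no premises and relates $\dirac{\unitterm} \sim \dirac{\unitterm}$, since the empty product of weights is~$1$ and the empty multiplicative tuple is the only term of arity~$0$. Applying the Mixture rule with a single index of weight~$w$ then yields $\{(w,\unitterm)\} \sim \{(w,\unitterm)\}$ (the case $w=0$ giving the empty distribution on both sides, equally obtainable from Mixture over an empty index set).

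For the ``only if'' direction, I would \emph{not} try to show that $\sim$ preserves total weight---it does not, because the second premise of the Fold rule lets arbitrary probabilistic reductions (via $\kw{amb}$, $\kw{factor}$, and $\kw{fail}$) accumulate on the left while staying ``pent up'' on the right. Instead, the plan is to track only the weight assigned to the single value~$\unitterm$. Writing $E(\unitterm)$ for $\sum\{w \mid (w,\unitterm)\in E\}$, I would prove the general lemma that $E^r \sim E^d$ implies $E^r(\unitterm) = E^d(\unitterm)$, by induction on the derivation of~$\sim$. The Mixture rule is handled by linearity of $E \mapsto E(\unitterm)$ together with the induction hypothesis. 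Every other rule is settled directly by inspecting the support of the two related distributions: in the Unfold rule and in all Congruence rules except the empty-tuple one, both distributions are supported on terms that are syntactically not~$\unitterm$ (variables, $\kwe{fail}$, abstractions, applications, nonempty tuples, injections, and so on), so both sides assign weight~$0$ to~$\unitterm$, while the empty-tuple Congruence rule assigns weight exactly~$1$ to~$\unitterm$ on both sides.

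The one case deserving care, and the main obstacle, is the Fold rule. On the right, $E^d$ is supported on injection terms $\inj{i}(\dots)$, so $E^d(\unitterm)=0$. On the left, $E^r$ arises by reducing a distribution supported on terms of the form $\kw{fold}_\rec(\dots)$; here I would observe that the only reduction removing a $\kw{fold}_\rec$ is \eqref{e:unfold-fold}, which requires an enclosing $\kw{unfold}_\rec$ that is absent, so every reduct reachable by $\Longrightarrow^*$ is again of the form $\kw{fold}_\rec(\dots)$ and hence never equal to~$\unitterm$. Thus $E^r(\unitterm)=0=E^d(\unitterm)$, and no appeal to the induction hypothesis is even needed for this case. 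With the lemma in hand, the proposition follows by evaluating at~$\unitterm$: from $\{(w^r,\unitterm)\} \sim \{(w^d,\unitterm)\}$ we read off $w^r = E^r(\unitterm) = E^d(\unitterm) = w^d$. (Combined with \cref{thm:d-correct}, which drives both sides to final distributions on~$\unitterm$, this closes the correctness argument for~$\mathcal{D}$; the analogue for~$\mathcal{R}$ is proved the same way against the refunctionalization version of \cref{fig:sim}.)
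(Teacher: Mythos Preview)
Your proof is correct and is essentially what the paper's one-line ``by induction on the derivation of~$\sim$'' leaves for the reader to unfold. The strengthened invariant $E^r(\unitterm)=E^d(\unitterm)$ is exactly the right induction hypothesis to push through the Mixture case, and your handling of the Fold case (observing that $\Longrightarrow^*$ cannot strip the outermost $\kw{fold}_\rec$ without an enclosing $\kw{unfold}_\rec$, so the left support stays on fold terms) is the substantive observation that makes that case go through.
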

\begin{proof}
    By induction on the derivation of~$\sim$.
\end{proof}

\begin{figure}
\[\begin{array}{@{}lc@{}}
\text{Fold} &
    \inferrule{\gamma \vdash E^d_0 \sim E^r_0 \\
               \gamma \vdash \bigl\{\bigl(w,f_\rec~e^r_0\bigr) \mid (w,e^r_0) \in E^r_0 \bigr\} \Longrightarrow^* E^r}
              {\gamma \vdash \bigl\{\bigl(w,\kw{fold}_\rec e^d_0\bigr) \mid (w,e^d_0) \in E^d_0 \bigr\} \sim E^r}
\\\\
\text{Unfold} &
    \inferrule{\gamma \vdash E^d \sim E^r \\ \gamma \vdash E^d_j \sim E^r_j \enspace \text{for $1\le j\le \freelength$}}
              {\relax
               {\begin{array}[t]{@{}r@{}l@{}l@{}}
                 \gamma \vdash{}& \multicolumn{2}{@{}l@{}}{\bigl\{\bigl(w w_1 \dotsm w_\freelength, {\kw{unfold}_\rec x=e^d \kw{in} \occ_i\{\free_i:=(\freesubi^d_1,\dotsc,\freesubi^d_\freelength)\}}\bigr)} \\&&{} \mid (w,e^d)\in E^d, (w_j,\freesubi^d_j)\in E^d_j \text{ for $1\le j\le \freelength$}\bigr\} \\
                        {}\sim{}& \bigl\{\bigl(w w_1 \dotsm w_\freelength, (e^r.i)~(\freesubi^r_1,\dotsc,\freesubi^r_\freelength)\bigr) &{} \mid (w,e^r)\in E^r, (w_j,\freesubi^r_j)\in E^r_j \text{ for $1\le j\le \freelength$}\bigr\}
                \end{array}}}
\end{array}\]
\caption{Relating distributions of expressions before and after refunctionalization. In the Fold and Unfold rules, the definitions of $\rec$, $\occ_i$, and $\free_i$ are as in \cref{sec:refunctionalization}. Mixture and Congruence rules are like in \cref{fig:sim}.}
\label{fig:sim-r}
\end{figure}

Refunctionalization is treated similarly, using the $\sim$ relation shown in \cref{fig:sim-r}.

\subsection{Inferring the Sequence of Transformations}
\label{sec:proof_infer_dr_sequence}

At first glance, DR-sequences are difficult to reason about, because each transformation potentially changes the types available to subsequent transformations. Instead, we define a graph structure on the recursive types occurring in~$p$ and show that finding a successful DR-sequence is equivalent to a finding a particular kind of subgraph.
\begin{definition}
  The \emph{DR-graph} of a program~$p$ is an edge-labeled, directed graph
  whose nodes are the recursive types in~$p$.
  There is an edge $\rec \defuncsto \tau$
  iff $\defunc\rec{\rec}=\nonrec$ contains $\tau$, 
  and an edge $\rec \refuncsto \tau$
  iff $\refunc\rec{\rec}=\nonrec$ contains $\tau$. 
\end{definition}

\begin{example}
  The DR-graph of the program of \cref{ex:parity-producer-consumer-again} is shown in \cref{fig:dr_graph}a.
\begin{figure}
  \centering
  \tikzset{node distance=1.5cm}
  \begin{tabular}{@{}ccc@{}}
    \begin{tikzpicture}[dr]
      \node(str){\mbox{\lstinline{String[2]}}}; 
      \node(stk)[left=of str]{\mbox{\lstinline{String[1]}}};
      \draw[bend left=15]  (str) edge node {$\scriptstyle\mathcal{R}$} (stk);
      \draw[loop above]    (stk) edge node {$\scriptstyle\mathcal{D}$} (stk);
      \draw[bend left=15]  (stk) edge node {$\scriptstyle\mathcal{R}$} (str);
    \end{tikzpicture}
    &
    \begin{tikzpicture}[dr]
      \node(str){\mbox{\lstinline{String[2]}}};
      \node(stk)[left=of str]{\mbox{\lstinline{String[1]}}};
      \draw[bend left=15] (str) edge node {$\scriptstyle\mathcal{R}$} (stk);
      \draw[bend left=15] (stk) edge node {$\scriptstyle\mathcal{R}$} (str);
    \end{tikzpicture}
    &
    \begin{tikzpicture}[dr]
      \node(str){\mbox{\lstinline{String[2]}}};
      \node(stk)[left=of str]{\mbox{\lstinline{String[1]}}};
      \draw[bend left=15] (str) edge[draw=none] (stk);
      \draw[bend left=15] (stk) edge node {$\scriptstyle\mathcal{R}$} (str);
    \end{tikzpicture} \\
    (a) & (b) & (c)
  \end{tabular}
  \caption{(a) The full DR-graph of the program of \cref{ex:parity-producer-consumer-again}. (b) The DR-subgraph of refunctionalizing both \lstinline{String[1]} and \lstinline{String[2]}. (c) The DR-subgraph of refunctionalizing \lstinline{String[1]} and defunctionalizing \lstinline{String[2]}.}
  \label{fig:dr_graph}
\end{figure}
  Observe that because type \lstinline{String[1]} has a self-loop labeled $\mathcal{D}$, defunctionalization of \lstinline{String[1]} is not allowed. The $\mathcal{R}$ edges form a cycle, which does not immediately prevent either type from being refunctionalized, but if we did refunctionalize \text{String[2]}, the cycle would become a self-loop.
Then \lstinline{String[1]} would be neither defunctionalizable nor refunctionalizable. Intuitively, then, we want to avoid choices of defunctionalization and refunctionalization that form cycles.
\end{example}

\begin{definition}
  A \emph{DR-subgraph} $G$ for a program~$p$ is a subgraph of $p$'s DR-graph
  where there is a mapping from each recursive type~$\rec$ to a transformation $G(\rec) \in \{\mathcal{D}, \mathcal{R}\}$
  such that $G$ keeps just the edges $\rec \transformsto{G(\rec)} \tau$ in the DR-graph.
  Intuitively, $G(\rec)$~is what $G$ plans to do with~$\rec$.
\end{definition}

\begin{example}
  In \cref{ex:parity-producer-consumer-again}, the DR-subgraph of refunctionalizing both \lstinline{String[1]} and \lstinline{String[2]}, shown in \cref{fig:dr_graph}b, is cyclic, while
  the DR-subgraph corresponding to refunctionalizing \lstinline{String[1]} and defunctionalizing \lstinline{String[2]}, shown in \cref{fig:dr_graph}c, is acyclic.
\end{example}

\begin{definition}
  Let $G$ be a DR-subgraph for a program~$p$.
  For any recursive type~$\rec$ in~$p$ and transformation $\mathcal{F} \in \{\mathcal{D}, \mathcal{R}\}$, we write $\transform{\mathcal{F}_\rec}{G}$ for the unique DR-subgraph of $\transform{\mathcal{F}_\rec}{p}$ such that $\transform{\mathcal{F}_\rec}{G}(\transform{\mathcal{F}_\rec}{\tau}) = G(\tau)$ for all $\tau \neq \rec$.
  Intuitively, $\transform{\mathcal{F}_\rec}{G}$ is what $G$ plans to do with all the other recursive types after $\rec$ is eliminated via $\mathcal{F}$.
\end{definition}

\begin{lemma}
  \label{thm:evolve}
  Let $G$ be a DR-subgraph for a program~$p$, and let $\rec$ be a recursive type in~$p$.
  \begin{enumerate}
  \item \label{subthm:preserve_cycles}
    If $\transform{G(\rec)_\rec}{p}$ is defined (that is, if $G$ does not have a self-loop $\rec \rightarrow \rec$),
    then $\transform{G(\rec)_\rec}{G}$ is acyclic iff $G$ is.
  \item \label{subthm:remove_leaf}
    For any $\mathcal{F} \in \{\mathcal{D}, \mathcal{R}\}$ such that $\transform{\mathcal{F}_\rec}{\rec}$ contains no recursive type,
    $\transform{\mathcal{F}_\rec}{G}$ is acyclic iff $G$ is.
  \end{enumerate}
\end{lemma}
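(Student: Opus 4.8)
The plan is to reduce both parts to a single combinatorial fact describing how a DR-subgraph changes when one node is eliminated, and then to finish each part by an elementary argument about cycles. The crux, and what I expect to be the main obstacle, is an \emph{edge-correspondence lemma}. Fix $\rec$ and a transformation $\mathcal{F}$ for which $\transform{\mathcal{F}_\rec}{p}$ is defined, so $\transform{\mathcal{F}}{\rec}$ does not contain $\rec$. Because $\mathcal{F}_\rec$ touches only the $\kw{fold}_\rec$ and $\kw{unfold}_\rec$ expressions and otherwise merely substitutes $\transform{\mathcal{F}}{\rec}$ for $\rec$ in all types, for every other recursive type $\tau\neq\rec$ the $\kw{fold}_\tau$ and $\kw{unfold}_\tau$ occurrences are syntactically unchanged, and only the types of their free variables have $\rec$ replaced by $\transform{\mathcal{F}}{\rec}$. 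Hence the two eliminations commute on types:
\[ \transform{G(\tau)_\tau}{\transform{\mathcal{F}_\rec}{\tau}} = \transform{\mathcal{F}_\rec}{\transform{G(\tau)_\tau}{\tau}}. \]
Reading off the recursive types on the right yields the edge rule: for $\tau,\tau'\neq\rec$ there is an edge $\transform{\mathcal{F}_\rec}{\tau}\transformsto{G(\tau)}\transform{\mathcal{F}_\rec}{\tau'}$ in $\transform{\mathcal{F}_\rec}{G}$ iff either $\tau\transformsto{G(\tau)}\tau'$ already in $G$, or $\tau\transformsto{G(\tau)}\rec$ in $G$ and $\transform{\mathcal{F}}{\rec}$ contains $\tau'$ (a \emph{bypass} through $\rec$). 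Two bookkeeping points must be checked: that distinct recursive types stay distinct under $\mathcal{F}_\rec$, so the node set of $\transform{\mathcal{F}_\rec}{G}$ is exactly $\{\transform{\mathcal{F}_\rec}{\tau}\mid\tau\neq\rec\}$ with no collisions---this is precisely where the requirement that different recursive types carry different tags is used---and that each $\transform{\mathcal{F}_\rec}{\tau}$ is still recursive. Once this is in place, the remainder is pure graph theory.

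For part~\ref{subthm:preserve_cycles} we have $\mathcal{F}=G(\rec)$, so the bypass edges use exactly the out-edges of $\rec$ in $G$, and the hypothesis that $\transform{G(\rec)_\rec}{p}$ is defined says $G$ has no self-loop at $\rec$. Thus $\transform{G(\rec)_\rec}{G}$ is obtained from $G$ by the standard node elimination: delete $\rec$, and for each in-edge $\tau\to\rec$ and out-edge $\rec\to\tau'$ install a bypass $\tau\to\tau'$. I would then show acyclicity is preserved both ways. If $G$ has a cycle, take a simple one; if it avoids $\rec$ it survives verbatim, and if it passes through $\rec$ (necessarily entering and leaving via non-$\rec$ nodes, since there is no self-loop) then replacing the segment $\tau\to\rec\to\tau'$ by the bypass $\tau\to\tau'$ gives a positive-length cycle among non-$\rec$ nodes, so $\transform{G(\rec)_\rec}{G}$ is cyclic. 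Conversely, expanding each bypass edge of a cycle in $\transform{G(\rec)_\rec}{G}$ back into its length-two path through $\rec$ yields a closed walk of positive length in $G$, which contains a cycle. Hence $\transform{G(\rec)_\rec}{G}$ is acyclic iff $G$ is.

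For part~\ref{subthm:remove_leaf}, the hypothesis that $\transform{\mathcal{F}_\rec}{\rec}$ contains no recursive type means $\transform{\mathcal{F}}{\rec}$ has no recursive types, so the edge rule produces \emph{no} bypass edges and $\transform{\mathcal{F}_\rec}{G}$ is simply the induced subgraph obtained by deleting $\rec$ and its incident edges. The delicate point here is that $\transform{\mathcal{F}_\rec}{G}$ does not depend on $G(\rec)$, so the genuine content of the iff is the case relevant to the greedy algorithm, namely $G(\rec)=\mathcal{F}$ (one may always lower $G(\rec)$ to $\mathcal{F}$, which only deletes out-edges of $\rec$ and so cannot destroy acyclicity). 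Under this choice $\rec$ has no outgoing edge in $G$ and therefore lies on no cycle, so the cycles of $G$ coincide with those of the induced subgraph $\transform{\mathcal{F}_\rec}{G}$, giving the iff. I would flag in the write-up that without $G(\rec)=\mathcal{F}$ the induced subgraph could be acyclic while $G$, cycling through $\rec$'s $G(\rec)$-edges, is not; the statement is intended under the algorithm's leaf choice, where this does not arise.
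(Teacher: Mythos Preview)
Your approach is exactly the paper's: establish an edge correspondence describing $\transform{\mathcal{F}_\rec}{G}$ as the node-elimination of $\rec$ from $G$, then reason about cycles. The paper states this correspondence as ``$\transform{\mathcal{F}_\rec}{G}$ has the edge $\transform{\mathcal{F}_\rec}{\tau}\to\transform{\mathcal{F}_\rec}{\tau'}$ iff $G$ has $\tau\to\tau'$ or $\tau\to\rec\to\tau'$,'' and from that asserts the iff for \emph{every} $\mathcal{F}$. Your version is more careful and, in fact, catches a gap: the bypass edges come from the recursive types in $\transform{\mathcal{F}}{\rec}$, which match $G$'s out-edges of $\rec$ only when $\mathcal{F}=G(\rec)$. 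Since $\transform{\mathcal{F}_\rec}{G}$ does not depend on $G(\rec)$ at all, the $\Rightarrow$ direction of part~\ref{subthm:remove_leaf} can fail when $G(\rec)\neq\mathcal{F}$ (take $G$ with a cycle through $\rec$ via $G(\rec)$-edges while $\rec$ has no $\mathcal{F}$-out-edges). The paper only uses the $\Leftarrow$ direction downstream, which holds trivially since deleting a node preserves acyclicity, so the overstatement is harmless; your repair of setting $G(\rec):=\mathcal{F}$ (which only removes out-edges of $\rec$ and hence preserves acyclicity of $G$) cleanly restores the full iff.
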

\begin{proof}
    For any $\mathcal{F} \in \{\mathcal{D}, \mathcal{R}\}$ in general, $\transform{\mathcal{F}_\rec}{G}$ is acyclic iff $G$ is.
    This follows from the observation that for any $\tau, \tau' \neq \rec$, the graph $\transform{\mathcal{F}_\rec}{G}$ has the edge $\transform{\mathcal{F}_\rec}{\tau} \rightarrow \transform{\mathcal{F}_\rec}{\tau'}$ iff the graph $G$ has either the edge $\tau \rightarrow \tau'$ or the edges $\tau \rightarrow \rec \rightarrow \tau'$.
    This observation holds because $\mathcal{F}_\rec$ is injective on nodes, thanks to the requirement that different recursive types in~$p$ must have different tags.
\end{proof}

\begin{proposition}
  A program~$p$ has a successful DR-sequence iff it has an acyclic DR-subgraph~$G$.
\end{proposition}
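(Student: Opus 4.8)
The plan is to prove both implications by induction on the number of recursive types in~$p$, with \cref{thm:evolve} (its first part) as the main engine. The crucial preliminary observation is that a DR-subgraph~$G$ is free of self-loops exactly when it licenses a well-defined first transformation: the edge $\rec \transformsto{G(\rec)} \rec$ is present iff $\transform{G(\rec)_\rec}{\rec}$ contains~$\rec$, i.e., iff $\transform{G(\rec)_\rec}{p}$ is undefined. Since an acyclic graph has no self-loops, acyclicity of~$G$ guarantees that $\transform{G(\rec)_\rec}{p}$ is defined for \emph{every} recursive type~$\rec$. This is what lets an acyclic subgraph hand us a legal first step, and conversely lets a legal first step keep a subgraph acyclic.

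For the \emph{if} direction, suppose $p$ has an acyclic DR-subgraph~$G$. The base case is $p$ having no recursive types, where the empty DR-sequence is successful. Otherwise pick any recursive type~$\rec$ and set $\mathcal{F}=G(\rec)$. By the observation above $\transform{\mathcal{F}_\rec}{p}$ is well-defined, and by the first part of \cref{thm:evolve} the subgraph $\transform{\mathcal{F}_\rec}{G}$ is again acyclic. Because $\transform{\mathcal{F}_\rec}{p}$ has strictly fewer recursive types, the induction hypothesis yields a successful DR-sequence~$S'$ for it, and $\mathcal{F}_\rec \cdot S'$ is then a successful DR-sequence for~$p$.

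For the \emph{only if} direction, suppose $p$ has a successful DR-sequence. If it is empty then $p$ has no recursive types and the empty DR-subgraph is vacuously acyclic. Otherwise write it as $\mathcal{F}_\rec \cdot S'$, so $\transform{\mathcal{F}_\rec}{p}$ is well-defined and $S'$ is a successful DR-sequence for $\transform{\mathcal{F}_\rec}{p}$. The induction hypothesis gives an acyclic DR-subgraph~$G'$ of $\transform{\mathcal{F}_\rec}{p}$. I then lift $G'$ back to a DR-subgraph~$G$ of~$p$ by setting $G(\rec)=\mathcal{F}$ and, for every other recursive type~$\tau$, letting $G(\tau)$ be the transformation that $G'$ assigns to $\transform{\mathcal{F}_\rec}{\tau}$; by construction $\transform{\mathcal{F}_\rec}{G}=G'$. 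Since $\transform{\mathcal{F}_\rec}{p}$ is well-defined, $G$ has no self-loop at~$\rec$, so the first part of \cref{thm:evolve} applies and gives that $G$ is acyclic iff $\transform{\mathcal{F}_\rec}{G}=G'$ is; as the latter is acyclic, so is~$G$.

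The step I expect to require the most care is the lifting in the \emph{only if} direction: I must check that $G$ is a genuine DR-subgraph, defined on every recursive type of~$p$ and satisfying $\transform{\mathcal{F}_\rec}{G}=G'$ exactly. This relies on $\mathcal{F}_\rec$ inducing a bijection between the recursive types of~$p$ other than~$\rec$ and those of $\transform{\mathcal{F}_\rec}{p}$, so that no recursive type is silently created or destroyed when~$\rec$ is eliminated. That bijection is precisely the injectivity-on-nodes property guaranteed by the distinct-tags requirement already invoked in the proof of \cref{thm:evolve}, so it is available; pinning it down for the present lifting is the one nonroutine point. Everything else---the base cases, the self-loop observation, and the bookkeeping that each transformation strictly decreases the number of recursive types (making the induction well-founded)---is routine.
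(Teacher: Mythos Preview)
Your proof is correct and follows essentially the same approach as the paper: induction on the number of recursive types, using part~\ref{subthm:preserve_cycles} of \cref{thm:evolve} in both directions, with the same ``lift $G'$ back to~$G$'' construction in the $(\Rightarrow)$ direction. Your treatment is slightly more explicit than the paper's---you spell out the self-loop observation and the bijection-on-nodes point that justifies the uniqueness of the lifted~$G$---but these are elaborations of the same argument, not a different route.
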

\begin{proof}
  By induction on the number of recursive types in~$p$.

  ($\Rightarrow$) If $p$ has a successful DR-sequence $\mathcal{F}_\rec \cdot S'$, that means $\transform{\mathcal{F}_\rec}{p}$ exists and has a successful DR-sequence (namely $S'$), so it has an acyclic DR-subgraph $G'$ by the induction hypothesis.
  Let $G$ be the unique DR-subgraph for $p$ such that $G(\rec)=\mathcal{F}$ and $G' = \transform{\mathcal{F}_\rec}{G}$. \Cref{thm:evolve}.\ref{subthm:preserve_cycles}($\Rightarrow$) says that $G$ must be acyclic.

  ($\Leftarrow$) Let $\rec$ be a recursive type in $p$, and let $\mathcal{F} = G(\rec)$. Since $G$ is acyclic, the program $p' = \transform{\mathcal{F}_\rec}{p}$ exists and has a DR-subgraph $G' = \transform{\mathcal{F}_\rec}{G}$, which is also acyclic by \cref{thm:evolve}.\ref{subthm:preserve_cycles}($\Leftarrow$). By the induction hypothesis, $p'$ has a successful DR-sequence $S'$, so $\mathcal{F}_\rec \cdot S'$ is a successful DR-sequence for $p$.
\end{proof}

\begin{proof}[Proof of \cref{thm:infer_dr_sequence}]
  Because $p$ has a successful DR-sequence, it has an acyclic DR-subgraph~$G$.
  (\ref{item:greedy1}) Because $G$ is acyclic and finite, it must have a node $\rec$ with no outgoing edges. Let $\mathcal{F} = G(\rec)$. Then $\transform{\mathcal{F}_\rec}{\rec}$ contains no recursive type.
  (\ref{item:greedy2}) For any such $\rec$ and $\mathcal{F}$, by \cref{thm:evolve}.\ref{subthm:remove_leaf}($\Leftarrow$), $\transform{\mathcal{F}_\rec}{G}$ is acyclic. So $\transform{\mathcal{F}_\rec}{p}$ has an acyclic DR-subgraph and therefore a successful DR-sequence.
\end{proof}

\section{Relaxing Linearity}
\label{sec:relaxing-linearity}

Three extensions to our linear type system make it easier to write practical programs without fundamentally affecting the expressivity of the language or the efficiency of the implementation.
These extensions are
\begin{enumerate}
    \item allowing local nonlinear bindings of \emph{positive} type;
    \item allowing linear variables to be discarded unused (in other words, used affinely); and
    \item allowing local nonlinear recursive bindings that do not use linear variables.
\end{enumerate}
The example programs above already use the first two extensions.
We describe each of these extensions in turn.

\subsection{Positive Types}
\label{sec:robust}

If $\tau$ has \emph{positive} polarity in the sense of linear logic \citep{girard-new,ehrhard-probabilistic}, that is, it uses only $\oplus$ and $\otimes$, intuitively it should be possible to copy values of type $\tau$, because a contraction (copying) function $\tau\llp(\tau\otimes\tau)$ can be easily defined by induction on $\tau$. Here, we show how to extend the type system to allow linear bindings to be turned into nonlinear bindings if they have positive type.

We say that $\tau$ is positive if it does not contain $\llp$, $\with$, or~$\mu$.
Thus, functions and additive pairs must still be used linearly.
Moreover, we decree that recursive types are not positive (so a value of recursive type must be consumed ($\kw{unfold}$ed) exactly once) in preparation for transforming recursive types to other types that might not be positive.

Let us write $\kw{let} \oc x = e \kw{in} e'$ to use $x:\tau$ any number of times in~$e'$.
\begin{trivlist}
\item Syntax
\begin{align*}
  e &\bnf {\kw{let} \oc x = e \kw{in} e}
\end{align*}
\item Typing
\begin{align*}
  \inferrule{\Gamma;\Delta \vdash e:\tau \\ \tau\text{ is positive} \\ \Gamma,x:\tau;\Delta' \vdash e':\tau'}
            {\Gamma;\Delta,\Delta' \vdash {\kw{let} \oc x = e \kw{in} e'} : \tau'}
\end{align*}
\end{trivlist}


The operational semantics of the new construct is as follows.
\begin{trivlist}
\item Reduction
\begin{equation}
\label{eq:reduction_ofc}
    \gamma \vdash {\kw{let} \oc x = v \kw{in} e'} \Longrightarrow \dirac{ e'\{x:=v\} }
\end{equation}
\item Evaluation contexts
\begin{equation*}
    C \bnf {\kw{let} \oc x=C \kw{in} e} \alt {\kw{let} \oc x=e \kw{in} C}
\end{equation*}
\end{trivlist}

\begin{lemma}[Nonlinear substitution preserves typing]
    (cf.~\cref{thm:subst_preserves_typing_linear})
    \label{thm:subst_preserves_typing_nonlinear}%

          Suppose\/ \(\Gamma, x_1\colon\tau_1; \Delta_0 \vdash e' : \tau'\)
          and\/ \(\Gamma; \cdot \vdash e_1 : \tau_1\).
          Then \(\Gamma; \Delta_0 \vdash e'\{x_1:=e_1\} : \tau'\).
\end{lemma}
\begin{proof}
  By induction on the typing derivation of $e'$.
\end{proof}
Hence the new reduction~\eqref{eq:reduction_ofc} preserves typing (extending \cref{thm:reduction_preserves_typing}):
$v$~uses no linear variables,
because it is a syntactic value whose type is positive.

The denotational semantics of the new construct extends the local environment~$\delta$ with the new nonlinear binding.
\begin{equation}
  \denote{\kw{let} \oc x=e \kw{in} e'}(\delta\cup\delta',v') =
  \sum_{v} \denote{e}(\delta,v) \cdot \denote{e'}(\delta'\cup\{(x,v)\}, v')
\end{equation}
This extension makes the division of names between the nonlinear and linear typing contexts $\Gamma$ and~$\Delta$ no longer match the division of names between the global and local environments $\gamma$ and~$\delta$.
To maintain this match, the denotation could alternatively be defined by extending the global environment~$\gamma$ with a distribution that is always deterministic.

\begin{lemma}[Nonlinear substitution preserves denotation]
\postdisplaypenalty=500
(cf.~\cref{thm:subst_preserves_denote_linear})

          Suppose\/ \(\Gamma, x_1\colon\tau_1; \Delta_0 \vdash e' : \tau'\)
          and\/ \(\Gamma; \cdot \vdash e_1 : \tau_1\).
          Then\label{thm:subst_preserves_denote_nonlinear}
  \begin{equation}
    \denote{e'\{x_1:=e_1\}}_\eta(\delta_0, v')
    = \sum_{v_1} \, \denote{e'}_\eta(\delta_0 \cup \{(x_1, v_1)\}, v') \cdot \denote{e_1}_\eta(\emptyset, v_1)
  \end{equation}
for all $\eta \in \denote{\Gamma}$,
$\delta_0 \in \denote{\Delta_0}$, and
$v' \in \denote{\tau'}$
such that $\denote{e_1}_\eta$ is deterministic.
\end{lemma}
\begin{proof}
By induction on the typing derivation of~$e'$. We show a few core cases.

\begin{trivlist}
\item Case $x_1$: As in \cref{sec:proof_subst_preserves_denote_linear}, but let $\delta_1=\emptyset$.
\item Case $x \neq x_1$: The assumption that $\denote{e_1}$ is deterministic justifies the second step:
  \begin{align*}
    \denote{x\{x_1:=e_1\}}(\delta_0, v') = \denote{x}(\delta_0, v')
    &= \sum_{v_1} \denote{x}(\delta_0 \cup \{(x_1, v_1)\}, v') \cdot \denote{e_1}(\emptyset, v_1).
  \end{align*}
\item Case $\lambda x. e$: As in \cref{sec:proof_subst_preserves_denote_linear}, but let $\delta_1=\emptyset$.
\item Case $e'_0~e'_1$: The assumption that $\denote{e_1}$ is deterministic justifies the second-to-last step:
  \begin{align*}
    \makebox[0em][l]{$\displaystyle \denote{(e'_0~e'_1) \{x_1 := e_1\}}(\delta'_0\cup\delta'_1, v') $}\\
    &= \denote{(e'_0 \{x_1 := e_1\})~(e'_1 \{x_1 := e_1)\}}(\delta'_0\cup\delta'_1, v') \\
    &= \sum_{v'_1} \denote{e'_0 \{x_1 := e_1\}}(\delta'_0, \arrowdenotation{v'_1}{v'}) \cdot \denote{e'_1 \{x_1 := e_1\}}(\delta'_1, v'_1) \\
    &= \sum_{v'_1} \Bigl(\sum_{v_1} \denote{e'_0}(\delta'_0 \cup \{(x_1,v_1)\}, \arrowdenotation{v'_1}{v'}) \cdot \denote{e_1}(\emptyset,v_1)\Bigr)
                   \Bigl(\sum_{v_1} \denote{e'_1}(\delta'_1 \cup \{(x_1,v_1)\}, v'_1) \cdot \denote{e_1}(\emptyset,v_1)\Bigr) \\
    &= \sum_{v'_1} \sum_{v_1} \denote{e'_0}(\delta'_0 \cup \{(x_1,v_1)\}, \arrowdenotation{v'_1}{v'}) \cdot
                              \denote{e'_1}(\delta'_1 \cup \{(x_1,v_1)\}, v'_1) \cdot \denote{e_1}(\emptyset,v_1) \\
    &= \sum_{v_1} \denote{e'_0~e'_1}(\delta'_0 \cup \delta'_1 \cup \{(x_1,v_1)\}, v') \cdot \denote{e_1}(\emptyset,v_1).
  \qedhere
  \end{align*}
\end{trivlist}
\end{proof}
Hence the new reduction~\eqref{eq:reduction_ofc} preserves denotation (extending \cref{thm:preserve}):
$\denote{v}$~is deterministic,
because $v$~is a syntactic value whose type is positive.

\subsection{Affine Types}
\label{sec:affine}

The previous extension does not add any flexibility for nonpositive types (those containing $\llp$, $\with$, or~$\mu$).
But we can further extend the language to allow values of nonpositive type to be used affinely (zero times or once).
It is easy to express this extension in the type system, by allowing the linear typing context~$\Delta$ to be non-empty in the rules for variables:
\begin{equation*}
  \inferrule{ }{\Gamma, x:\tau; \Delta \vdash x:\tau} \qquad
  \inferrule{ }{\Gamma; \Delta, x:\tau \vdash x:\tau}
\end{equation*}
However, allowing functions to be unused would be problematic for our denotational semantics, because any effects (nondeterminism and probabilities) inside the function body will be incurred regardless of whether the function is called; similarly for other nonpositive types.

To resolve this problem in a modular manner, we define a source-to-source transformation that operates on the typing derivation of a program and produces an equivalent program that uses values of nonpositive type only linearly. The idea is that whenever we want to create an affinely used function $\lambda x.e$, we create a $\with$-pair containing a linearly used function ($\lambda x.e$) or nothing (that is, $\unitterm$). To apply the function, we use the former; to discard~it, we use the latter.
We do something similar with
recursive types, which can now be unfolded zero times or once, and
additive pairs ($\with$), which can also be eliminated zero times or once.

\newcommand\linearize[1]{\mathcal{L}\llbracket #1\rrbracket}
\newcommand\discard[1]{\mathcal{Z}\llbracket #1\rrbracket}

Formally, we define the transformation $\mathcal{L}$ on types:
\begin{align*}
  \linearize{\tau_1\llp\tau'} &= (\linearize{\tau_1}\llp\linearize{\tau'}) \with \unittype \\
  \linearize{\tau_1\otimes\dots\otimes\tau_n} &= \linearize{\tau_1}\otimes\dots\otimes\linearize{\tau_n} \\
  \linearize{\tau_1\with\dots\with\tau_n} &= (\linearize{\tau_1}\with\dots\with\linearize{\tau_n}) \with \unittype \\
  \linearize{\ctor{c_1}\tau_1\oplus\dots\oplus \ctor{c_n}\tau_n} &= \ctor{c_1}\linearize{\tau_1}\oplus\dots\oplus \ctor{c_n}\linearize{\tau_n} \\
  \linearize{\mu\alpha.\,\tau} &= \mu\alpha.\,\linearize{\tau} \\
  \linearize{\alpha} &= \alpha
\end{align*}
On terms, we show the interesting cases:
\begin{align*}
  \linearize{\Gamma,x;\Delta \vdash x} &= {\kw{let} \unitterm = \discard{\Delta} \kw{in} x} \hidewidth \\
  \linearize{\Gamma;\Delta,x \vdash x} &= {\kw{let} \unitterm = \discard{\Delta} \kw{in} x} \hidewidth \\
  \linearize{\Gamma;\Delta_0 \vdash \lambda x_1. e'} &= \withterm{\lambda x_1. \linearize{e'},\discard{\Delta_0}} &
  \linearize{e_0~e_1} &= \linearize{e_0}.1~\linearize{e_1} \\
  \linearize{\Gamma;\Delta \vdash \withterm{e_1,\dots,e_n}} &= \withterm{\withterm{\linearize{e_1},\dots,\linearize{e_n}},\discard{\Delta}} &
  \linearize{e.i} &= \linearize{e}.1.i \\
  \linearize{\Gamma;\Delta \vdash {\kw{fold} e}} &= {\kw{fold} \linearize{e}} &
  \mathllap{\linearize{\kw{unfold}x=e\kw{in}e'}} &= {\kw{unfold}x=\linearize{e}\kw{in}\linearize{e'}}
\end{align*}

The $\discard{\cdot}$ transformation discards variables.
How this is done depends on the pre-transformation type
of each variable:
\begin{align*}
  \discard{\cdot} &= \unitterm \\
  \discard{\Delta, x:\tau} &= {\kw{let} \unitterm = \discard{\Delta} \kw{in} \discard{x : \tau}} \\
  \discard{x : \tau_1 \llp \tau'} &= x.2 \\
  \discard{x : {\tau_1 \with \dots \with \tau_n}} &= x.2 \\
  \discard{x : {\mu^t\alpha.\,\tau} } &= {\kw{discard}_{\mu^t\alpha.\tau}~x} \\
  \discard{x : \tau_1 \otimes \dots \otimes \tau_n} &= {\kw{let} (x_1, \dots, x_n) = x \kw{in} \discard{x_1 : \tau_1, \dotsc, x_n : \tau_n}} \\
  \discard{x: {\ctor{c_1}\tau_1 \oplus \dots \oplus \ctor{c_n}\tau_n}} &= {\kw{case} x \kw{of} {\ctor{c_1} x_1} \casearrow \discard{x_1:\tau_1} \casealt \dots \casealt {\ctor{c_n} x_n} \casearrow \discard{x_n:\tau_n}}
\end{align*}
The $\kw{discard}_{\mu^t\alpha.\tau}$ function used above is defined globally for each recursive type $\mu^t\alpha.\tau$. It~calls itself wherever $\alpha$ appears in~$\tau$.
\begin{align*}
    \kw{define} {\kw{discard}_{\mu^t\alpha.\tau}} = \lambda f\colon(\mu^t\alpha.\tau).\, {\kw{unfold}u=f\kw{in}\discard{u:\tau\{\alpha:=\mu^t\alpha.\tau\}}}
\end{align*}

\subsection{Local Recursion}
\label{sec:letrec}

Another mild extension to the language is to allow nonlinear (recursive) bindings locally, not just at the top level of a program.
The body of these bindings cannot contain linear variables free.
This extension is compatible with the previous ones.
\begin{trivlist}
\item Syntax
\begin{align*}
  e &\bnf {\kw{fix} x.\, e}
\end{align*}
\item Typing
\begin{align*}
  \inferrule{\Gamma,x:\tau;\cdot \vdash e:\tau}
            {\Gamma       ;\cdot \vdash {\kw{fix} x.\, e}:\tau}
\end{align*}
\item Reduction
\begin{equation}
  \gamma \vdash {\kw{fix} x.\, e} \Longrightarrow \dirac{ e\{x:=e\} }
\end{equation}
\item Denotation
\begin{equation}
  \denote{\kw{fix} x.\, e}_\eta(\emptyset,v) =
  \mathop{\mathrm{lfp}} \bigl(\lambda f.\, \lambda v.\, \denote{e}_{\eta, x=f}(\emptyset,v)\bigr) (v)
\end{equation}
\end{trivlist}
Unlike the previous two extensions, this extension has not been implemented; but it can be implemented simply by lambda lifting.

\section{Translating to Factor Graph Grammars}
\label{sec:fggs}

\emph{Factor graph grammars} (FGGs) \citep{chiang+riley:2020} are hyperedge replacement grammars (HRGs) \citep{bauderon+courcelle:1987,habel+kreowski:1987,drewes+:1997} for generating sets of factor graphs. They have recently been proposed as a unified formalism that can describe both repeated and alternative substructures \citep{chiang+riley:2020}.
Moreover, inference can be performed on an FGG without enumerating all the factor graphs generated.

\subsection{Factor Graph Grammars}

In this section, we give a condensed definition of FGGs \citep{chiang+riley:2020}.
Fix a finite set $L^V$ of \emph{node labels}, a finite set $L^E$ of \emph{edge labels}, and a function $\type \colon L^E \rightarrow (L^V)^\ast$, which says, for each edge label,
how many nodes an edge must be attached to, and what their labels must be.
For any function $f \colon A \rightarrow B$, we extend $f$ to strings over~$A$: $f^\ast(a_1 \cdots a_n) = f(a_1) \cdots f(a_n)$.

\begin{definition}
A \emph{hypergraph} (or \emph{graph} for short) is a tuple $(V, E, \att, \vlab, \elab, \ext)$, where
\begin{itemize}
\item $V$ names a finite set of \emph{nodes}.
\item $E$ names a finite set of \emph{edges}.
\item $\att \colon E \rightarrow V^\ast$ assigns to each edge~$e$ a sequence of zero or more \emph{attachment} nodes.
\item $\vlab \colon V \rightarrow L^V$ and $\elab \colon E \rightarrow L^E$ assign to each node and edge a label such that, for all $e$, $\type(\elab(e)) = \vlab^\ast(\att(e))$.
\item $\ext \in V^\ast$ is a sequence of zero or more \emph{external} nodes.
\end{itemize}
\end{definition}

A graph can be interpreted as a factor graph \citep{kschischang+:2001}, as follows. Fix a mapping $\dom$ from $L^V$ to sets; then each node $v$ corresponds to a random variable over $\dom(\vlab(v))$.
Extend $\dom$ to strings by defining $\dom^\ast(\ell_1 \cdots \ell_n) = \dom(\ell_1) \times \cdots \times \dom(\ell_n)$.
For each edge label $\ell \in L^E$, define a corresponding factor $\fac_\ell \colon \dom^\ast(\type(\ell)) \rightarrow \mathbb{R}_{\geq 0}$.

\begin{example}
  \label{eg:fg_pcfg}
  Below is a factor graph for trees of a certain shape generated by a PCFG with start symbol $S$. The variables range over nonterminal symbols or terminal symbols.
  \begin{center}
      \begin{tikzpicture}[x=1.25cm]
        \node[var] (n1) at (0,0) { };
        \node[var] (n2) at (2,1) { };
        \node[var] (n3) at (2,-1) { };
        \node[var] (w4) at (4,1) { };
        \node[var] (w5) at (4,-1) { };
        \node[fac,label=above:{${} = S$}] at (-1,0) {} edge (n1);
        \node[fac,label=right:{$p(A\rightarrow BC)$}] at (1,0) {} edge node[tent,above] {$A$} (n1) edge node[tent] {$B$} (n2) edge node[tent,auto=right] {$C$} (n3);
        \node[fac,label=above:{$p(B\rightarrow d)$}] at (3,1) {} edge node[tent] {$B$} (n2) edge node[tent,below] {$d$} (w4);
        \node[fac,label=below:{$p(C\rightarrow e)$}] at (3,-1) {} edge node[tent,above] {$C$} (n3) edge node[tent] {$e$} (w5);
      \end{tikzpicture}
  \end{center}
\end{example}
We draw an edge as a square with lines (called \emph{tentacles}) to its attachment nodes. If there is more than one tentacle, we label each with a name. We write the edge's factor function next to it, possibly in terms of the tentacle names. A~Boolean expression has value $1$ if true and $0$ if false.

\begin{definition} \label{def:hrg}
A \emph{hyperedge replacement graph grammar} (HRG) is a tuple $(N, T, P, S)$, where
\begin{itemize}
\item $N \subseteq L^E$ is a finite set of \emph{nonterminal symbols}.
\item $T \subseteq L^E$ is a finite set of \emph{terminal symbols}, disjoint from~$N$.
\item $P$ is a finite set of \emph{rules} of the form $(X \rightarrow R)$, where 
    $X \in N$, and
    $R$ is a hypergraph with node labels $\vlab$, edge labels in $N \cup T$,
    and external nodes $\ext$ such that $\type(X) = \vlab^\ast(\ext)$.
\item $S \in N$ is a distinguished \emph{start nonterminal symbol}.
\end{itemize}
\end{definition}
An HRG generates a set of graphs; we omit a formal definition here. Most definitions require $\type(S) = \epsilon$,
including our original definition \citep{chiang+riley:2020},
but here we relax this requirement, to allow the graphs generated by an HRG to have external nodes.

\begin{figure*}
\begin{center}
  \begin{tikzpicture}
    \tikzset{replace/.style={-latex,snake=snake,line after snake=6pt}}
    \begin{scope}
      \node[var](n1) at (1,0) {};
      \node[fac](f0) at (0,0) {} edge (n1);
      \node[fac](f1) at (2,0) { $\nt{X}$ } edge (n1);
      \begin{scope}[on background layer]
        \node[fit=(f0)(n1)(f1),fill=gray!20,inner xsep=12pt] {};
      \end{scope}
    \end{scope}
    \begin{scope}[xshift=4cm,y=0.8cm]
      \node[ext] (n) at (0,0) {};
      \node[var] (n1) at (2,1) {};
      \node[var] (n2) at (2,-1) {};
      \node[fac] at (1,0) {} edge (n) edge (n1) edge (n2);
      \node[fac] (f4) at (3,1) {$\nt{X}$} edge (n1);
      \node[fac] (f5) at (3,-1) {$\nt{X}$} edge (n2);
      \begin{scope}[on background layer]
        \node[fit=(n)(n1)(n2)(f4)(f5),fill=gray!20,inner xsep=12pt](r1) {};
        \draw[replace] (r1) to (f1);
      \end{scope}
    \end{scope}
    \begin{scope}[xshift=9cm,yshift=0.8cm]
      \node[ext] (n) at (0,0) {};
      \node[var] (n1) at (2,0) {};
      \node[fac] at (1,0) {} edge (n) edge (n1);
      \begin{scope}[on background layer]
        \node[fit=(n)(n1),fill=gray!20,inner xsep=12pt] (r2) {};
        \draw[replace] (r2) to (f4);
      \end{scope}
    \end{scope}
    \begin{scope}[xshift=9cm,yshift=-0.8cm]
      \node[ext] (n) at (0,0) {};
      \node[var] (n1) at (2,0) {};
      \node[fac] at (1,0) {} edge (n) edge (n1);
      \begin{scope}[on background layer]
        \node[fit=(n)(n1),fill=gray!20,inner xsep=12pt] (r3) {};
        \draw[replace] (r3) to (f5);
      \end{scope}
    \end{scope}
  \end{tikzpicture}
\end{center}
\caption{An example derivation of the FGG in \cref{eg:fgg_pcfg}. A wavy arrow means that the nonterminal at the head of the arrow is rewritten with the graph at the tail.}
\label{fig:derivation}
\end{figure*}

Under $\dom$ and $\fac$, an~HRG can be interpreted as a way of generating factor graphs. We call such an~HRG a \emph{factor graph grammar} (FGG\@).
\begin{example}
\label{eg:fgg_pcfg}
Below is an FGG for derivations of a PCFG in Chomsky normal form. The start symbol of the FGG is $\nt{S'}$.
\begin{align*}
\begin{tikzpicture} 
\node[fac] at (1,0) { $\nt{S'}$ };
\end{tikzpicture} 
&\longrightarrow 
\begin{tikzpicture} 
\node[var] (n) at (1,0) {}; 
\node[fac,label=above:{$\mathclap{{} = S}$}] at (0,0) {} edge (n);
\node[fac] at (2,0) {$\nt{X}$} edge (n); 
\end{tikzpicture} \\
\begin{tikzpicture} 
\node[ext](x) at (0,0) {};
\node[fac] at (1,0) {$\nt{X}$} edge (x); 
\end{tikzpicture} 
&\longrightarrow 
\begin{tikzpicture}
\node[ext] (n) at (0,0) {};
\node[var] (n1) at (2,1) {};
\node[var] (n2) at (2,-1) {};
\node[fac,label=right:{$p(A \rightarrow BC)$}] at (1,0) {} edge node[tent] {$A$} (n) edge node[tent] {$B$} (n1) edge node[tent,auto=right] {$C$} (n2);
\node[fac] at (3,1) {$\nt{X}$} edge (n1);
\node[fac] at (3,-1) {$\nt{X}$} edge (n2);
\end{tikzpicture} \\
\begin{tikzpicture}
  \node[ext](x) at (0,0) {};
  \node[fac] at (1,0) { $\nt{X}$ } edge (x);
\end{tikzpicture}
&\longrightarrow
\begin{tikzpicture}[x=1.2cm]
  \node[ext] (n) at (0,0) {};
  \node[var] (n1) at (2,0) {};
  \node[fac,label=above:{$p(A \rightarrow b)$}] at (1,0) {} edge node[tent] {$A$} (n) edge node[tent,below] {$b$} (n1);
\end{tikzpicture}
\end{align*}
This FGG generates an infinite number of factor graphs, including the one in \cref{eg:fg_pcfg}. The generation process can be pictured as in \cref{fig:derivation}.
\end{example}

We draw external nodes in black. If there is more than one, we write a name inside each.
We draw an edge with nonterminal label $X$ as a square with $X$ inside and tentacles labeled with names.

Although the left-hand side is formally just a nonterminal symbol, we draw it like an edge, with replicas of the external nodes as attachment nodes.

The graphs generated by an FGG can be viewed as factor graphs, each of which defines a distribution over assignments. Recall that we allow the graphs generated by an FGG to have external nodes; for present purposes, we are interested in the marginal distribution over assignments to those external nodes.
\begin{definition}
If $V$ is a set of nodes, an \emph{assignment} of~$V$ is a mapping~$\asst$ that takes each node $v \in V$ to a value in $\dom(\vlab(v))$. If $\vec\ell$ is a string of node labels, an assignment~$\bar\asst$ of $\vec\ell$ is a sequence in $\dom^\ast(\vec\ell)$.
\end{definition}
Let $G = (N, T, P, S)$ be an FGG whose variables have finite domains.
Write down the following system of equations:
For each nonterminal $X \in N$ and assignment~$\bar\asst$ of~$\type(X)$,
\begin{align}
  w_X(\bar\asst) &= \sum_{\substack{R \\ (X\rightarrow R) \in P}} w_R(\bar\asst). \label{eq:fggweight} \\
\intertext{For each right-hand side $R = (V, E, \att, \vlab, \elab, \ext)$ and assignment $\bar\asst$ of $\vlab(\ext)$,}
w_R(\bar\asst) &= \sum_{\substack{\text{assignments~$\asst$ of $V$}\\\asst^\ast(\ext) = \bar\asst}} \prod_{\substack{e \in E}} w_{\elab(e)}(\asst^\ast(\att(e))). \\
\intertext{For each terminal $\ell \in T$ and assignment~$\bar\asst$ of~$\type(\ell)$,}
w_\ell(\bar\asst) &= \fac_{\ell}(\bar\asst).
\end{align}
Take the least fixed point of these equations (under the ordering $w \leq w'$ iff $w_X(\bar\asst) \leq w'_X(\bar\asst)$ for all $X, \bar\asst$). Then $w_X$ is the distribution over the external nodes of the graphs generated starting from $X$, and $w_S$ is the distribution over the external nodes of the graphs generated by $G$. This is a monotone system of polynomial equations, which can be solved as in \cref{sec:solve_mspe}.

\subsection{Translation Rules}
\label{sec:translation}

Any PERPL program can be converted into an FGG that
preserves semantics in the sense that the denotation of
a subexpression~$e$ is equal to the weight function $w_e$ of its
translation (\ref{eq:fggweight}).

Each judgement $\Gamma; \Delta \vdash e : \tau$ in the typing derivation of a program translates to an FGG rule. The external nodes are:
\begin{itemize}
\item For each local binding $\env{x}:\env{\tau} \in \Delta$, there is an external node named $\env{x}:\env{\tau}$, which holds the value of $\env{x}$ when $e$ is evaluated.
\item There is an external node named $\val:\tau$, which holds the value of $e$.
\end{itemize}
We use plate notation \citep{buntine-operations,koller-pgm} to depict multiple nodes:
a rounded rectangle indicates that its contents are repeated, for each local variable~$\env{x}$.
We stress, however, that plates are only meta-notation; as will be clear in \cref{ex:inconsistent-fgg} below, actual FGG rules do \emph{not} use plates.

\begin{subequations}
\paragraph{Programs}

If the program has a global definition $\kw{define} x = e$, add the following rule for occurrences of $x$:
\begin{align}
\begin{tikzpicture}
\node[ext](x) {$\val$};
\node[fac,left=of x](f) {$x$} edge (x);
\end{tikzpicture}
&\longrightarrow
\begin{tikzpicture}
\node[ext](x) {$\val$};
\node[fac,left=of x](f) {$e$} edge node[tent](t){$\val$} (x);
\end{tikzpicture}
\label{eq:translate_global}
\end{align}

If the program is of the form $\ldots \seq e$, then the nonterminal $e$ is the start symbol.

\paragraph{Variables}

The rule for a local variable copies the node for that variable from the environment to the value:
\begin{align}
\begin{tikzpicture}
\node[fac](f) {$x$};
\node[ext,left=of f](x) {$x$} edge (f);
\node[ext,right=of f](v) {$\val$} edge (f);
\end{tikzpicture}
&\longrightarrow
\begin{tikzpicture}
\node[ext](xj) {$x$};
\node[fac,right=of xj,label={[name=eqlab]below:$=$}](eq) {} edge (xj);
\node[ext,right=of eq](v) {$\val$} edge (eq);
\end{tikzpicture}
\label{eq:translate_local}
\end{align}

\paragraph{Functions}

A function becomes a node ranging over input--output pairs.
\begin{align}
  \begin{tikzpicture}
    \node[fac](e) {$\lambda x_1. e'$};
    \node[ext,left=of e](ef) {$\env{x}$} edge (e);
    \node[plate,fit=(ef)] {};
    \node[ext,right=of e](v) {$\val$} edge (e);
  \end{tikzpicture}
  &\longrightarrow
  \begin{tikzpicture}
    \node[fac](e2) {$e'$};
    \node[ext,left=2em of e2](e2f) {$\env{x}$} edge node[tent,near start,name=e2ft] {$\env{x}$} (e2);
    \node[plate,fit=(e2f)(e2ft)] {};
    \node[fac,right=6em of e2.center,label={left:$\val_0=\arrowdenotation{\val_1}{\val'}$}](p) {};
    \node[var](x) at ($(e2)!0.5!(p)+(0,0.75cm)$) {} edge node[tent,auto=right,name=vxt] {$x_1$} (e2) edge node[tent]{$\val_1$} (p);
    \node[var](vo) at ($(e2)!0.5!(p)+(0,-0.75cm)$) {} edge node[tent,name=vot] {$\val$} (e2) edge node[tent,auto=right]{$\val'$} (p);
    \node[ext,right=of p](v) {$\val$} edge node[tent]{$\val_0$} (p);
  \end{tikzpicture}
\end{align}  
The rule for applications unpacks the function as an input--output pair, equates the input to the operand, and takes the output.
\begin{align}
  \begin{tikzpicture}
    \node[ext](xi) {$\env{x}$};
    \node[plate,fit=(xi)] {};
    \node[fac,right=of xi](f) {$e_0~e_1$} edge (xi);
    \node[ext,right=of f](v) {$\val$} edge (f);
  \end{tikzpicture}
  &\longrightarrow
  \begin{tikzpicture}
    \node[fac](e1) at (0,0.75cm) {$e_0$};
    \node[ext,left=2em of e1](e1f) {$\env{x}$} edge node[tent,near start](e1ft){$\env{x}$} (e1);
    \node[plate,fit=(e1f)(e1ft)] {};
    \node[var,right=of e1](v1) {} edge node[tent,above](e1tv1) {$\val$} (e1);
    \node[fac](e2) at (0,-0.75cm) {$e_1$};
    \node[ext,left=2em of e2](e2f) {$\env{x}$} edge node[tent,near start](e2ft){$\env{x}$} (e2);
    \node[plate,fit=(e2f)(e2ft)] {};
    \node[var,right=of e2](v2) {} edge node[tent,above](e2tv2) {$\val$} (e2);
    \node[fac,label={left:$\val_0=\arrowdenotation{\val_1}{\val'}$}](p) at ($(v1)!0.5!(v2)+(3em,0)$) {} edge node[tent,auto=right]{$\val_0$} (v1) edge node[tent]{$\val_1$} (v2);
    \node[ext,right=of p] {$\val$} edge node[tent]{$\val'$} (p);
  \end{tikzpicture}
\end{align}

\paragraph{Random variables}

The rules for $\kw{amb}$ and $\kw{factor}$ are very simple:
\begin{align}
\begin{tikzpicture}
\node[fac](f) {$\kw{amb} e_1~e_2$};
\node[ext,left=of f](xi) {$\env{x}$} edge (f);
\node[plate,fit=(xi)] {};
\node[ext,right=of f](v) {$\val$} edge (f);
\end{tikzpicture}
&\longrightarrow
\begin{tikzpicture}
\node[fac](e) {$e_1$};
\node[ext,left=2em of e](xi) {$\env{x}$} edge node[tent,near start](t){$\env{x}$} (e);
\node[plate,fit=(xi)(t)] {};
\node[ext,right=of e](v) {$\val$} edge node[tent]{$\val$} (e);
\end{tikzpicture} \\
\begin{tikzpicture}
\node[fac](f) {$\kw{amb} e_1~e_2$};
\node[ext,left=of f](xi) {$\env{x}$} edge (f);
\node[plate,fit=(xi)] {};
\node[ext,right=of f](v) {$\val$} edge (f);
\end{tikzpicture}
&\longrightarrow
\begin{tikzpicture}
\node[fac](e) {$e_2$};
\node[ext,left=2em of e](xi) {$\env{x}$} edge node[tent,near start](t){$\env{x}$} (e);
\node[plate,fit=(xi)(t)] {};
\node[ext,right=of e](v) {$\val$} edge node[tent]{$\val$} (e);
\end{tikzpicture} \\
\begin{tikzpicture}
\node[fac](f) {$\kw{factor} w \kw{in} e$};
\node[ext,left=of f](xi) {$\env{x}$} edge (f);
\node[plate,fit=(xi)] {};
\node[ext,right=of f](v) {$\val$} edge (f);
\end{tikzpicture}
&\longrightarrow
\begin{tikzpicture}
\node[fac](e) {$e$};
\node[ext,left=2em of e](xi) {$\env{x}$} edge node[tent,near start](t){$\env{x}$} (e);
\node[plate,fit=(xi)(t)] {};
\node[ext,right=of e](v) {$\val$} edge node[tent]{$\val$} (e);
\node[fac,above=0.25cm of e,label=above:{$w$}] {};
\end{tikzpicture}
\end{align}
And there are no rules for $\kw{fail}$, so that nonterminal $\kw{fail}$ has weight 0, as desired.

\paragraph{Unions}

For a union type $\ctor{c_1} \tau_1 \oplus \cdots \oplus \ctor{c_n} \tau_n$, the injections are straightforward to translate. For any constructor $\ctor{c_i}$:
\begin{align}
\begin{tikzpicture}
\node[fac](e) {${\ctor{c_i} e}$};
\node[ext,left=of e](ef) {$\env{x}$} edge (e);
\node[plate,fit=(ef)] {};
\node[ext,right=of e](v) {$\val$} edge (e);
\end{tikzpicture}
&\longrightarrow
\begin{tikzpicture}
\node[fac](e1) {$e_i$};
\node[ext,left=2em of e1](e1f) {$\env{x}$} edge node[tent,near start,above](e1ft){$\env{x}$} (e1);
\node[plate,fit=(e1f)(e1ft)] {};
\node[var,right=of e1](v1) {} edge node[tent](v1t){$\val$} (e1);
\node[fac,right=of v1,label={above:$\val=\plusdenotation{c_i}{\val_i}$}](inj) {} edge node[tent] {$\val_i$} (v1);
\node[ext,right=of inj](v) {$\val$} edge node[tent](v1t){$\val$} (inj);
\end{tikzpicture}
\end{align}
A single $\kw{case}$-expression translates to one rule for each constructor:
\begin{align}
\begin{tikzpicture}
\node[fac](e) {$\kw{case} e \kw{of} {\ctor{c_1} x_1} \casearrow e_1' \casealt \dots \casealt {\ctor{c_n} x_n} \casearrow e_n'$};
\node[ext,above=0.5cm of e](ef) {$\env{x}$} edge (e);
\node[plate,fit=(ef)] {};
\node[ext,right=of e](v) {$\val$} edge (e);
\end{tikzpicture}
&\longrightarrow
\begin{tikzpicture}
\node[fac](e0) {$e$};
\node[ext,above=of e0](e0f) {$\env{x}$} edge node[tent,near start](e0ft){$\env{x}$} (e0);
\node[plate,fit=(e0f)(e0ft)] {};
\node[var,right=of e0](v) {} edge node[tent] {$\val$} (e0);
\node[fac,right=of v,label=above:{$\val=\plusdenotation{c_i}{\val_i}$}](inj) {} edge node[tent]{$\val$} (v);
\node[var,right=of inj](v1) {} edge node[tent]{$\val_i$} (inj);
\node[fac,right=of v1](e1) {$e_i'$} edge node[tent](x1t){$x_i$} (v1);
\node[ext,above=of e1](e1f) {$\env{x}$} edge node[tent,near start](e1ft){$\env{x}$} (e1);
\node[plate,fit=(e1f)(e1ft)] {};
\node[ext,right=of e1](v) {$\val$} edge node[tent](vt){$\val$} (e1);
\end{tikzpicture} \label{eq:translate_case}
\end{align}
Since the desugaring of $\kw{if}$ leads to some superfluous nodes, we provide a direct translation as well:
\begin{align}
\begin{tikzpicture}
\node[fac](e) {$\kw{if} e \kw{then} e'_1 \kw{else} e'_2$};
\node[ext,above=0.5cm of e](ef) {$\env{x}$} edge (e);
\node[plate,fit=(ef)] {};
\node[ext,right=of e](v) {$\val$} edge (e);
\end{tikzpicture}
&\longrightarrow
\begin{tikzpicture}
\node[fac](e0) {$e$};
\node[ext,above=of e0](e0f) {$\env{x}$} edge node[tent,near start](e0ft){$\env{x}$} (e0);
\node[plate,fit=(e0f)(e0ft)] {};
\node[var,right=of e0](v) {} edge node[tent] {$\val$} (e0);
\node[fac,right=of v,label=above:{$\val=\plusdenotation{\ctor{\id{True}}}{\unitdenotation}$}](inj) {} edge node[tent]{$\val$} (v);
\node[fac,right=2em of inj](e1) {$e'_1$};
\node[ext,above=of e1](e1f) {$\env{x}$} edge node[tent,near start](e1ft){$\env{x}$} (e1);
\node[plate,fit=(e1f)(e1ft)] {};
\node[ext,right=of e1](v) {$\val$} edge node[tent](vt){$\val$} (e1);
\end{tikzpicture}
\\
\begin{tikzpicture}
\node[fac](e) {$\kw{if} e \kw{then} e'_1 \kw{else} e'_2$};
\node[ext,above=0.5cm of e](ef) {$\env{x}$} edge (e);
\node[plate,fit=(ef)] {};
\node[ext,right=of e](v) {$\val$} edge (e);
\end{tikzpicture}
&\longrightarrow
\begin{tikzpicture}
\node[fac](e0) {$e$};
\node[ext,above=of e0](e0f) {$\env{x}$} edge node[tent,near start](e0ft){$\env{x}$} (e0);
\node[plate,fit=(e0f)(e0ft)] {};
\node[var,right=of e0](v) {} edge node[tent] {$\val$} (e0);
\node[fac,right=of v,label=above:{$\val=\plusdenotation{\ctor{\id{False}}}{\unitdenotation}$}](inj) {} edge node[tent]{$\val$} (v);
\node[fac,right=2em of inj](e1) {$e'_2$};
\node[ext,above=of e1](e1f) {$\env{x}$} edge node[tent,near start](e1ft){$\env{x}$} (e1);
\node[plate,fit=(e1f)(e1ft)] {};
\node[ext,right=of e1](v) {$\val$} edge node[tent](vt){$\val$} (e1);
\end{tikzpicture}
\end{align}

This translation avoids a problem that \citet[Section 3.1]{vandemeent+:2018} encounter when translating conditional expressions to factor graphs. In a factor graph, both arms of the conditional must be translated, and every assignment to the factor graph must assign values to variables in both arms, even though only one can be active at a time. Their translation requires some complicated machinery to work around this problem, but our translation to an FGG can simply generate a different graph for each arm.

\paragraph{Tuples}

A single additive tuple translates into multiple rules, each generating just one member. The elimination form $e.i$ selects the factor graphs where the $i$-th member was generated.
\begin{align}
\begin{tikzpicture}
\node[ext](x1) {$\env{x}$};
\node[plate,fit=(x1)] {};
\node[fac,right=of x1](f) {$\withterm{e_1,\dots,e_n}$} edge (x1);
\node[ext,right=of f] {$\val$} edge (f);
\end{tikzpicture}
&\longrightarrow
\begin{tikzpicture}
\node[fac](e1) {$e_i$};
\node[ext,left=2em of e1](e1f) {$\env{x}$} edge node[tent,near start,above](e1ft){$\env{x}$} (e1);
\node[plate,fit=(e1f)(e1ft)] {};
\node[var,right=of e1](v1) {} edge node[tent](v1t){$\val$} (e1);
\node[fac,right=of v1,label={above:$\val=\withdenotation{i}{\val_i}$}](inj) {} edge node[tent] {$\val_i$} (v1);
\node[ext,right=of inj](v) {$\val$} edge node[tent](v1t){$\val$} (inj);
\end{tikzpicture}
\\
\begin{tikzpicture}
\node[ext](x1) {$\env{x}$};
\node[plate,fit=(x1)] {};
\node[fac,right=of x1](f) {$e.i$} edge (x1);
\node[ext,right=of f] {$\val$} edge (f);
\end{tikzpicture}
&\longrightarrow
\begin{tikzpicture}
\node[fac](e1) {$e$};
\node[ext,left=2em of e1](e1f) {$\env{x}$} edge node[tent,near start,above](e1ft){$\env{x}$} (e1);
\node[plate,fit=(e1f)(e1ft)] {};
\node[var,right=of e1](v1) {} edge node[tent](v1t){$\val$} (e1);
\node[fac,right=of v1,label={above:$\val=\withdenotation{i}{\val_i}$}](inj) {} edge node[tent] {$\val$} (v1);
\node[ext,right=of inj](v) {$\val$} edge node[tent](v1t){$\val_i$} (inj);
\end{tikzpicture}
\end{align}

Multiplicative tuples are more straightforward:
\begin{align}
\begin{tikzpicture}
\node[ext](x1) {$\env{x}$};
\node[plate,fit=(x1)] {};
\node[fac,right=of x1](f) {$(e_1, \dots, e_n)$} edge (x1);
\node[ext,right=of f](v) {$\val$} edge (f);
\end{tikzpicture}
&\longrightarrow
\begin{tikzpicture}
\node[ext](x1) at (0,1cm) {$\env{x}$};
\node[fac,right=2em of x1](f1) {$e_1$} edge node[tent,near end](x1t){$\env{x}$} (x1);
\node[plate,fit=(x1)(x1t)] {};
\node[var,right=of f1](v1) {} edge node[tent](v1t){$\val$} (f1);
\node at (0,0) {$\vdots$};
\node[ext](x2) at (0,-1cm) {$\env{x}$};
\node[fac,right=2em of x2](f2) {$e_n$} edge node[tent,near end](x2t){$\env{x}$} (x2);
\node[plate,fit=(x2)(x2t)] {};
\node[var,right=of f2](v2) {} edge node[tent](v2t){$\val$} (f2);
\node[fac,label={left:$\val=(\val_1,\dots,\val_n)$}](pair) at ($(v1)!0.5!(v2)+(1cm,0)$) {} edge node[tent,auto=right]{$\val_1$} (v1) edge node[tent]{$\val_n$} (v2);
\node[ext,right=of pair] {$\val$} edge node[tent]{$\val$} (pair);
\end{tikzpicture}
\raisetag{10pt}
\\
\begin{tikzpicture}
\node[fac](f) {$\kw{let} (x_1, \dots, x_n) = e \kw{in} e'$};
\node[ext,above=0.5cm of f](x1) {$\env{x}$} edge (f);
\node[plate,fit=(x1)] {};
\node[ext,right=of f](v) {$\val$} edge (f);
\end{tikzpicture}
&\longrightarrow
\begin{tikzpicture}
\node[fac](e3) {$e$};
\node[ext,above=of e3](x) {$\env{x}$} edge node[tent,near start](xt){$\env{x}$} (e3);
\node[plate,fit=(x)(xt)] {};
\node[var,right=of e3](v3) {} edge node[tent]{$\val$} (e3);
\node[fac,right=of v3,label={[rotate=90,right]$\val=(\val_1,\dots,\val_n)$}](pair) {} edge node[tent]{$\val$} (v3);
\node[fac,right=4.5em of pair](e4) {$e'$};
\node[ext,above=of e4](x) {$\env{x}$} edge node[tent,near start](xt){$\env{x}$} (e4);
\node[plate,fit=(x)(xt)] {};
\node[var](x1) at ($(pair)!0.5!(e4)+(0,0.75cm)$) {} edge node[tent,auto=right,inner sep=0.2mm](v1t) {$\val_1$} (pair) edge node[tent,inner sep=0.2mm](x1t){$x_1$} (e4);
\node[var](x2) at ($(pair)!0.5!(e4)+(0,-0.75cm)$) {} edge node[tent,inner sep=0.2mm](v2t){$\val_n$} (pair) edge node[tent,auto=right,inner sep=0.2mm](x2t){$x_n$} (e4);
\node[ext,right=of e4](v) {$\val$} edge node[tent](vt){$\val$} (e4);
\node at ($(pair)!0.5!(e4)+(0,0)$) {$\vdots$};
\end{tikzpicture}
\raisetag{10pt}
\end{align}

\begin{figure*}
\begin{minipage}{\linewidth}  
\tikzset{tent/.append style={execute at begin node={\strut}}}
\tikzset{node distance=0.5cm}
\newcommand{\ExampleCase}{\begin{aligned}[t]
    &{\kw{if} \id{flip} \kw{then} {\kw{let} \unitterm = \id{gen}}} \\ &{{\kw{in} {\kw{let} \unitterm = \id{gen} \kw{in} \unitterm}} \kw{else} \unitterm}
    \end{aligned}}
\begin{align*}
\begin{tikzpicture}
\node[fac] (e) {$\id{gen}$};
\node[ext,right=of e] (vo) {$\val$} edge (e);
\end{tikzpicture}
&\longrightarrow
\begin{tikzpicture}
\node[fac] (e) {$\ExampleCase$};
\node[ext,right=of e] (vo) {$\val$} edge node[tent] {$\val$} (e);
\end{tikzpicture}
\\
\begin{tikzpicture}
\node[fac] (e) {$\ExampleCase$};
\node[ext,right=of e] (v) {$\val$} edge (e);
\end{tikzpicture}
&\longrightarrow
\begin{tikzpicture}
\node[fac](s) {$\id{flip}$};
\node[var,right=of s](vs) {} edge (s);
\node[fac,right=of vs,label={${}=\plusdenotation{\ctor{\id{True}}}{\unitdenotation}$}](fvs) {} edge node[tent] {$\val$} (vs);
\node[fac,right=1cm of fvs] (g) {$\begin{aligned}&{\kw{let} \unitterm = \id{gen} \kw{in}} \\ &{\kw{let} \unitterm = \id{gen} \kw{in} \unitterm}\end{aligned}$};
\node[ext,right=of g] {$\val$} edge node[tent] {$\val$} (g);
\end{tikzpicture} \\
\begin{tikzpicture}
\node[fac] (e) {$\ExampleCase$};
\node[ext,right=of e] (v) {$\val$} edge (e);
\end{tikzpicture}
&\longrightarrow
\begin{tikzpicture}
\node[fac](s) {$\id{flip}$};
\node[var,right=of s](vs) {} edge (s);
\node[fac,right=of vs,label={${}=\plusdenotation{\ctor{\id{False}}}{\unitdenotation}$}](fvs) {} edge node[tent] {$\val$} (vs);
\node[fac,right=1cm of fvs] (g) {$\unitterm$};
\node[ext,right=of g] {$\val$} edge node[tent] {$\val$} (g);
\end{tikzpicture}
\\
\begin{tikzpicture}
\node[fac](f) {$\begin{aligned}&{\kw{let} \unitterm = \id{gen} \kw{in}} \\ &{\kw{let} \unitterm = \id{gen} \kw{in} \unitterm}\end{aligned}$};
\node[ext,right=of f](v) {$\val$} edge (f);
\end{tikzpicture}
&\longrightarrow
\begin{tikzpicture}
\node[fac](e3) {$\id{gen}$};
\node[var,right=of e3](v3) {} edge node[tent]{$\val$} (e3);
\node[fac,right=of v3](e4) {$\kw{let} \unitterm = \id{gen} \kw{in} \unitterm$} ;
\node[ext,right=of e4](v) {$\val$} edge node[tent](vt){$\val$} (e4);
\end{tikzpicture}
\\
\begin{tikzpicture}
\node[fac](f) {$\kw{let} \unitterm = \id{gen} \kw{in} \unitterm$};
\node[ext,right=of f](v) {$\val$} edge (f);
\end{tikzpicture}
&\longrightarrow
\begin{tikzpicture}
\node[fac](e3) {$\id{gen}$};
\node[var,right=of e3](v3) {} edge node[tent]{$\val$} (e3);
\node[fac,right=of v3](e4) {$\unitterm$} ;
\node[ext,right=of e4](v) {$\val$} edge node[tent](vt){$\val$} (e4);
\end{tikzpicture}
\\
\begin{tikzpicture}
\node[fac] (e) {$\unitterm$};
\node[ext,right=of e] (v) {$\val$} edge (e);
\end{tikzpicture}
&\longrightarrow
\begin{tikzpicture}
\node[fac,label=above:{${}=\unitdenotation$}] (e) {};
\node[ext,right=of e] (v) {$\val$} edge (e);
\end{tikzpicture}
\\
\begin{tikzpicture}
  \node[fac] (f) {$\id{flip}$};
  \node[ext,right=of f] {$\val$} edge (f);
\end{tikzpicture}
&\longrightarrow
\begin{tikzpicture}
  \node[fac] (f) {$\kw{amb} \begin{aligned}[t] &({\kw{factor} p \kw{in} {\kw{true}}}) \\ &({\kw{factor} q \kw{in} {\kw{false}}}) \end{aligned}$};
  \node[ext,right=of f] {$\val$} edge (f);
\end{tikzpicture}
\\
\begin{tikzpicture}
  \node[fac] (f) {$\kw{amb} \begin{aligned}[t] &({\kw{factor} p \kw{in} {\kw{true}}}) \\ &({\kw{factor} q \kw{in} {\kw{false}}}) \end{aligned}$};
  \node[ext,right=of f] {$\val$} edge (f);
\end{tikzpicture}
&\longrightarrow
\begin{tikzpicture}
  \node[fac] (f) {${\kw{factor} p \kw{in} {\kw{true}}}$};
  \node[ext,right=of f] {$\val$} edge (f);
\end{tikzpicture}
\\
\begin{tikzpicture}
  \node[fac] (f) {$\kw{amb} \begin{aligned}[t] &({\kw{factor} p \kw{in} {\kw{true}}}) \\ &({\kw{factor} q \kw{in} {\kw{false}}}) \end{aligned}$};
  \node[ext,right=of f] {$\val$} edge (f);
\end{tikzpicture}
&\longrightarrow
\begin{tikzpicture}
  \node[fac] (f) {${\kw{factor} q \kw{in} {\kw{false}}}$};
  \node[ext,right=of f] {$\val$} edge (f);
\end{tikzpicture}
\\
\begin{tikzpicture}
  \node[fac] (f) {${\kw{factor} p \kw{in} {\kw{true}}}$};
  \node[ext,right=of f] {$\val$} edge (f);
\end{tikzpicture}
&\longrightarrow
\begin{tikzpicture}
\node[fac,label=above:$p$](e3) {};
\node[fac,right=of e3](e4) {$\kw{true}$} ;
\node[ext,right=of e4](v) {$\val$} edge node[tent](vt){$\val$} (e4);
\end{tikzpicture}
\\
\begin{tikzpicture}
\node[fac](e) {$\kw{true}$};
\node[ext,right=of e](v) {$\val$} edge (e);
\end{tikzpicture}
&\longrightarrow
\begin{tikzpicture}
\node[fac,label={above:${}=\plusdenotation{\ctor{\id{True}}}{\unitdenotation}$}](inj) {};
\node[ext,right=of inj](v) {$\val$} edge node[tent](v1t){$\val$} (inj);
\end{tikzpicture}
\\
\begin{tikzpicture}
  \node[fac] (f) {${\kw{factor} q \kw{in} {\kw{false}}}$};
  \node[ext,right=of f] {$\val$} edge (f);
\end{tikzpicture}
&\longrightarrow
\begin{tikzpicture}
\node[fac,label=above:$q$](e3) {};
\node[fac,right=of e3](e4) {$\kw{false}$} ;
\node[ext,right=of e4](v) {$\val$} edge node[tent](vt){$\val$} (e4);
\end{tikzpicture}
\\
\begin{tikzpicture}
\node[fac](e) {$\kw{false}$};
\node[ext,right=of e](v) {$\val$} edge (e);
\end{tikzpicture}
&\longrightarrow
\begin{tikzpicture}
\node[fac,label={above:${}=\plusdenotation{\ctor{\id{False}}}{\unitdenotation}$}](inj) {};
\node[ext,right=of inj](v) {$\val$} edge node[tent](v1t){$\val$} (inj);
\end{tikzpicture}
\end{align*}
\end{minipage}
\caption{FGG translated from the program in Example~\ref{ex:inconsistent}. The start symbol is $\id{gen}$.}
\label{fig:inconsistent_fgg}
\end{figure*}

\end{subequations}

\begin{example} \label{ex:inconsistent-fgg}
Recall the program of \cref{ex:inconsistent}. This program translates to the FGG in \cref{fig:inconsistent_fgg}. When we compute the sum-product of the FGG\@, we get $\min\bigl(1, \frac{1-p}{p}\bigr)$ as desired.
\end{example}

\subsection{Correctness}

\begin{theorem} \label{thm:fggtranslation}
  Let $p = {\kw{define} x_1 = e_1} \seq \cdots \seq {\kw{define} x_n = e_n} \seq e_0$ be a program, and let $G$ be the FGG translated from $p$ as defined above.
Then $\eta$ as defined by (\ref{eq:fixpoint}) exists iff $w$ as defined by (\ref{eq:fggweight}) exists, and if both exist, then for all $e$ in~$p$,
\begin{equation} \label{eq:correspondence}
w_e(\delta, \val=v) = \denote{e}_\eta(\delta, v).
\end{equation}
In particular, $w_{e_0}(\val=v) = \denote{p}(v)$.
\end{theorem}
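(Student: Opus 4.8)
The plan is to present both $\eta$ and $w$ as least solutions of \emph{the same} monotone system of polynomial equations, so that their existence and their values coincide componentwise. The correspondence pairs each FGG weight variable with a denotational weight variable. For a subexpression $\Gamma;\Delta \vdash e:\tau$, the nonterminal translated from $e$ has one external node per local binding in $\Delta$ together with a value node $\val:\tau$, so I identify $w_e(\delta,\val=v)$ with $\denote{e}_\eta(\delta,v)$; and for each global definition, rule \eqref{eq:translate_global} produces a nonterminal for $x$, which I identify with $\eta(x)$ via $w_x(\val=v) = \eta(x)(v)$. By the finiteness of $\denote{\tau}$ and $\denote{\Delta}$ established in \cref{sec:mspe}, both sides range over the same finite set of weight variables under this bijection.

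First I would check that the two systems define the same equations. For global definitions, expanding the sum-product \eqref{eq:fggweight} over the single rule of \eqref{eq:translate_global} gives $w_x(\val=v)=w_e(\val=v)$, which under the correspondence is exactly the fixpoint equation \eqref{eq:fixpoint}. For every other construct I would match the generated FGG rule(s) against the corresponding clause of \cref{fig:denotational} by case analysis on the outermost construct, substituting away the auxiliary rule-weights $w_R$ and the terminal weights $w_\ell=\fac_\ell$. In each case the summation over internal-node assignments in \eqref{eq:fggweight} must become the denotational sums (e.g.\ the $\sum_{v_1}$ of \eqref{eq:denote_app}), the nonterminal edge-weights must become the factors $\denote{e_i}$, and the special constraint factors must realize the remaining bookkeeping: the pairing factor $\val_0=(\val_1,\val')$ implements application \eqref{eq:denote_app} and abstraction \eqref{eq:denote_abs}; the injection factor $\val=(i,\val_i)$ implements the variant selection of \eqref{eq:denote_case}; and the numeric factor implements scaling by \kw{factor}. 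Here I would also unfold the plate notation to confirm that copying an environment node into both subexpressions matches sharing $\delta$ in additive positions (\kw{amb}, $\with$, \kw{case}), while disjointly splitting environment nodes matches partitioning $\delta$ in multiplicative positions (application, $\otimes$).

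Once the two systems are seen to be identical up to the variable bijection, I would conclude by uniqueness of least fixed points: both $\eta$ and $w$ are the least solution, in the complete lattice $[0,\infty]^n$, of the same monotone polynomial map, so each exists by Knaster--Tarski and the ``iff exists'' claim is immediate, while the two least solutions agree componentwise. Thus $w_e(\delta,\val=v)=\denote{e}_\eta(\delta,v)$ for every subexpression, and specializing to the start symbol $e_0$ (whose linear context is empty) yields $w_{e_0}(\val=v)=\denote{e_0}_{\denote{\gamma}}(\emptyset,v)=\denote{p}(v)$.

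The hard part will be the exhaustive case analysis of the middle step. The idea is straightforward once the correspondence is fixed, but each translation rule in \cref{sec:translation} must be expanded and compared against \cref{fig:denotational}, and the constructs that mix the two tuple disciplines (application versus additive tuples, $\otimes$ versus $\with$, and \kw{case}) demand care so that the node-sharing structure of the generated graph faithfully encodes how $\delta$ is threaded or partitioned. A secondary subtlety is that \kw{case} and \kw{if} translate to one FGG rule per branch, so I must check that these rules, contributing additively through the outer sum of \eqref{eq:fggweight}, reproduce the single summed clause \eqref{eq:denote_case}.
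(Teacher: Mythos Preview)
Your proposal is correct and follows essentially the same route as the paper: a case analysis over the syntactic constructs that matches each FGG translation rule against the corresponding clause of \cref{fig:denotational}, wrapped in a least-fixed-point argument. The paper packages the fixed-point step slightly differently---it first shows that a fixed point~$\eta$ of \eqref{eq:fixpoint} induces (via \eqref{eq:correspondence}) a solution~$w$ of \eqref{eq:fggweight} and vice versa, then separately checks order preservation to conclude the \emph{least} solutions coincide---whereas you argue more directly that the two MSPEs are identical under a bijection of weight variables, so their least fixed points coincide at once. Both arguments rest on the same exhaustive case analysis you describe, and the paper singles out the same representative cases (global and local variables, and $\kw{case}$) that you flag as the delicate ones.
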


\begin{proof}
We first show that (\ref{eq:fixpoint}) has a fixed point $\eta$ iff (\ref{eq:fggweight}) has a solution $w$ such that (\ref{eq:correspondence}) is satisfied; then we will show that (\ref{eq:correspondence}) preserves ordering.

If $\eta$ is a fixed point of (\ref{eq:fixpoint}), then set $w$ according to (\ref{eq:correspondence}). The proof that $w$ is a solution to equation (\ref{eq:fggweight}) is by induction on the typing derivation of $e$; we show just a few cases:

If $e = x_i$ is a global variable:
\newcommand{\eqbynothing}{\stackrel{\hphantom{(00)}}{=}}
\begin{equation*}
  w_{x_i}(\val=v) \eqby{eq:correspondence} \denote{x_i}_\eta(\emptyset, v) \\ \eqby{eq:denote_var} \eta(x_i)(v) \\ \eqby{eq:fixpoint} \denote{e_i}_\eta(\emptyset, v) \\ \eqby{eq:correspondence} w_{e_i}(\val=v).
\end{equation*}
which matches equation (\ref{eq:fggweight}) for rule (\ref{eq:translate_global}). If $e = x$ is a local variable:
\begin{equation*}
w_x(x=v, \val=v') \eqby{eq:correspondence} \denote{x}_\eta(\{(x,v)\}, v') \eqby{eq:denote_var} \mathbb{I}[v=v'].
\end{equation*}
which matches equation (\ref{eq:fggweight}) for rule (\ref{eq:translate_local}).

The most complex case is probably that of $\kw{case}$ expressions. For $\delta \in \denote{\Delta}$ and $\delta' \in \denote{\Delta'}$:
\begin{align*}
  w_e(\delta\cup\delta', \val=v)
  &\eqby{eq:correspondence} \denote{e}(\delta \cup \delta', v) \\
  &\eqby{eq:denote_case} \sum_{i=1}^n \sum_{v_i} \denote{e_0}(\delta, (i, v_i)) \cdot \denote{e_i}(\delta' \cup \{(x_i,v_i)\}, v) \\
  &\eqby{eq:correspondence} \sum_{i=1}^n \sum_{v_i} w_{e_0}(\delta, \val=(i, v_i)) \cdot w_{e_i}(x_i=v_i, \delta', \val=v) \\
  &\eqbynothing \sum_{i=1}^n \sum_{v_0} \sum_{v_i} w_{e_0}(\delta, \val=v_0) \cdot \mathbb{I}[v_0 = (i, v_i)] \cdot w_{e_i}(x_i=v_i, \delta', \val=v)
\end{align*}
which matches equation (\ref{eq:fggweight}) for rules (\ref{eq:translate_case}).

Conversely, if $w$ is a solution to (\ref{eq:fggweight}), for all $x_i \in \Gamma$, set $\eta(x_i)(v) = w_{x_i}(\val=v)$ by (\ref{eq:correspondence}). We want to show that $\eta$ is a fixed point of (\ref{eq:fixpoint}), which is very similar to the above.

Finally, consider two pairs of solutions related by (\ref{eq:correspondence}):
\begin{align*}
w_e(\delta, \val=v) &= \denote{e}_\eta(\delta, v) \\
w'_e(\delta, \val=v) &= \denote{e}_{\eta'}(\delta, v).
\end{align*}
If $w \leq w'$, then it's easy to show that $\eta \leq \eta'$. For each $x_i \in \Gamma$,
\begin{align*}
\eta(x_i)(v) &= w_{x_i}(\val=v) \leq w'_{x_i}(\val=v) = \eta'(x_i)(v).
\end{align*}
Conversely, if $\eta \leq \eta'$, then for any $x_i$, $w_{x_i} \leq w'_{x_i}$. Since the equations defining $w_e$ in terms of the $w_{x_i}$ are polynomials with nonnegative coefficients, they are monotonic in the $w_{x_i}$. So for any $e$, $w_e \leq w'_e$.
\end{proof}

\section{Details about Embedded Pushdown Automata}
\label{sec:epda_details}

\subsection{Definition and Normal Form}

\newcommand{\rest}{\mathinner{\cdot\cdot}}
\newcommand{\sep}{\bot}

\begin{definition}
  An embedded pushdown automaton is a tuple $P = (Q, \Sigma, \Gamma, q_0, S, \delta)$, where
  \begin{itemize}
  \item $Q$ is a finite set of states
  \item $\Sigma$ is a finite input alphabet
  \item $\Gamma$ is a finite stack alphabet
  \item $q_0 \in Q$ is the initial state
  \item $S \in \Gamma$ is the initial stack symbol
  \item $\delta$ is a set of transitions $q, A \xrightarrow{a} r, \beta$ where $q, r \in Q$, $A \in \Gamma$, $a \in \Sigma \cup \{\epsilon\}$, and $\beta \in (\Gamma^*\sep)^* (\Gamma^* \rest) (\Gamma^*\sep)^*$,
  where $\rest$ stands for the rest of a stack, and $\sep$ for the bottom of a stack.
\end{itemize}
  A configuration of $P$ is a tuple $(q, \alpha, w)$ where $\alpha \in (\Gamma^* \sep)^*$ and $w \in \Sigma^*$.
  If $(q, A\rest \xrightarrow{a} r, \beta) \in \delta$, then we define
  \begin{align*}
    (q, A \alpha \sep \gamma, aw) &\Rightarrow_P (r, \beta\{\rest:=\alpha\sep\} \gamma, w) & \sep \not\in \alpha \\
    (q, \sep \alpha, w) &\Rightarrow_P (q, \alpha, w)
  \end{align*}
  We say that $P$ accepts $w$ (by empty stack) if $(q_0, S\sep, w) \Rightarrow_P^* (q, \epsilon, \epsilon)$ for any $q$.
\end{definition}

\begin{proposition}
  Every EPDA is equivalent to an EPDA whose transitions all have one of the following forms:
  \begin{align*}
    q, X\rest &\xrightarrow{a} q', \rest & a \in \Sigma \cup \{\epsilon\} \\
    q, X\rest &\xrightarrow{\epsilon} q', YZ\rest \\
    q, X\rest &\xrightarrow{\epsilon} q', Y \sep Z\rest \\
    q, X\rest &\xrightarrow{\epsilon} q', Y\rest Z \sep 
  \end{align*}
\end{proposition}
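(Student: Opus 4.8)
The statement is a normal-form theorem of the same flavor as Chomsky or Greibach normal form for context-free grammars: an arbitrary transition $q,X\rest \xrightarrow{a} r,\beta$ may push several new stacks both above and below the current stack and may replace $X$ by an arbitrarily long string, whereas the four target forms each touch only the top stack together with at most one adjacent stack boundary and introduce at most two symbols. The plan is therefore to give a construction that replaces every general transition by a short \emph{block} of normal-form transitions, threaded through fresh intermediate states and fresh auxiliary stack symbols, and then to prove language equivalence (acceptance by empty stack) via a simulation between the original automaton $P$ and the normalized automaton $P'$.

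The construction proceeds in stages. First I isolate scanning: since a move $q,X\rest\xrightarrow{a}r,\beta$ with $a\in\Sigma$ begins by popping $X$, I emit one move of the first form, $q,X\rest\xrightarrow{a}r_0,\rest$, which pops $X$ and reads $a$, and then rebuild $\beta$ by $\epsilon$-moves from the fresh state $r_0$; every remaining stage is thus $\epsilon$-only. Second, I decompose $\beta$. Writing a general $\beta$ as $\sigma_1\sep\cdots\sigma_j\sep\,\tau\rest\,\rho_1\sep\cdots\rho_l\sep$, its effect after the substitution $\rest:=\alpha\sep$ is to push the new stacks $\sigma_1,\dots,\sigma_j$ above, to replace $X$ by the string $\tau$ on the current stack ($X\alpha\mapsto\tau\alpha$), and to insert the new stacks $\rho_1,\dots,\rho_l$ below. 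The key device, analogous to introducing a fresh nonterminal for a long right-hand side in CNF, is \emph{lazy encoding}: for each pushed stack $\sigma_i$ or $\rho_i$ I introduce a fresh stack symbol that \emph{stands for} its entire contents, push only that single symbol (using the third form to spawn a singleton above, the fourth form to spawn a singleton below), and add expansion transitions that, whenever such an auxiliary symbol surfaces as the top symbol of the top stack, pop it and rebuild the intended string by a cascade of second-form moves $p,W\rest\xrightarrow{\epsilon}p',cW\rest$ (with $W$ ranging over the finite alphabet $\Gamma$). This confines all genuine growth to the top stack, where it is safe, and sidesteps the impossibility of growing a buried stack; replacing $X$ by the string $\tau$ is handled by the same second-form cascade.

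The two residual complications are the empty-top-stack rule $(q,\sep\alpha,w)\Rightarrow_P(q,\alpha,w)$ and the fact that the four forms cannot replace the top symbol by a \emph{single} different symbol in one move (the first form deletes it, the second produces two symbols). Both are addressed by equipping every stack with a fresh bottom marker, so that no intermediate configuration built by a partially executed block ever has an empty top stack and thus the auto-pop never fires prematurely; a single-symbol replacement is then realized as a guarded pop-then-push that cannot underflow. Markers are stripped by a final phase of first-form moves, and acceptance by empty stack is read off after stripping. (An alternative I would keep in reserve is to carry one pending symbol in the finite control, avoiding single-symbol replacement altogether; the marker version is cleaner to state.)

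For correctness I would define a configuration homomorphism $h$ sending each $P'$-configuration to a $P$-configuration by expanding every auxiliary symbol to the string it encodes and deleting bottom markers, and then prove a bisimulation: each original step $(q,\alpha,w)\Rightarrow_P(r,\alpha',w')$ is matched by a finite block of $P'$-steps whose endpoints have $h$-images $(q,\alpha,w)$ and $(r,\alpha',w')$, and conversely every $P'$-step either leaves $h$ unchanged (a transient macro step) or completes such a block. From this, $L(P')=L(P)$ follows since $h$ preserves the accepting predicate. I expect the main obstacle to be the \emph{soundness} half of this bisimulation rather than the construction itself: I must show that the fresh intermediate states and auxiliary symbols admit \emph{no spurious computations}—a block, once entered, can only be exited by completing exactly the intended macro-step, and an auxiliary symbol can only be expanded when it legitimately surfaces—so that $P'$ does not accept strings that $P$ rejects. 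Establishing this faithfulness, together with checking that the bottom markers never perturb the empty-stack interaction, is where the real bookkeeping lies; the forward (completeness) direction and the type-by-type verification of each block against the four forms should be routine.
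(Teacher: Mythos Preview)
The paper does not actually prove this proposition: it is stated without proof, and the \texttt{proof} environment that follows belongs to the \emph{next} proposition (the single-state reduction). So there is no paper argument to compare against; the authors treat the normal form as a folklore result, analogous to Chomsky normal form for CFGs or the standard two-symbol-push normal form for ordinary PDAs.

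Your plan is the right shape and would go through, but let me flag one ordering issue you should make explicit. You propose to perform the scan first (pop $X$, read $a$ via the first form) and then rebuild $\beta$ from a fresh state. After that pop, however, you have lost your handle on the current stack: every one of the four forms begins by popping the \emph{new} top symbol, and if $\alpha=\epsilon$ the auto-pop rule fires before you can reconstruct anything. Your bottom-marker device does repair this, but the cleaner fix is to \emph{defer the scan}: first use forms 2--4 (all $\epsilon$-moves) to lay down the below-stacks, then the $\tau$ portion, then the above-stacks, and only at the very last step of the block perform the form-1 move that pops a dedicated scan symbol and reads $a$. This also makes the ``cannot replace one symbol by one symbol'' problem easier, since the final symbol to be consumed can be the scan placeholder itself.

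A second point worth spelling out: once you push an above-stack via form~3, control moves to that new top stack and you cannot return to finish work on the one beneath it. So the block must be sequenced as (i)~all below-stacks via form~4, staying on the current top; (ii)~grow $\tau$ on the current top via form~2; (iii)~above-stacks via form~3, bottom-to-top, each time continuing from the freshly created singleton. Your lazy-encoding idea is exactly right for (iii): push a single proxy symbol per $\sigma_i$, and let it unfold only when it later surfaces. With that sequencing fixed, the bisimulation via your homomorphism $h$ is routine, and your identification of the soundness direction (no spurious escapes from a partially executed block) as the real work is accurate.
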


\begin{proposition}
  Every EPDA is equivalent to an EPDA with only one state.
\end{proposition}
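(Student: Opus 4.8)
The plan is to adapt the classical ``fold the finite control into the stack'' construction for ordinary pushdown automata, which replaces each stack symbol by a record of the states in which that symbol is entered and popped. The new twist for embedded pushdown automata is the $\mathsf{PushBelow}$ rule $q,X\rest \xrightarrow{\epsilon} q', Y\rest Z\sep$, which inserts a fresh stack \emph{below} the current top stack: the state in which that fresh stack is later resumed is the state in which the \emph{entire} current top stack is eventually emptied, and this is not recorded at the top symbol. I would therefore encode a stack symbol as a quadruple $[p,X,q,\tau]\in Q\times\Gamma\times Q\times Q$, read as: $X$ is the current top symbol while in state $p$; when $X$ is popped the automaton will be in state $q$; and the stack containing $X$ will, when it finally empties, be in state $\tau$. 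The first three components are the usual PDA triple; the fourth is the device that makes $\mathsf{PushBelow}$ expressible without a finite control.

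First I would invoke the preceding proposition to assume $P=(Q,\Sigma,\Gamma,q_0,S,\delta)$ is already in the four-rule normal form, and define a one-state EPDA $P'=(\{\bullet\},\Sigma,\Gamma',\bullet,S',\delta')$ with $\Gamma'=(Q\times\Gamma\times Q\times Q)\cup\{S'\}$, where $S'$ is a fresh start symbol. The intended invariant is that a reachable configuration of $P$ with state $q$ and stack $\alpha_1\sep\cdots\sep\alpha_m\sep$ is mirrored by the $P'$-configuration in which each $\alpha_j$ is rewritten into quadruples that (i) all share a common fourth component $\tau_j$, namely the state in which $\alpha_j$ empties, (ii) chain so that the return component of each symbol equals the entry component of whatever symbol becomes top once it is popped---both within a stack and, across a $\sep$, into the stack below---and (iii) have the very top symbol's entry component equal to the live state $q$.

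Next I would emit one $\delta'$-rule per normal-form rule, each reading only the top quadruple $[p,X,q,\tau]$ and reusing the original transitions out of $p$ on $X$. A pop $p,X\rest\xrightarrow{a}q,\rest$ becomes $\bullet,[p,X,q,\tau]\rest\xrightarrow{a}\bullet,\rest$ (it fires exactly when the promised return state matches the actual one). A same-stack push to $YZ$, with target state $r$, becomes $\bullet,[p,X,q,\tau]\rest\xrightarrow{\epsilon}\bullet,[r,Y,s,\tau][s,Z,q,\tau]\rest$ for each guessed $s\in Q$. A $\mathsf{PushAbove}$ becomes $\bullet,[p,X,q,\tau]\rest\xrightarrow{\epsilon}\bullet,[r,Y,s,s]\sep[s,Z,q,\tau]\rest$ for guessed $s$, the new one-element stack recording its own exit state $s$. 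Finally---the case the fourth component exists for---$\mathsf{PushBelow}$ becomes $\bullet,[p,X,q,\tau]\rest\xrightarrow{\epsilon}\bullet,[r,Y,q,\tau]\rest[\tau,Z,\tau,\tau]\sep$, where the inserted bottom stack's entry and exit states are both read off as $\tau$. Initialization is $\bullet,S'\rest\xrightarrow{\epsilon}\bullet,[q_0,S,q_f,q_f]\rest$ for each $q_f\in Q$, which is legal because acceptance is by empty stack, so any final state $q_f$ may be guessed for the bottom of the initial stack.

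The bulk of the work, and the main obstacle, is the correctness proof that $P'$ accepts $w$ iff $P$ does. I would establish it as a bisimulation between $P$-runs and $P'$-runs, noting that the encoding is a \emph{relation}, not a function, since the return and exit components are guesses about the future. For the direction $P\Rightarrow P'$, given an accepting run of $P$ I instantiate every guessed component by the state actually realized later in that run and check that the invariant is preserved step by step. For $P'\Rightarrow P$, I erase quadruples to their middle symbols, read off the threaded states, and verify that each single-state rule can fire only in correspondence with a genuine $\delta$-transition and that any guess ever violated forbids reaching the empty stack. The fiddly points to get right are the state threading across the automatic empty-stack pop $(q,\sep\alpha,w)\Rightarrow_P(q,\alpha,w)$, the preservation of the common fourth component under same-stack pushes, and the verification that $\mathsf{PushBelow}$'s assignment $[\tau,Z,\tau,\tau]$ is exactly what the invariant forces on the inserted stack.
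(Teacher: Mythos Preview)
Your overall plan—fold the finite control into the stack alphabet by annotating each symbol with a quadruple of states, reserving a fourth component to cope with $\mathsf{PushBelow}$—is the paper's approach. The gap is in what your fourth component records and hence in your $\mathsf{PushBelow}$ rule.

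You take $\tau$ to be the exit state of the containing stack and require every symbol of that stack to carry it. A $\mathsf{PushBelow}$ can rewrite only the top symbol, so the tail of the current stack keeps $\tau$ and the already-encoded stack underneath still has its top's entry committed to $\tau$; your rule $[p,X,q,\tau]\rest\to[r,Y,q,\tau]\rest[\tau,Z,\tau,\tau]\sep$ is then forced, and it makes the inserted $Z$-stack both enter and exit in~$\tau$. That is too restrictive. With $Q=\{0,1,2\}$ and transitions $0,S\rest\xrightarrow{\epsilon}0,A\rest B\sep$, then $0,A\rest\xrightarrow{a}1,\rest$, then $1,B\rest\xrightarrow{b}2,\rest$, the original accepts $ab$; but your $P'$ needs the single guess $q_f=\tau$ to equal both the $A$-stack's exit (state~$1$) and the $B$-stack's exit (state~$2$), and no instantiation works, so the $P\Rightarrow P'$ direction of your bisimulation fails. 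The paper does \emph{not} keep the fourth component constant within a stack: its $\mathsf{PushBelow}$ rule gives $Y$ a fresh fourth component~$r$ (with $Z$'s first component set to~$r$), and a separate third component threads the previous value downward so that deeper symbols recover the correct linkage when they later surface as top—precisely the degree of freedom your stack-wide~$\tau$ collapses.
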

\begin{proof}
  Create a new initial stack symbol $S'$.
  For each $X \in \Gamma$, create stack symbols $X_{qrst}$ for all $q, r, s, t \in Q$.
  A stack symbol $X_{qrst}$ on top of the top stack simulates being in state $q$.
  Then
  \[\begin{array}{r@{}lr@{}l}
  \multicolumn{2}{l}{\text{for each}} & \multicolumn{2}{l}{\text{create}} \\
    q, X\rest{} &\xrightarrow{a} q', \rest & X_{qq'rr} \rest{} &\xrightarrow{a} \rest \\
    q, X\rest{} &\xrightarrow{a} q', YZ\rest & X_{qstr} \rest{} &\xrightarrow{a} Y_{q'uvr} Z_{ustv}\rest \\
    q, X\rest{} &\xrightarrow{a} q', Y \sep Z\rest & X_{qtus} \rest{} &\xrightarrow{a} Y_{q'rrr} \sep Z_{rtus} \rest \\
    q, X\rest{} &\xrightarrow{a} q', Y\rest Z\sep & X_{qtus} \rest{} &\xrightarrow{a} Y_{q'tur} \rest Z_{rsss} \sep
  \end{array}\]
  Finally, create transitions
  \begin{align*}
    S' \rest &\xrightarrow{\epsilon} {S}_{q_0rrr} \rest
  \end{align*}
  The transitions maintain the following invariants:
  \begin{itemize}
  \item If $X_{qrst}$ is immediately above $Y_{q'r's't'}$, then $r=q'$ and $s=t'$.
  \item If $X_{qrst}$ is the top symbol of a stack immediately above $Y_{q'r's't'}$, then $t=q'$.
  \item If $X_{qrst}$ is the bottom symbol of a stack, then $r=t$.
  \end{itemize}
  When the pop transition pops a symbol $X_{qq'rr}$, if it is not at the bottom, the invariants ensure that the new top symbol is of the form $Y_{q'r's't'}$, simulating entering state $q'$.
  If $X_{qq'rr}$ is at the bottom, the invariants ensure that $q'=r$ and therefore that the new top symbol is of the form $Y_{q'r's't'}$, again simulating entering state~$q'$.
\end{proof}

\begin{figure}
  \tikzset{node distance={0.3cm and 0.9cm},fac/.append style={text depth=0}}
  \begin{align*}
    \begin{tikzpicture}[trim left=-3em,trim right=3em]
      \node[ext](v) {$\val$};
      \node[fac,below=of v](z){\lstinline{f_StkCons z}} edge (v);
      \node[ext,below=of z] {\lstinline{zs}} edge (z);
    \end{tikzpicture}
    &\longrightarrow
    \begin{tikzpicture}[trim left=-3em,trim right=3em]
      \node[ext](v) {$\val$};
      \node[fac,below=of v,label=right:{concat}](concat){} edge (v);
      \node[ext,below=of concat] {\lstinline{zs}} edge (concat);
      \node[above=of v,anchor=base]{\lstinline{Pop}};
    \end{tikzpicture}
    \longalt
    \begin{tikzpicture}[node distance=0.3cm,trim left=-3em,trim right=3em]
      \node[ext](v) {$\val$};
      \node[fac,below=of v,label=right:{concat}](concat){} edge (v);
      \node[below left=of concat] {\lstinline{a}} edge (concat);
      \node[ext,below right=of concat] {\lstinline{zs}} edge (concat);
      \node[above=of v,anchor=base]{\lstinline{Scan a}};
    \end{tikzpicture}
    \longalt
    \begin{tikzpicture}[trim left=-4em,trim right=4em]
      \node[ext](v) {$\val$};
      \node[fac,below=of v](x){\lstinline{f_StkCons x}} edge node[tent]{$\val$} (v);
      \node[var,below=of x](int2) {} edge node[tent]{\lstinline{zs}} (x);
      \node[fac,below=of int2](z){\lstinline{f_StkCons y}} edge node[tent]{$\val$} (int2);
      \node[ext,below=of z] {\lstinline{zs}} edge node[tent]{\lstinline{zs}} (z);
      \node[above=of v,anchor=base]{\lstinline{Push x y}};
    \end{tikzpicture}
    \\
    \\[2ex]
    &\longalt
    \begin{tikzpicture}
      \node[ext](v) {$\val$};
      \node[fac,below=of v,label=right:{concat}](concat) {} edge (v);
      \node[var,below left=of concat](int1) {} edge (concat);
      \node[fac,below=of int1](x){\lstinline{f_StkCons x}} edge node[tent]{$\val$} (int1);
      \node[var,below=of x](int2) {} edge node[tent]{\lstinline{zs}} (x);
      \node[fac,below=of int2](nil){\lstinline{f_StkNil}} edge node[tent]{$\val$} (int2);
      \node[var,below right=of concat](int3) {} edge (concat);
      \node[fac,below=of int3](z){\lstinline{f_StkCons y}} edge node[tent]{$\val$} (int3);
      \node[ext,below=of z] {\lstinline{zs}} edge node[tent]{\lstinline{zs}} (z);
      \node[above=of v,anchor=base]{\lstinline{PushAbove x y}};
    \end{tikzpicture}
    \longalt
    \begin{tikzpicture}
      \node[ext](v) {$\val$};
      \node[fac,below=of v,label=right:{concat}](concat) {} edge (v);
      \node[var,below left=of concat](int1) {} edge (concat);
      \node[fac,below=of int1](x){\lstinline{f_StkCons x}} edge node[tent]{$\val$} (int1);
      \node[ext,below=of x] {\lstinline{zs}} edge node[tent]{\lstinline{zs}} (x);
      \node[var,below right=of concat](int3) {} edge (concat);
      \node[fac,below=of int3](z){\lstinline{f_StkCons y}} edge node[tent]{$\val$} (int3);
      \node[var,below=of z](int2) {} edge node[tent]{\lstinline{zs}} (z);
      \node[fac,below=of int2](nil){\lstinline{f_StkNil}} edge node[tent]{$\val$} (int2);
      \node[above=of v,anchor=base]{\lstinline{PushBelow x y}};
    \end{tikzpicture}
    \\
    \\[2ex]
    \begin{tikzpicture}[trim left=-3em,trim right=3em]
      \node[ext](v) {$\val$};
      \node[fac,below=of v](z){\lstinline{f_StkNil}} edge (v);
    \end{tikzpicture}
    &\longrightarrow
    \begin{tikzpicture}[trim left=-3em,trim right=3em]
      \node[ext](v) {$\val$};
      \node[fac,below=of v,label=right:{concat}](concat) {} edge (v);
      \node[below=of concat] {$\epsilon$} edge (concat);
    \end{tikzpicture}
    \\
    \\[2ex]
    \begin{tikzpicture}[trim left=-3em,trim right=3em]
      \node[ext](v) {$\val$};
      \node[fac,below=of v](z){$S$} edge (v);
    \end{tikzpicture}
    &\longrightarrow
    \begin{tikzpicture}[trim left=-3em,trim right=3em]
      \node[ext](v) {$\val$};
      \node[fac,below=of v](x){\lstinline{f_StkCons Z}} edge node[tent]{$\val$} (v);
      \node[var,below=of x](int2) {} edge node[tent]{\lstinline{zs}} (x);
      \node[fac,below=of int2](z){\lstinline{f_StkNil}} edge node[tent]{$\val$} (int2);
    \end{tikzpicture}
  \end{align*}
  \caption{Sketch of FGG compiled from final program of \cref{sec:epda}.}
  \label{fig:epda_fgg}
\end{figure}

\subsection{Relationship to TAG}

Here we explain in more detail how the transformed EPDA program in \cref{sec:epda} is related to a tree-adjoining grammar (TAG)\@, and how inference on it amounts to TAG parsing.
When the program is converted to an FGG (\cref{sec:fggs}), and smaller rules are fused into larger rules, the result looks like \cref{fig:epda_fgg}.
Each possible \lstinline{Action} that \lstinline{transition z} might take is converted to an FGG rule with \lstinline{f_StkCons z} on the left.
The external nodes \extinline{$\val$} and \extinline{\lstinline{zs}} have type \lstinline{StackFolded},
which can be thought of as a pair of string positions, namely the starting position (usually called \lstinline{ws}) and ending position (usually called \lstinline{zs*}) of the substring that is scanned while the \lstinline{Stack} is consumed.
For clarity, we've taken the liberty of drawing some strings directly into the FGG and writing factors labeled ``concat'' in place of more complicated logic that concatenates one or two substrings.

When arranged in this way, this FGG looks just like a TAG\@.
A~TAG could be defined as an HRG~(\cref{def:hrg}) where all rule right-hand sides are ordered trees, which come in two kinds: \emph{initial} trees have one external node, which is the root node, and \emph{auxiliary} trees have exactly two external nodes, which are the root node and a node called the \emph{foot} node.
(TAGs with no auxiliary trees are called tree-substitution grammars and generate the same string languages as CFGs.)
In~\cref{fig:epda_fgg}, the \lstinline{f_StkCons} rules are the auxiliary trees, because \lstinline{f_StkCons z} has type \lstinline{StackFolded -> StackFolded}: the returned \extinline{$\val$} is the root node, and the argument \extinline{\lstinline{zs}} is the foot node.
On the other hand, the \lstinline{f_StkNil} and $S$ rules are the initial trees, because both \lstinline{f_StkNil} and \lstinline{f_StkCons Z f_StkNil} have type \lstinline{StackFolded}.

Any TAG can be converted to a normal form \citep{lang:1994} analogous to Chomsky normal form, in which every rule has at most two nonterminal symbols. In the two-nonterminal case, the two nonterminals can stand in three possible relationships: both dominating the foot node, one dominating the foot node and the other to its left, and one dominating the foot node and the other to its right. 
These possibilities correspond exactly to our FGG rules for \lstinline{Push}, \lstinline{PushAbove}, and \lstinline{PushBelow}.
There is also a rule that introduces a terminal symbol, which corresponds to our FGG rule for \lstinline{Scan}.

\section{Benchmark Programs}
\label{s:pcfg-gen}

The WebPPL benchmark results in \cref{fig:webppl} were generated using
programs like this:
\begin{verbatim}
var S = function() {
  if (flip(0.9))
    return ['A'];
  else
    return S().concat(S());
}
var model = function() {
  return S().length;
}
display('50000000 rejection');
Infer({ model, method: 'rejection', samples: 50000000 }).getDist();
\end{verbatim}

The Dice benchmark results in \cref{fig:dice} were generated using
programs like this:
\begin{verbatim}
fun A(i:int(3)) { if i < int(3,6) then i + int(3,1) else int(3,7) }
fun S0(i:int(3)) { int(3,7) }
fun S1(i:int(3)) { if flip 0.1 then S0(S0(i)) else A(i) }
fun S2(i:int(3)) { if flip 0.1 then S1(S1(i)) else A(i) }
fun S3(i:int(3)) { if flip 0.1 then S2(S2(i)) else A(i) }
fun S4(i:int(3)) { if flip 0.1 then S3(S3(i)) else A(i) }
fun S5(i:int(3)) { if flip 0.1 then S4(S4(i)) else A(i) }
fun S6(i:int(3)) { if flip 0.1 then S5(S5(i)) else A(i) }
S6(int(3,0)) == int(3,6)
\end{verbatim}

\end{document}